\setlist[enumerate,1]{label={(\alph*)}}
\newtheorem{theorem}{Theorem}[section]
\newtheorem{definition}[theorem]{Definition}
\newtheorem{lemma}[theorem]{Lemma}
\def\orcid#1{\kern .08em\href{https://orcid.org/#1}{\includegraphics[keepaspectratio,width=0.7em]{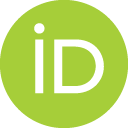}}} 
\newcommand{\idx}{\operatorname{idx}}
\newcommand{\setbuilder}[2]{
    \left\{ #1 \mathrel{\left|\vphantom{#1 #2}\right.} #2 \right\}
}
\DeclarePairedDelimiterX\Set[2]{\{}{\}}{#1\,\delimsize\vert\,#2}
\begin{document}
\preprint{APS/123-QED}

\title{Quantum Dueling: An Efficient Solution for Combinatorial Optimization}
\author{Letian Tang\orcid{0000-0001-8882-9187}}
\email{letiant@andrew.cmu.edu}
\affiliation{
Carnegie Mellon University, Pittsburgh, PA 15213, United States
}

\author{Haorui Wang\orcid{0000-0002-9237-1794}}
\affiliation{ University of Cambridge, Cambridge, CB2 1TN, United Kingdom }

\author{Zhengyang Li\orcid{0000-0003-1863-3423}}
\affiliation{
Tongji University, Shanghai, 200070, China}

\author{Haozhan Tang\orcid{0000-0001-6719-846X}}
\affiliation{
Carnegie Mellon University, Pittsburgh, PA 15213, United States
}
 
\author{Chi Zhang\orcid{0000-0002-8658-2931}}
\thanks{These authors contributed equally to this work.}
\affiliation{
Wuhan University, Wuhan, 430072, China 
}

\author{Shujin Li\orcid{0000-0002-5730-9098}}
\thanks{These authors contributed equally to this work.}
\affiliation{
Carnegie Mellon University, Pittsburgh, PA 15213, United States
}

\date{\today}

\begin{abstract}

In this paper, we present a new algorithm for generic combinatorial optimization, which we term quantum dueling. Traditionally, potential solutions to the given optimization problems were encoded in a ``register'' of qubits. Various techniques are used to increase the probability of finding the best solution upon measurement. Quantum dueling innovates by integrating an additional qubit register, effectively creating a ``dueling'' scenario where two sets of solutions compete. This dual-register setup allows for a dynamic amplification process: in each iteration, one register is designated as the 'opponent', against which the other register's more favorable solutions are enhanced through a controlled quantum search. This iterative process gradually steers the quantum state within both registers toward the optimal solution. With a quantitative contraction for the evolution of the state vector, classical simulation under a broad range of scenarios and hyper-parameter selection schemes shows that a quadratic speedup is achieved, which is further tested in more real-world situations. In addition, quantum dueling can be generalized to incorporate arbitrary quantum search techniques and as a quantum subroutine within a higher-level algorithm. Our work demonstrates that increasing the number of qubits allows the development of previously unthought-of algorithms, paving the way for advancement of efficient quantum algorithm design.


\end{abstract}

\maketitle

\section{Introduction}
\label{sec:introduction}

Since the realization that quantum systems can be used to establish a new computational infrastructure \cite{Manin_1980,Feynman_1982}, many quantum techniques have been developed for general optimization problems. For example, the Grover adaptive search (GAS) iteratively maintains a current best guess to find the optimal solution \cite{Durr_1996, Bulger_2003, Baritompa_2005}. Meanwhile, quantum annealing and quantum adiabatic evolution encode the problem as a Hamiltonian and find the lowest energy state \cite{Falco_1988, Hauke_2020,farhi_2000}. Lastly, variational techniques such as QAOA have recently gained significant attention, where we interchange two different Hamiltonians to find the desired state \cite{Farhi_2014, Hadfield_2019, Fuchs_2022}. 


In this article, we propose a new general strategy for quantum approximation. Our algorithm is inherently based on the Grover search, which amplifies the amplitude components of the state vector that satisfy a given condition \cite{Grover_1996, Brassard_1998, Brassard_2002, Bennett_1997, Boyer_1998, Zalka_1999}. In previous designs, the candidates for the problem are encoded as basis vectors for the Hilbert space. Our algorithm doubles the number of qubits used. This means that the basis vectors of our Hilbert space represent a pair of two potential candidates from the search space as opposed to one. As an example, after obtaining a well-optimized state with the procedures that would be introduced later, a measurement will give us a pair of candidates for the original problem instead of one. Then, we can use a classical comparison to find the better one as the final output of the program.

But how do we optimize the state under such a representation? We realize that a sequence of controlled Grover amplifiers can achieve such an objective. Let the two sets of qubits that are used to represent two candidates of the given problem be two registers. Each time, we can use one register as the control and apply Grover's gate to amplify states in the other register that are more optimal (than the control). Since we have two choices for the control register each time, different choices will direct the evolution of the state vector to different paths. After a sufficient number of gates, we will obtain a state with a significant probability of finding the best solution after measurement. In this paper, we will define an adjustable parameter to describe the sequence of our choices, making our algorithm variational. Since we have two candidates, one optimizing the other in each gate application, we term the algorithm ``quantum dueling.''

In this work, we examined the performance of quantum dueling via classical simulation and certain quantitative analysis. We found that compared with the classical brute-force method, quantum dueling bolsters a quadratic speed-up under optimal parameters. As a fully quantized algorithm, we believe that it would be useful as a quantum subroutine for generic optimization. It is worth noting that the algorithm does show some limitations. Most notably, a good choice of parameter is required when the search space is significantly large ($10^6$), and we would like to find the global optimal with high probability; without a well-considered parameter, the state will stop being further boosted once some threshold is reached. However, we do believe that if we can find an approximate solution to the evolution of the state vector during the algorithm, there should be ways to determine the optimal parameter rigorously. Nevertheless, the main purpose of our research is less about the algorithm itself, but to demonstrate that spending extra qubits and enlarging the search space would provide more flexibility in quantum algorithm design, sometimes leading to previously unthought-of algorithms. 

The structure of the paper is organized as follows. Initially, Sec.~\ref{sec: algorithm} delineates the combinatorial optimization problem central to this study and introduces the fundamental construction of quantum dueling. Subsequent to this foundation, Sec.~\ref{sec:general_understanding} presents the recursive formula for state evolution, accompanied by an in-depth examination of solution distributions under the naive parameter choice of $\alpha_{i} = \beta_{i} = 1$. Sec.~\ref{sec:cluster_representation} provides a more computationally efficient version of the algorithm, while Sec.~\ref{sec:hamiltonian_version} extends the discussion to encompass the entire algorithm within a Hamiltonian framework, elucidating its connections to established quantum algorithms like Grover's search and QAOA (Quantum Approximate Optimization Algorithm). To address the nuances of parameter selection in quantum dueling, Secs.~\ref{sec:parameter_scheme_1},\ref{sec:parameter_scheme_2} dissect the algorithm's performance across a variety of solution distributions with alternate parameter schemes. For certain unique solution distributions, Sec.~\ref{sec:parameter_heuristics} employs a classical greedy approach to obtain locally optimal parameters. Building upon these foundational elements, Sec.~\ref{sec:application} explores the practical applications of quantum dueling. In Sec.~\ref{sec:discussion}, we broaden the perspective to consider the mechanisms of dueling, its extended variants, and its potential as a subroutine within larger quantum or hybrid algorithmic structures. The paper culminates in Sec.~\ref{sec:conclusion} with a comprehensive conclusion that encapsulates the insights and implications derived from our exploration of quantum dueling.

\section{Algorithm}
\label{sec: algorithm}

The general combinatorial optimization problem can be formulated as follows. Consider a search space $S$ of size $N$, without loss of generality, we can assume that $N = 2^n$, where $n\in \mathbb{N}$. We define a measure function $v: S \to \mathbb{R}$, and a function $f: S \to \{0,1\}$ that labels solutions with 1 and others 0. \footnote{The need for $f$ is justified in Sec.~\ref{sec:general_understanding}.} Let $M$ denote the total number of solutions in $S$. We wish to find a solution $x \in S$ such that $f(x) = 1$ and $v(x)$ is minimized. In other words, $x$ satisfies:
\begin{equation}
\label{eq:prob_def}
    v\left( x \right) = \min \Set{v\left( y \right)}{y \in S \wedge f\left( y \right) = 1}
\end{equation}

Such a problem can be seen as an extension of the search problem, where we are looking for a solution $x$ with $f(x) = 1$, without a measure function $v$. For search, Grover has shown that there is a quantum algorithm that performs quadratically better than the classical counterpart. In his algorithm, verifying whether a candidate is a solution is achieved via an oracle gate, which is essentially a black box that can be constructed based on the problem itself. In this paper, we will show that such quantum advantage can be extended to the combinatorial optimization problem via a fully quantized approach.

Algorithm~\ref{alg:classical_optimizer} shows the classical algorithm for combinatorial optimization. At any stage after initialization, we store a current best guess and iterate through the sample space (or randomly select an element). We repeatedly conduct this procedure to find a better solution. The complexity of the algorithm is $O(N) = O(2^{n})$. 
\begin{algorithm}
        \caption{Classical Brute-Force Optimizer}\label{alg:classical_optimizer}
        \SetKw{False}{false} 
        \SetKw{True}{true}
        \KwIn{Measure function $v$, solution indicator $f$ in the form of an oracle}
        \KwOut{An element $x$ in the search space satisfying $f(x)=1$ and $v(x)$ (approximately) minimized}
        Initialize some $\mathsf{best}$ to any element from the search space $S$\;
        \For{all other elements $x \in S$}{
            Use an oracle to compare $x$ and $\mathsf{best}$\;
            \lIf{$x$ is better}{$\textsf{best} \leftarrow x$}
        }
\end{algorithm}

Amplitude amplification has offered us a promising way to approach quantum combinatorial optimization problems with the potential of maintaining the quantum advantage. Brassard \cite{Brassard_2002} formulated the general amplitude amplification algorithm using unitaries, including the oracle and the diffusion operator, to solve the original search problem starting from an arbitrary initial state vector.

Despite the difference in problem formulation, we can still use the general idea of amplitude amplification by iteratively applying the oracle $O$ and the diffusion operator $D$. But the major problem here is to encode the measure function $v(x)$ into our algorithm design. One possible solution, as mentioned in \cite{Mat_2023}, is to encode the function $v(x)$ in the problem Hamiltonian, $\hat{H}_{p} = \sum\limits_{x} v(x) \ket{x}\bra{x}$ in a Hamiltonian-based approach, letting the system gradually progress to the global minima of $v(x)$ via a QAOA-like quantum gradient descent. 

In this paper, we consider a different method. Rather than using the Hamiltonian to represent the function, we encode the clause of comparison $v(x) < v(y)$ into the oracle, which is then used to build a new algorithm in a circuit-based approach. However, due to the ``No Cloning Principle", it is impossible to achieve comparison via the quantum oracle with only one $n$-qubit register, using $B = \Set{\ket{x}}{x\in S}$ as a basis for a Hilbert space $\mathscr{H}_S$. 

To address this problem, we employ two registers with $n$-qubtis each, based on the idea that we need at least two copies of the same information to achieve comparison. In other words, the Hilbert space of our concern is $\mathscr{H} = \mathscr{H}_S \otimes \mathscr{H}_S$. This means that the basis is now $B = \Set{\ket{kl}}{k,l\in S}$, $\ket{kl} = \ket{k} \otimes \ket{l}$. The index $k$ denotes elements represented by the first register and $l$ elements from the second register. Thus, the state vector can be decomposed as the following:
\begin{equation}
    \label{eq:expand_full}
    \left| \psi  \right\rangle = \sum\limits_{k = 1}^{N}\sum\limits_{l = 1}^{N} {\psi _{kl}}\ket{kl} \\
\end{equation}
       
A good interpretation is to put $\psi_{kl}$ in a matrix format, as in Fig.~\ref{fig:blank_grid}.

\begin{figure}
    \centering
    \includegraphics[width=\columnwidth]{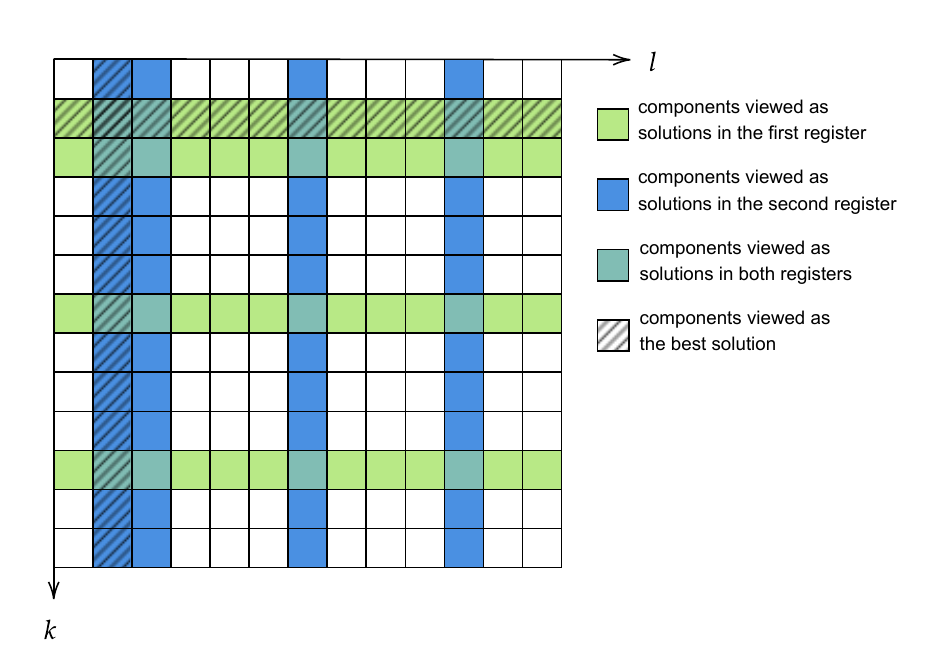}
    \caption{Matrix representation of the components $\psi_{kl}$ in the augmented Hilbert Space $\mathscr{H} = \mathscr{H}_S \otimes \mathscr{H}_S$. The example features the problem with $N=13$, $M=4$, $f(x)=1$ when $x=2,3,7,11$ and $0$ otherwise, where $v(x)=x$ for all $x\in S$. Indices $k$ and $l$ stand for the labels for elements in the first and the second register, respectively. Each cell in the grid can be filled with a number that represents the component $\psi_{kl}$ as defined in Eq.~(\ref{eq:expand_full}). Cells such that either $k$ or $l$ is a solution are highlighted. In this problem, the best solution is $x=2$. Cells at which either $k=2$ or $l=2$ are shaded. When the state vector is a basis vector (or a linear combination of basis vectors) corresponding to the cells in the shaded region, the best solution will be obtained upon measurement.}  
    \label{fig:blank_grid}
\end{figure}

We term our new algorithm ``quantum dueling,'' as we now use two registers to ``compete'' against each other throughout the quantum state evolution in the augmented Hilbert Space $\mathscr{H} = \mathscr{H}_{S} \otimes \mathscr{H}_{S}$. 


In quantum dueling, a measurement of both registers independently at the end of the quantum state evolution is needed and thus will yield two potential solutions. We need to evaluate the $f$ and $v$ values for both candidates, taking the better result as the final output. For clarity, we say that an element $x\in S$ is better than $y\in S$ if and only if $f(x) = 1 \wedge (f(y) = 0 \vee  v(x) < v(y))$. This paper uses logical operators $\wedge$, $\vee$ to represent and, or operations. Meanwhile, we will use $\&$, $|$ operators to represent and, or operations in $\{0,1\}$ space. If neither candidate is a solution to the original problem or a better approximation ratio is desired, we can repeat the algorithm.

On the augmented Hilbert space $\mathscr{H} = \mathscr{H}_{S} \otimes \mathscr{H}_{S}$, we define the quantum oracle as follows: 
\begin{equation}
    \label{eq:oracle_definition_1}
    \mathcal{O}\ket{xy}= (-1)^{f(x) \& [v(x)<v(y)]}\ket{xy} = o(x,y)\ket{xy}
\end{equation}

The result of the oracle applied to an arbitrary state is the linear combination of the oracle applied to its basis decomposition by linearity. Therefore, 
\begin{equation}
    \label{eq:oracle_definition_2}
    \mathcal{O}= \sum_{x=1}^{N}\sum_{y=1}^{N}{o(x,y)\ket{xy}\bra{xy}}
\end{equation}

The oracle operates in two registers and outputs the ``better'' result.  For states $x,y\in S$, our quantum oracle will multiply a basis state by a factor of $-1$ if and only if $f(x) = 1$ and $v(x) < v(y)$, which is our criterion for ``betterness''. Thus, it can distinguish the better states from the entire search space, similar to the oracle in traditional amplitude amplification.

Here, we use the square bracket to denote conversion from a proposition to $\{0,1\}$ space. For example, the clause $[v(x)<v(y)]$ returns 1 if $v(x)<v(y)$ is satisfied, and 0 otherwise. The same definition also applies to clause $f(x)$. For simplicity, in the future, the term $(-1)^{f(x) \& [v(x)<v(y)]}$ will become $o(x,y)$.

Usually, we can achieve the oracle by evaluating the $f$ and $v$ values of the two input elements and making a comparison. All these steps can be achieved using quantum arithmetic techniques with the help of additional qubits in polynomial time. 

Note that in Eqs.~(\ref{eq:oracle_definition_1}) and (\ref{eq:oracle_definition_2}), we use $x$ and $y$ as the free indices rather than symbols $k$ and $l$. This is because $x$ and $y$ can each represent basis states from either the first or the second register. Substituting $x$, $y$ with $k$, $l$ respectively gives the oracle that compares the first register against the second, and substituting $x$, $y$ with $l$, $k$ gives the oracle that compares the second against the first. That is we have oracles $\mathcal{O}_{1\leftarrow 2}$ and $\mathcal{O}_{2\leftarrow 1}$ satisfying: 
\begin{equation}
    \label{eq:oracles_definition_1}
    \left\{\begin{gathered}
        \mathcal{O}_{1\leftarrow 2}\ket{kl} = o(k,l)\ket{kl} \hfill \\
        \mathcal{O}_{2\leftarrow 1}\ket{kl} = o(l,k)\ket{kl} \hfill \\
    \end{gathered}\right.
\end{equation}

Equivalently,
\begin{equation}
    \label{eq:oracles_definition_2}
    \left\{\begin{gathered}
        \mathcal{O}_{1\leftarrow 2} = \sum_{k=1}^{N}\sum_{l=1}^{N} o(k,l) \ket{kl}\bra{kl} \hfill \\
        \mathcal{O}_{2\leftarrow 1} = \sum_{l=1}^{N}\sum_{k=1}^{N} o(l,k) \ket{kl}\bra{kl} \hfill \\
    \end{gathered}\right.
\end{equation}

Similarly, we define the diffusion operators $\mathcal{D}_{1\leftarrow 2}$ and $\mathcal{D}_{2\leftarrow 1}$ in the augmented Hilbert space $\mathscr{H} = \mathscr{H}_{S} \otimes \mathscr{H}_{S}$ as follows: 
\begin{equation}
    \label{eq:diffusion_definition}
    \begin{cases}
        \mathcal{D}_{1\leftarrow 2} = D \otimes I \\
        \mathcal{D}_{2\leftarrow 1} = I \otimes D \\
    \end{cases}
\end{equation}
where $D = H^{\otimes n}(2\ket{0}\bra{0}-I)H^{\otimes n} = (2\ket{m}\bra{m}-I)$, $H$ is the Hardmard gate, $\ket{m} = \frac{1}{\sqrt{N}}\sum\limits_{x=1}^{N}|x\rangle$ is the equal superposition of all the basis states.

To understand how the oracles and diffusion operators on the augmented Hilbert space impact some state vector $\ket{\psi}$ in the augmented Hilbert space. we consider it's decomposition:
\begin{equation}
    \label{eq:redefinition_of_state_Vector}
    \begin{aligned}
        \ket{\psi} &= \sum\limits_{l=1}^{N} \ket{\psi_{l}} \ket{l} \\
                   &= \sum\limits_{k=1}^{N} \ket{k} \ket{\psi_{k}} \\
    \end{aligned}
\end{equation}

$\ket{\psi_{l}}$, $\ket{\psi_{k}}$ can be defined as: 
\begin{equation}
    \label{eq:psik,psil_def}
    \begin{cases}
        \ket{\psi_{l}} = \sum\limits_{k = 1}^{N} \psi_{kl} \ket{k} \\
        \ket{\psi_{k}} = \sum\limits_{l = 1}^{N} \psi_{kl} \ket{l} \\
    \end{cases}
\end{equation}
where $\psi_{kl}$ are the state vector's components. An interesting property of $\ket{\psi_{l}}$, $\ket{\psi_{k}}$ is that their norm squared represent the probability of obtaining the corresponding results $l$ and $k$ when a measurement is conducted for either the first or the second register.
\begin{equation}
    \label{eq:inner_product_psi}
    \begin{cases}
        \langle \psi_{l}  | \psi_{l} \rangle = \sum\limits_{k=1}^{N} \left| \psi_{kl}\right|^{2} = P_{l}  \\
        \langle \psi_{k} | \psi_{k} \rangle = \sum\limits_{l=1}^{N} \left| \psi_{kl}\right|^{2} = P_{k}  \\
    \end{cases}
\end{equation}

We then define a sequence of oracles $\{O_y\}$ that act only in the smaller Hilbert space $\mathscr{H}_S$. We can interpret the register corresponding to the notation $y$ as the control register that directs the oracle to act on the register corresponding to the notation $x$ in a specific way.

Formally speaking, we have:
\begin{equation}
    \label{eq:sub_oracle_definition}
    \begin{gathered}
        O_y \ket{x} = o(x,y) \ket{x} \hfill \\
        \Longrightarrow O_y = \sum_{x=1}^{N}o(x,y)\ket{x}\bra{x} \hfill \\
    \end{gathered}
\end{equation}

This allows us to define the corresponding Grover gates $\left\{G_y\right\}$ that satisfy $G_y = D O_y$ for all $1\leqslant x \leqslant N$. Similarly to what was discussed previously, the notation $x$ and $y$ can be replaced with $k$ or $l$ according to implementation.

Thus, it is possible to re-express the oracles in the following manner: 
\begin{equation}
    \label{eq:redefinition_of_oracles}
    \begin{cases}
        \mathcal{O}_{1\leftarrow 2} = \sum\limits_{l=1}^{N} O_{l} \otimes \ket{l}\bra{l} \\
        \mathcal{O}_{2\leftarrow 1} = \sum\limits_{k=1}^{N} \ket{k}\bra{k} \otimes O_{k} \\
    \end{cases}
\end{equation}

The oracle $\mathcal{O}_{1\leftarrow 2}$ uses the second register as a control and acts on the first register, while the oracle $\mathcal{O}_{2\leftarrow 1}$ uses the first as a control and operates on the second. The operation of either of the two oracles entangles the two registers. The generation of entanglement between the registers may serve as a major feature of quantum dueling.

The result of the oracles on an arbitrary state vector $\ket{\psi}$ will be:
\begin{equation}
    \label{eq:oracle_operation_in_expansion}
    \begin{cases}
        \mathcal{O}_{1\leftarrow 2}\ket{\psi} = \sum\limits_{l=1}^{N}\left(O_l\ket{\psi_l}\right)\ket{l} \\
        \mathcal{O}_{2\leftarrow 1}\ket{\psi} = \sum\limits_{k=1}^{N}\ket{k}\left(O_k\ket{\psi_k}\right) \\
    \end{cases}
\end{equation}

When viewed in this way, it is intuitive to perform a controlled Grover search to the state vector. In one iteration, we have
\begin{equation}
    \label{eq:dueling_operation_in_expansion}
    \begin{cases}
        \mathcal{G}_{1\leftarrow 2}\ket{\psi} = (D\otimes I)\mathcal{O}_{1\leftarrow 2}\ket{\psi} = \sum\limits_{l=1}^{N}\left(D O_l\ket{\psi_l}\right)\ket{l} \\
        \mathcal{G}_{2\leftarrow 1}\ket{\psi} = (I\otimes D)\mathcal{O}_{2\leftarrow 1}\ket{\psi} = \sum\limits_{k=1}^{N}\ket{k}\left(D O_k\ket{\psi_k}\right) \\
    \end{cases}
\end{equation}
We define these gates as dueling operators. They can also be written as: 
\begin{equation}
    \label{eq:redefinition_of_dueling}
    \begin{cases}
        \mathcal{G}_{1\leftarrow 2} = \sum\limits_{l=1}^{N} DO_{l} \otimes \ket{l}\bra{l} \\
        \mathcal{G}_{2\leftarrow 1} = \sum\limits_{k=1}^{N} \ket{k}\bra{k} \otimes DO_{k} \\
    \end{cases}
\end{equation}

In matrix visualization, the result of applying $\mathcal{G}_{1 \leftarrow 2}$ is that for each column corresponding to $l$, components such that $k$ is ``better''  than $l$ will get boosted from components that are not better. Fig.~\ref{fig:dueling_G1} illustrates how this is done in one column starting from one cell at which $k$ is not ``better'' than $l$. 

\begin{figure}
    \centering
    \includegraphics[width=\columnwidth]{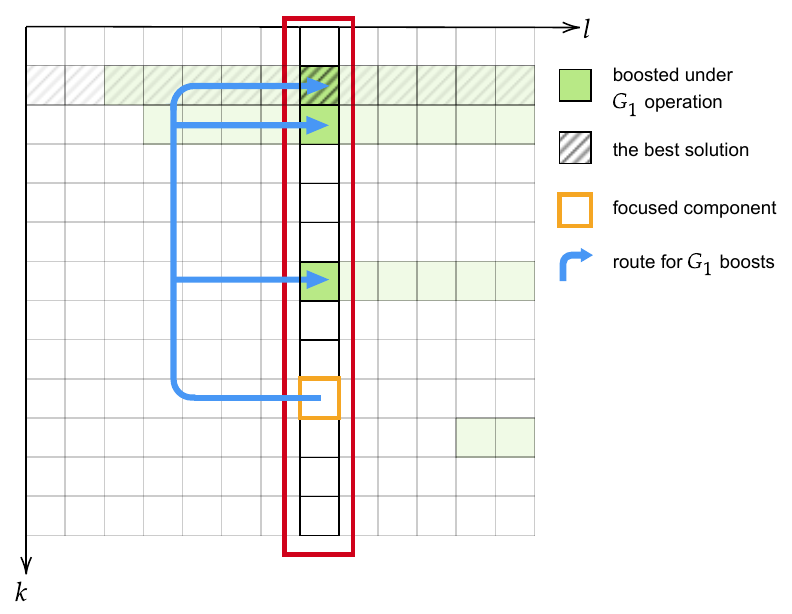}
    \caption{An illustration of the components of the state in the augmented Hilbert Space $\mathscr{H} = \mathscr{H}_{S}\otimes \mathscr{H}_{S}$ after the application of the dueling operator $\mathcal{G}_{1\leftarrow 2}$. The problem setup is the same as Fig.~\ref{fig:blank_grid}. The components boosted under $\mathcal{G}_{1\leftarrow 2}$ operation are marked green. For a specified column with a given $l$, the amplitude $\psi_{kl}$ of an arbitrary non-highlighted component will be distributed equally to the highlighted component within the same column. To better illustrate our idea, we pay attention to the focused component and its corresponding column, observing how its amplitude is redistributed under the operation of $\mathcal{G}_{1\leftarrow 2}$ to other components in the same column. This example features the column $l=8$ and the focused component $k=10$, $l=8$. Note that the best solution $k=2$ is more likely to receive amplitude from other cells under $\mathcal{G}
_{1}$ operation, since the row $k=2$ has the most cells marked green. }
    \label{fig:dueling_G1}
\end{figure}

Similarly, an illustration of the amplifying result of $\mathcal{G}_{2\leftarrow 1}$ can be found in Fig.~\ref{fig:dueling_G2}.

\begin{figure}
    \centering
    \includegraphics[width=\columnwidth]{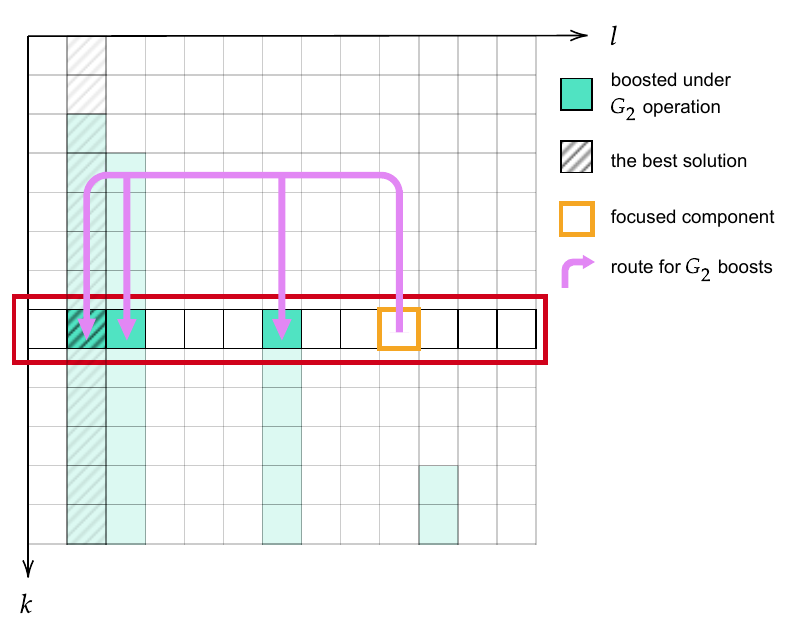}
    \caption{An illustration of the components of the state in the augmented Hilbert Space $\mathscr{H} = \mathscr{H}_{S}\otimes \mathscr{H}_{S}$ after the application of the dueling operator $\mathcal{G}_{2\leftarrow 1}$. The problem setup is the same as Fig.~\ref{fig:blank_grid}. The components whose amplitudes are increased under $\mathcal{G}_{2\leftarrow 1}$ operation are marked light blue. For a specified row with given $k$, the amplitude $\psi_{kl}$ of an arbitrary non-highlighted component will be redistributed equally to the highlighted component within the same row. This example features the row $k=8$. The focused component illustrated is $k=8$, $l=10$. Note that the best solution $l=2$ is more likely to receive amplitude from other cells under $\mathcal{G}
    _{2}$ operation, because the column $l=2$ has the most cells marked light blue. The alternating sequence of $\mathcal{G}_{1\leftarrow 2}$ and $\mathcal{G}_{2\leftarrow 1}$ acting on the state vector increases the amplitude of components corresponding to the best solution under either operation significantly. }
    \label{fig:dueling_G2}
\end{figure}

Dueling operators $\mathcal{G}_{1\leftarrow 2}$, $\mathcal{G}_{2\leftarrow 1}$ give us two symmetrical ways to boost the probability amplitude corresponding to the best solutions. If we apply gates $\mathcal{G}_{1\leftarrow 2}$ and $\mathcal{G}_{2\leftarrow 1}$ alternatively, we will slowly evolve the state towards the desired direction. We use two arrays of integers $\{\alpha_i\}$ and $\{\beta_i\}$ to quantify the exact sequences of gates applied to the state. In the $i$-th iteration, we apply $\mathcal{G}_{1\leftarrow 2}$ by $\alpha_i$ times and $\mathcal{G}_{2\leftarrow 1}$ by $\beta_i$ times. The resulting Algorithm~\ref{alg:dueling} is the quantum dueling algorithm we proposed in this paper. A circuit diagram can be found in Fig.~\ref{fig:Dueling}. 

\begin{figure}
    \centering
    \includegraphics[width=\columnwidth]{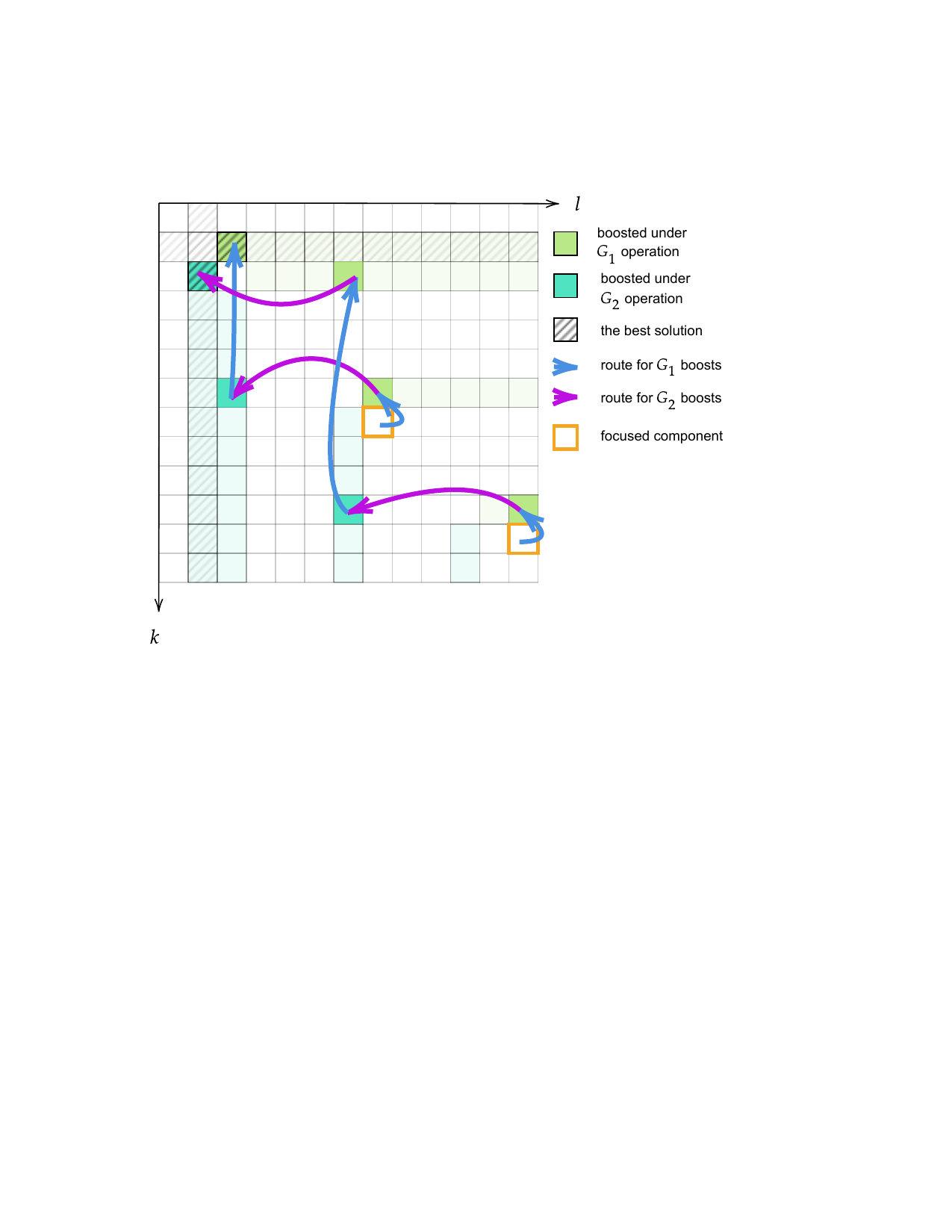}
    \caption{Some possible boosting routes for the given target components after the iterative application of dueling operators $\mathcal{G}_{1\leftarrow 2}$ and $\mathcal{G}_{2\leftarrow 1}$. The problem setup is the same as Fig.~\ref{fig:blank_grid}. The routes for amplitude boost under $\mathcal{G}_{1\leftarrow 2}$ operations are represented by blue arrows, while those representing amplitude boost under $\mathcal{G}_{2\leftarrow 1}$ operations are represented by violet arrows. An arbitrary non-highlighted component's amplitude can be boosted to the components signifying the best solution after a few iterative operations, even in the worst case. When $\{\alpha_{i}\}$ and $\{\beta_{i}\}$ take values larger than 1, the length of the boosting routes is generally shortened so that their amplitudes can be boosted to the components corresponding to the best solutions quickly. }
    \label{fig:dueling_arrow}
\end{figure}

\begin{algorithm}
        \caption{Quantum Dueling}\label{alg:dueling}
        \SetKw{False}{false} 
        \SetKw{True}{true}
        \KwIn{Measure function $v$ represented in quantum arithmetic; two integer sequences of at least length $p$: \{$\alpha_i$\} and \{$\beta_i$\}}
        \KwOut{An element $x$ in the search space satisfying $f(x)=1$ and $v(x)$ (approximately) minimized}
        $\ket{\psi}\leftarrow H^{\otimes2n}\ket{0}$\tcp*{initialize state vector}
        \For{$i \leftarrow 1$ \KwTo $p$}{
        $\ket{\psi}\leftarrow\mathcal{G}_{1\leftarrow 2}^{\alpha_i} \ket{\psi}$\tcp*{update the first register}
          $\ket{\psi}\leftarrow\mathcal{G}_{2\leftarrow 1}^{\beta_i} \ket{\psi}$\tcp*{update the second register}
        }
        Measure both registers. Let the result in the first register be $x_1$ and the result of the second register be $x_2$. Output the better result.
\end{algorithm}

\begin{figure}
    \centering
    \includegraphics[width=\columnwidth]{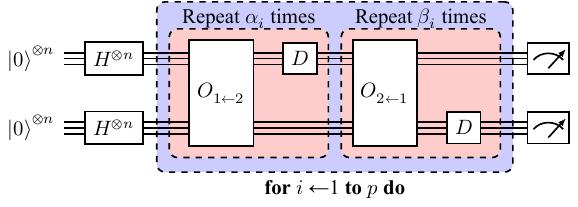}
    \caption{Quantum dueling circuit visualized.}
    \label{fig:Dueling}
\end{figure}

The route for a given component boosted under the iterative application of dueling operators is shown in Fig.~\ref{fig:dueling_arrow}, where $\alpha_i = \beta_i = 1$. The intuitive understanding is that each time we use the first register to boost states in the second register and then use the second register to boost states in the first register. In combination, the state will evolve in the desirable direction. 

It is important to note here that at any stage after initialization, the general state vector $\ket{\psi}$ remains in a pure state, while each register is not necessarily pure, as the two registers become increasingly entangled. Therefore, we need to use density operators to represent the mixed state of each individual register:
\begin{equation}
    \label{eq:density_operator_register}
    \begin{cases}
        \rho_{1} = \sum\limits_{l=1}^{N} \ket{\psi_{l}}\bra{\psi_{l}} \\
        \rho_{2} = \sum\limits_{k=1}^{N} \ket{\psi_{k}}\bra{\psi_{k}} \\
    \end{cases}
\end{equation}







\section{Analysis} 

\label{sec:analysis}

In this section, we will dive into several mathematical properties of quantum dueling, and conduct simulations of dueling over various solution distributions based on classical hardware. 

To provide convenience for both mathematical analysis and computer simulation, we may want to figure out the recursive formula of quantum dueling on each component of the augmented Hilbert space $\mathscr{H}$.

In Eq.~(\ref{eq:expand_full}), we use indices $kl$ to represent the components of a vector in $\mathscr{H}$. For an operator such as $\mathcal{G}_{1\leftarrow 2}$, its components should thus be denoted via four index symbols in the format of $klk'l'$. Under that tradition, the evolution of the state vector after applying gates $\mathcal{G}_{1\leftarrow 2}$ and $\mathcal{G}_{2\leftarrow 1}$ can be put as
\begin{equation}
    \label{eq:index_decomposition}
    \left\{\begin{gathered}
         \left(\mathcal{G}_{1\leftarrow 2} |\psi\rangle\right)_{kl} = \sum_{k'=1}^N\sum_{l'=1}^{N} \mathcal{G}_{1\leftarrow 2,klk'l'} \psi_{k'l'} \\
        \left(\mathcal{G}_{2\leftarrow 1} |\psi\rangle\right)_{kl} = \sum_{k'=1}^N\sum_{l'=1}^{N} \mathcal{G}_{2\leftarrow 1,klk'l'} \psi_{k'l'}  \\
    \end{gathered}\right.
\end{equation}
where by Eqs.~(\ref{eq:redefinition_of_dueling}) and (\ref{eq:index_decomposition}), components of $\mathcal{G}_{1\leftarrow 2}$ and $\mathcal{G}_{2\leftarrow 1}$ can be written as 
\begin{equation}
    \label{eq:original_dueling}
    \left\{\begin{gathered}
        \mathcal{G}_{1\leftarrow 2,klk'l'} = \delta_{ll'}\left(\frac{2}{N}-\delta_{kk'}\right)o(k',l') \\
        \mathcal{G}_{2\leftarrow 1,klk'l'} = \delta_{kk'}\left(\frac{2}{N}-\delta_{ll'}\right)o(l',k')
    \end{gathered}\right.
\end{equation}
where $\delta$ denotes the Kronecker delta, and function $o$ is defined previous satisfying $(-1)^{f(x)\&[v(x)<v(y)]} = o(x,y)$ for all $x,y\in S$.

\subsection{General Understanding}
\label{sec:general_understanding}

Eqs.~(\ref{eq:index_decomposition}) and (\ref{eq:original_dueling}) provide a mechanism for a simulation algorithm on the basis of a classical computer that will build the groundwork of the analysis of quantum dueling. Using the two formulae directly under the matrix multiplication schema gives us an algorithm with time complexity of $O(N^4)$ per Grover iteration via classical simulation.

Such an algorithm can be easily optimized. Combining and simplifying Eqs.~(\ref{eq:index_decomposition}) and (\ref{eq:original_dueling}) guarantees a simulation strategy of $O(N^3)$ per Grover iteration, as seen in the following equation:
\begin{equation}
    \label{eq:dueling_speedup_to_O(N^3)}
    \left\{\begin{gathered}
        \left(\mathcal{G}_{1 \leftarrow 2}|\psi\rangle\right)_{kl} = \sum_{k'=1}^{N}\left(\frac{2}{N}-\delta_{kk'}\right)o(k',l)\psi_{k'l} \\
        \left(\mathcal{G}_{2 \leftarrow 1}|\psi\rangle\right)_{kl} = \sum_{l'=1}^{N}\left(\frac{2}{N}-\delta_{ll'}\right)o(l',k)\psi_{kl'} 
    \end{gathered}\right.
\end{equation}

Fig.~\ref{fig:initial_simulation} documents the result of this simulation for a near-uniform solution distribution. An intuitively chosen parameter, $\alpha_i = \beta_i = 1$, was chosen as a starting point of analysis. \footnote{It must be noted that the algorithm we used employed slightly different technique due to historical developments of this research project. We assumed $\alpha_i = \beta_i = 1$ and combined the alternating two steps as a single gate. In this scheme, the brute force method has a $O(N^6)$ complexity, which can be subsequently simplified to $O(N^4)$.}

\begin{figure*}
    \centering
    \begin{subfigure}[b]{0.48\textwidth}
        \includegraphics[trim={1.5cm 0 3.7cm 1cm},width=\textwidth,clip]{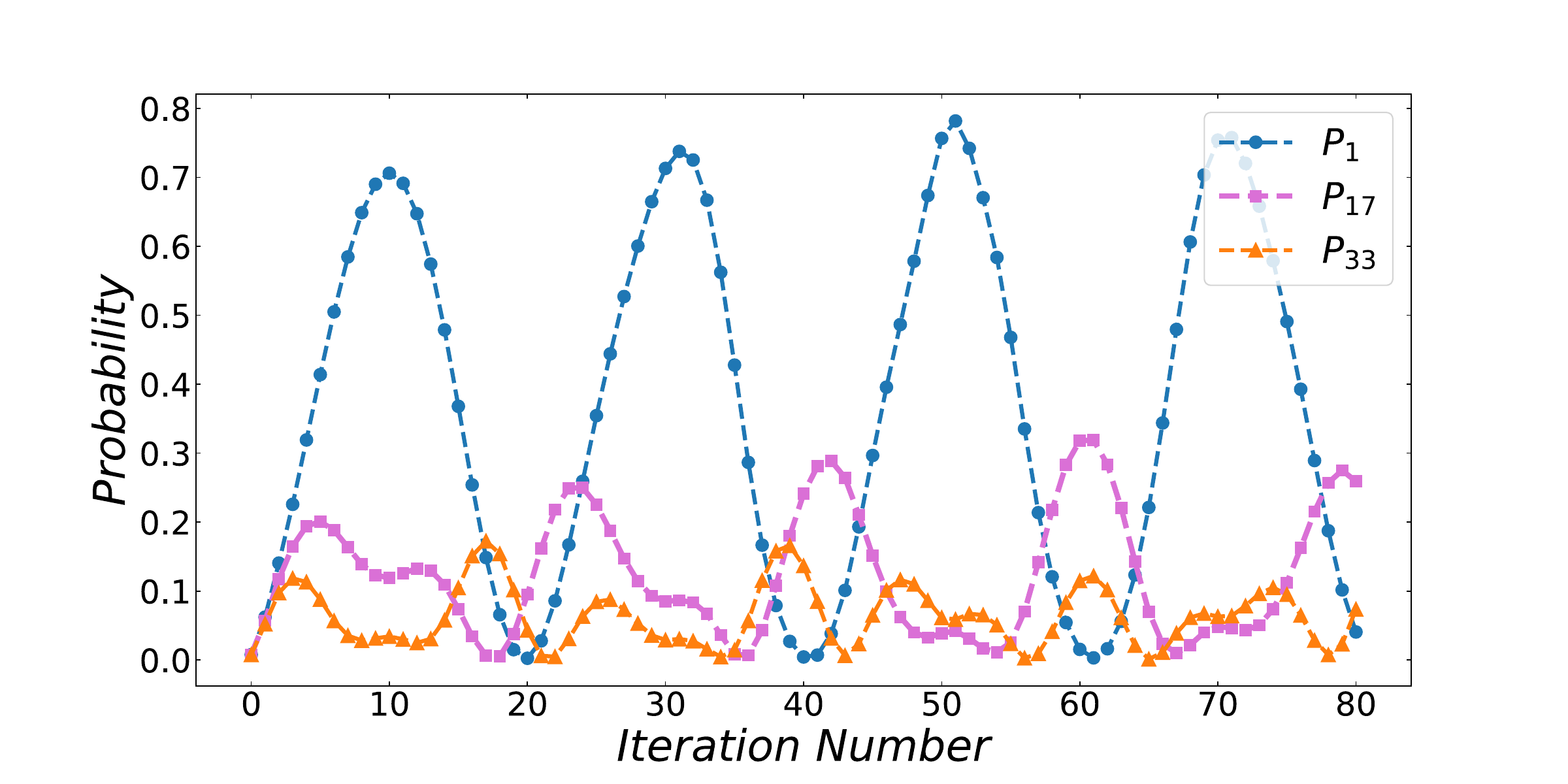}
        \caption{}
        \label{fig:initial_simulation_combined_evolution}
    \end{subfigure}
    \begin{subfigure}[b]{0.48\textwidth}
        \includegraphics[trim={1.5cm 0 3.7cm 1cm},width=\textwidth,clip]{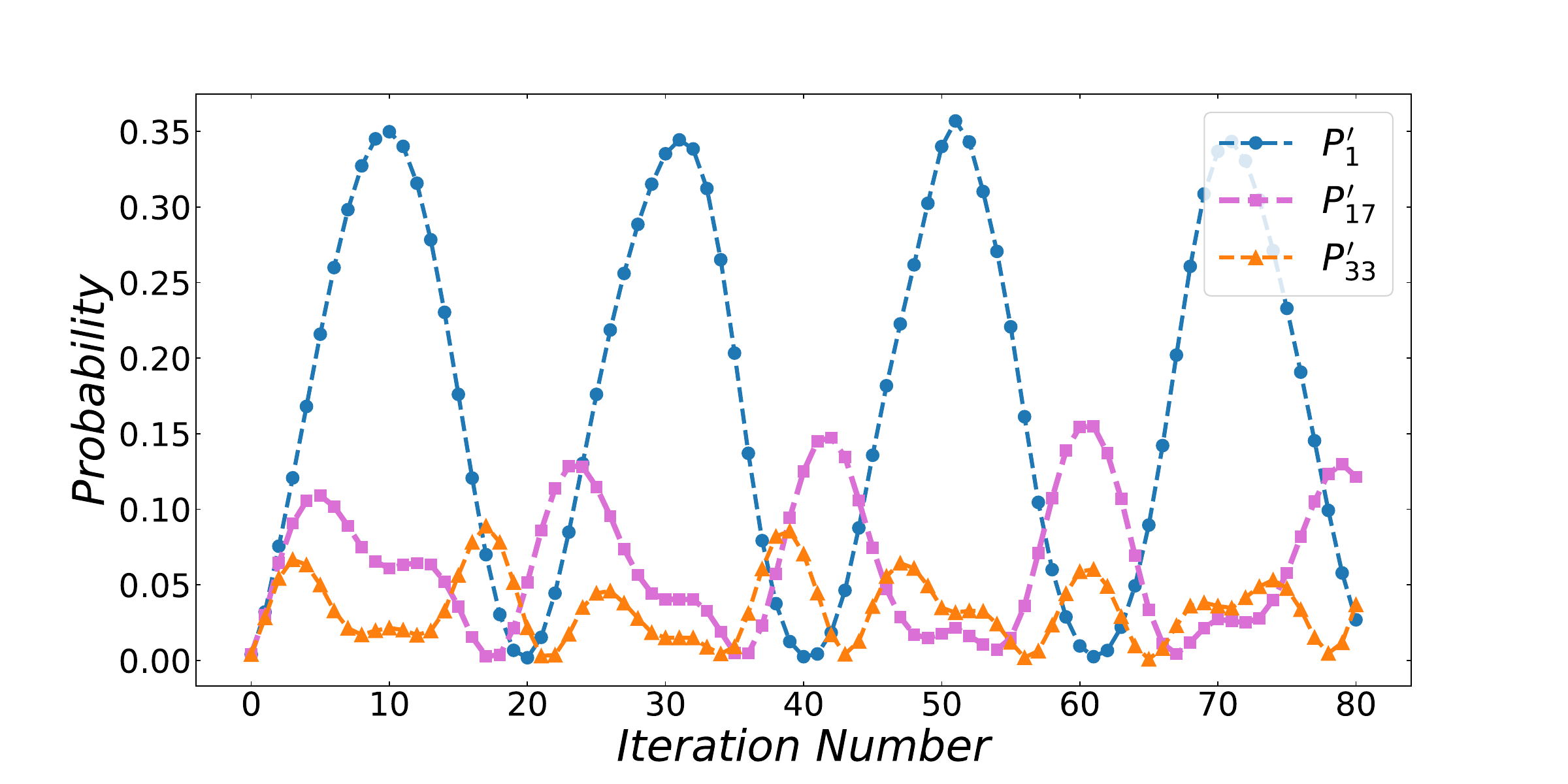}
        \caption{}
        \label{fig:initial_simulation_first_evolution}
    \end{subfigure}
    \begin{subfigure}[b]{0.48\textwidth}
        \includegraphics[width = \textwidth]{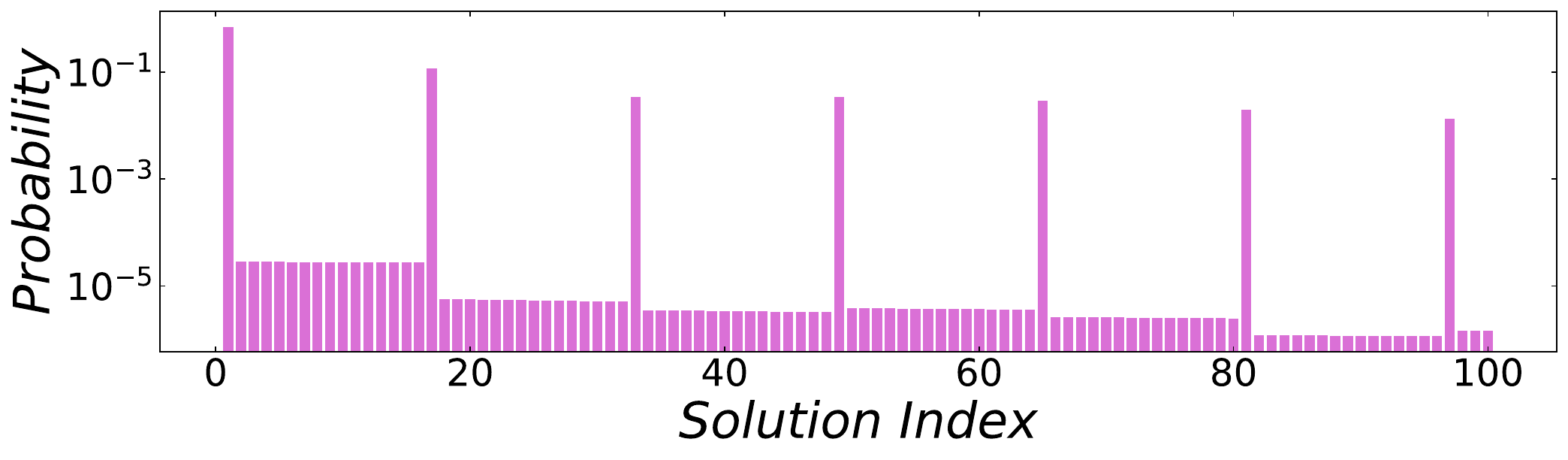}
        \caption{}
        \label{fig:initial_simulation_combined_distribution}
    \end{subfigure}
    \begin{subfigure}[b]{0.48\textwidth}
        \includegraphics[width = \textwidth]{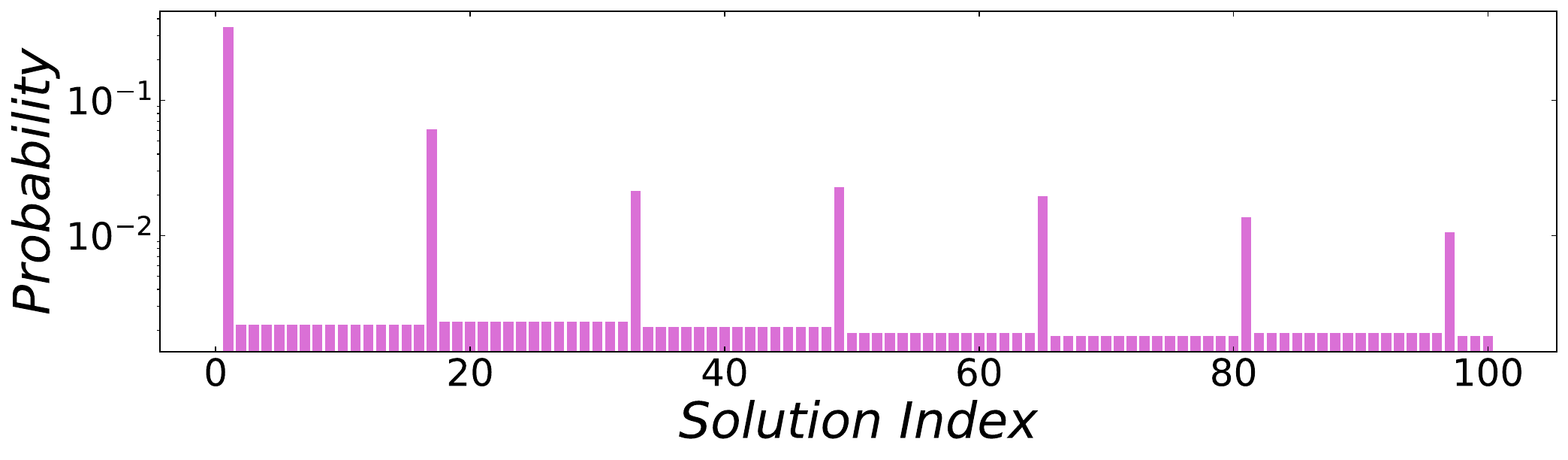}
        \caption{}
        \label{fig:initial_simulation_first_distribution}
    \end{subfigure}
    
    \caption{Performance of quantum dueling for a search space of $N = 256$ with $M = 16$ solutions, which are distributed as $v(x) = x$, $f(x) = [x \equiv 1 \pmod {16}]$, where $1\leqslant x \leqslant 256$. The parameters are chosen naively such that $\alpha_{i} = \beta_{i} =1$. (a) Success probability $P_1$ of finding the most optimized solution (at $x = 1$), along with that of the second (probability $P_{17}$ at $x=17$) and the third (probability $P_{33}$ at $x=33$) as a function of number of iterations performed.  
    (b) The last step of Algorithm~\ref{alg:dueling} combines results from both registers to produce an output. If instead we only measure the first register, a set of parameters similar to those in (a) and (c) can be used to quantify the performance of quantum dueling. These figures show the success probability $P'_1$ of finding the most optimized solution (at $x = 1$), along with that of the second (probability $P'_{17}$ at $x=17$) and the third (probability $P'_{33}$ at $x=33$) as a function of number of iterations performed, on the assumption that only the first register is measured. 
    (c) The (local) maximum success probability of measuring the most optimized solution is achieved after $i_{\text{max}} = 10$ iterations. The figure visualizes, in a logarithmic scale, the probability distribution of elements $1\leqslant x \leqslant 100$ after this iteration.
    (d) The local maximum success probability (after only measuring the first register) is achieved after $i'_{\text{max}} = 10$ iterations. The figure visualizes, in a logarithmic scale, the probability distribution (after only measuring the first register) of elements $1\leqslant x \leqslant 100$ after this iteration.
    }
    \label{fig:initial_simulation}
\end{figure*}


The probability of obtaining the most optimized solution after combining measurements from both registers, which we term the success probability, exhibits an oscillating trend similar to that of the Grover algorithm. However, several differences remain. First, in Figs.~\ref{fig:initial_simulation_first_evolution} and \ref{fig:initial_simulation_combined_distribution}, the local maximum of $P_1$ and $P'_1$ vary from peak to peak. For given $\{\alpha_i\}$ and $\{\beta_i\}$ arrays, we would like to set the termination count $p$ such that the first local maximum is reached. Second, it is clear that unlike Grover algorithm, such a local maximum is not asymptotically close to $1$. To compensate this problem, we can repeat the algorithm to boost success probability. In principle, if known information is not sufficient to determine $p$, we can vary rotation count for each repetition just as done in \cite{Brassard_2002}, though this option is not of great interest to us. Overall, Fig.~\ref{fig:initial_simulation} suggests that it is difficult to find an exact solution for the state vector in quantum dueling. However, an approximate solution might be obtained with appropriate mathematical expertise. 

Since Fig.~\ref{fig:initial_simulation} only documents a single case, we ran the simulation for a variety of scenarios, on the assumption that $\alpha_i = \beta_i = 1$. Table~\ref{tab:initial_simulation} and Fig.~\ref{fig:initial_simulation_all_M} summarize the performance of quantum dueling for different distributions of solution and different values of $M$, while $N$ remains fixed. The first local maximum success probabilities and the corresponding iteration counts vary. For uniform distribution, as in Table~\ref{tab:initial_simulation}, the performance of quantum dueling is hindered by the increasing number of non-solution elements with smaller measure function $v$ values than the most optimized solution, but the resulting success probability remains valid for efficient algorithm design. A better selection of parameters $\{\alpha_i\}$ and $\{\beta_i\}$ can compensate for this issue. We will tackle this problem in \ref{sec:parameter}. 

\begin{table}[b]
    \centering
    \begin{ruledtabular}
        \begin{tabular}{c c c c c}
            $f(x)$ & $P_{\text{max}}$ & $p_{\text{max}}$ & $P'_{\text{max}}$ & $p'_{\text{max}}$ \\
            \colrule
            $[x \equiv 1 \pmod{16}]$ & 0.7061 & 10 & 0.3498 & 10 \\
            $[x \equiv 8 \pmod{16}]$ & 0.4497 & 8 & 0.2257 & 8 \\
            $[x \equiv 0 \pmod{16}]$ & 0.2730 & 5 & 0.1399 & 5 \\
            $[x \leqslant 16]$ & 0.0903 & 2 & 0.0549 & 2 \\
            $[x > 240]$ & 0.0112 & 1 & 0.0056 & 1 \\
            $[x=1 \vee x>241]$ & 0.9919 & 8 & 0.5035 & 8 \\
        \end{tabular}
    \end{ruledtabular}
    \caption{$P_{max}$, $p_{max}$, $P'_{max}$ and $p'_{max}$ with respect to different solution distributions.}
    \label{tab:initial_simulation}
\end{table}

\begin{figure}
    \centering
    \begin{subfigure}{0.5\columnwidth}
        \centering
        \includegraphics[width = \textwidth]{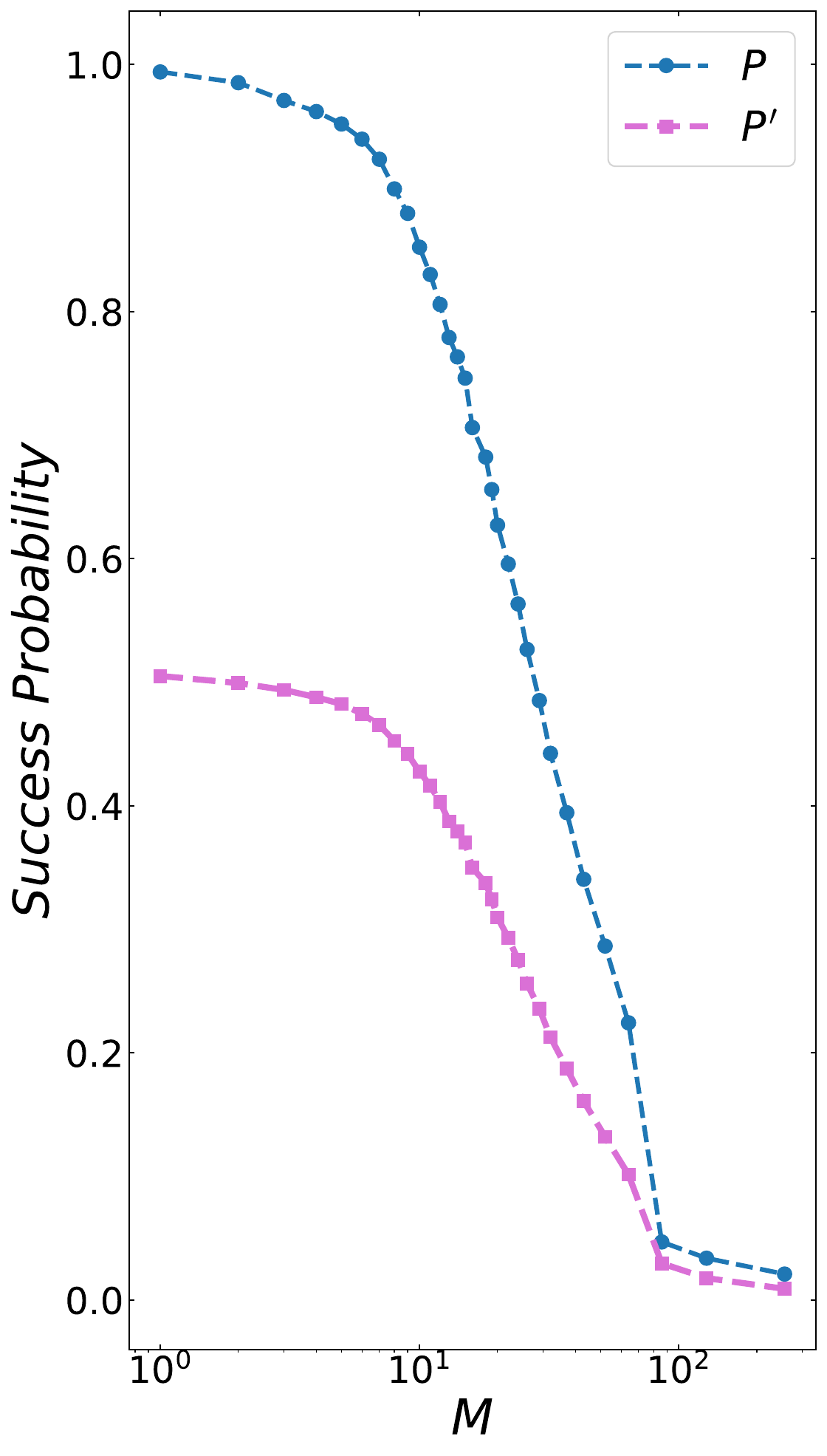}
        \caption{}
        \label{fig:success_probability_all_M}
    \end{subfigure}%
    \begin{subfigure}{0.5\columnwidth}
        \includegraphics[width = \textwidth]{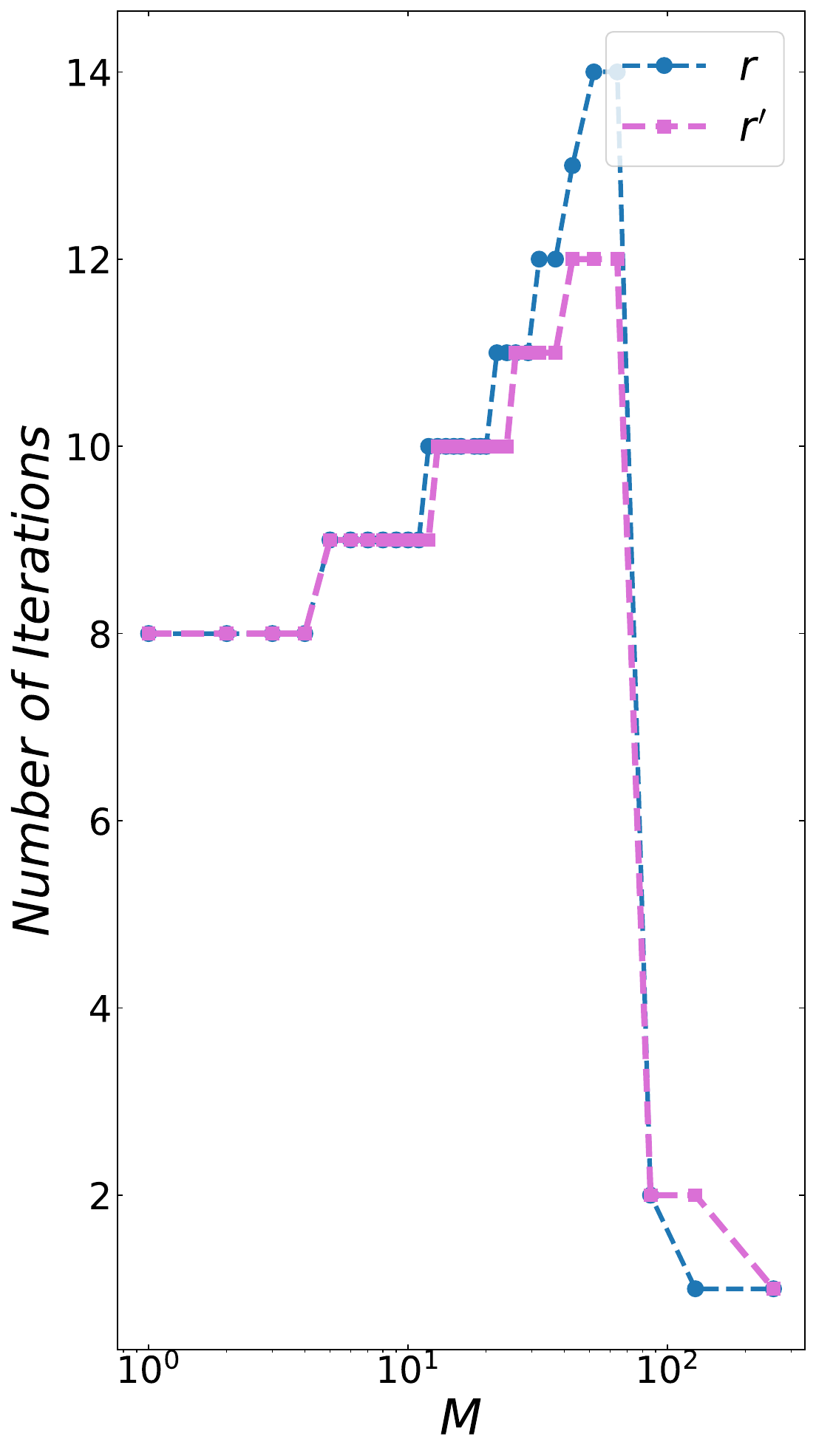}
        \caption{}
        \label{fig:iteration_count_all_M}
    \end{subfigure}
    \caption{Performance of quantum dueling for different $M$, assuming $N = 256$, $v(x) = x$, and a near uniform solution distribution $f(x) = \left[x - 1 \equiv 0 \pmod{\left\lceil\frac{N}{M}\right\rceil}\right].$ (a) The first local maximum success probability $P_{\text{max}}$ and the first local maximum first-register success probability $P'_{\text{max}}$, as a function of $M$, which is in logarithmic scale. (b) The number of iterations $p_{\text{max}}$ after which $P_{\text{max}}$ is achieved, and the number of iterations $p'_{\text{max}}$ after which $P'_{\text{max}}$ is achieved, as a function of $M$, which is in logarithmic scale.  More details regarding this formula can be found in Appendix~\ref{app:distribution}.}
    \label{fig:initial_simulation_all_M}
\end{figure}

For non-uniform distributions, quantum dueling behaves poorest when all solutions have $v$ values that are within a range in which little or no non-solutions lie, with success probability dropping to less than $10\%$. The worst case happens when all non-solutions take $v$ values greater than those of all solutions. In this case, the components of the state vector that represent non-solutions in both registers take most of the weight of the state vector, and such weight cannot be transferred to solutions via amplitude amplification. On the other hand, if there is a clear separation between the desired best solution and other solutions, i.e., between measure function values $v$ of the most optimized solution and that of the second most optimized solution lies all or most of the non-solutions, the algorithm achieves the best result, reaching a success probability of over $99\%$. This means that the maximum probability of quantum dueling depends on the separation between solutions in the distribution of measurement function values.

Fig.~\ref{fig:initial_simulation_all_M} documents the maximum success probability and the required number of iterations for uniform solution distribution when we slowly increase the number of solutions $M$ and decrease their separation. We can see success probability slowly drops, and the required number of iterations slowly increases for an increasing $M$. However, after a critical point $M_0=\frac{N}{4}$, both success probability and (especially) corresponding iteration counts drop significantly, indicating a collapse in the efficacy of the algorithm. As an explanation, for each $y$, define a sequence of operators $G_{y} = DO_{y}$. The result of dueling gates can be decomposed as:
\begin{equation}
    \label{eq:dueling_operation_in_expansion_with_partial_Grover_gates}
    \begin{cases}
        \mathcal{G}_{1\leftarrow 2}\ket{\psi} = \sum\limits_{l=1}^{N}\left(G_l\ket{\psi_l}\right)\ket{l} \\
        \mathcal{G}_{2\leftarrow 1}\ket{\psi} = \sum\limits_{k=1}^{N}\ket{k}\left(G_k\ket{\psi_k}\right) \\
    \end{cases}
\end{equation}
Given $y\in S$ and an initial state vector in $\mathscr{H}_S$ in uniform distribution, assuming the size of target space is $m$, by \cite{Nielsen_2018}, the preferred number of repetitions of $\{G_{y}\}$ gates to reach $100\%$ success probability $r_{\text{max}}$ must satisfy
\begin{equation}
    \label{eq:Grover_r_max}
    (2r_{\text{max}}+1)\theta = \frac{\pi}{2}
\end{equation}
where $\theta = \arcsin\sqrt{\frac{m}{N}}$. When $m > M_0 = \frac{N}{4}$, Eq.~(\ref{eq:Grover_r_max}) suggests $r_{\text{max}} < 1$, meaning that after one iteration, the success probability has been moved past its peak, breaking the amplitude amplification mechanism of Grover algorithm. Consider dueling gates applied to an initial state vector $H^{\otimes 2n}\ket{0}$. When $M > M_0$, for significant values of $y$, $m$ will be sufficiently close to or greater than $M_0$ such that $G_y$ will no longer boost the weight of the target. Therefore, a collapse is observed for quantum dueling, at least with $\alpha_i = \beta_i = 1$.

The results of such simulations justify several definitions made in previous sections. First, when we defined the problem of combinatorial optimization in Sec.~\ref{sec: algorithm}, we introduced the $f$ function to mark out solutions against non-solutions. However, one can also define combinatorial optimization without using the $f$ function by forcing the measure function $v$ of non-solutions to be infinity. In this case, we essentially have $M = N$, leading to the collapse of efficacy as shown in Fig.~\ref{fig:initial_simulation_all_M}. For optimization problems without $f$ or cases where $M>M_0$, we can augment the search space by introducing several extra qubits and mark out an element as  solution if the extra qubits take some arbitrarily chosen value. Alternatively, we can separate solutions into different groups and search these groups separately.

Second, going back to the design of Algorithm~\ref{alg:dueling}, one might want to modify $O_x$ such that it selects for element that are better than $x$, i.e., 
\begin{equation}
    \label{eq:alternate_oracle}
    o(x,y) = (-1)^{[f(x) > f(y)] \,\mathbin{|}\, \left(\,[f(x) = f(y)] \, \mathbin{\wedge} \, [v(x) < v(y)] \, \right)}
\end{equation}
where $|$ is the equivalent of logical disjunction on $\{0,1\}$. However, this redefinition would produce a total order in $S$, meaning that we can find functions $v'(x)$ and $f'(x) = 1$ $\forall x \in S$ such that the algorithm after redefinition is equivalent to the original Algorithm~\ref{alg:dueling} with these new functions $v'$ and $f'$. This, in turn, implies $M = N$, so we would see our algorithm collapsing in efficacy under naive parameters $\alpha_{i} = \beta_{i} = 1$. This is why we define the oracle as it was in Sec.~\ref{sec: algorithm}.

Despite the problems described above, in most common cases where the solution distributions are not that extreme, and the number of solutions is controlled below the critical point $M_{0}=\frac{N}{4}$, quantum dueling with naively chosen parameters $\alpha_{i} = \beta_{i} = 1$ performs quite well, increasing the success probability to a constant regardless of the increase of $N$. 

Under the extreme cases where the number of solutions exceeds the critical limit, or all of the solutions are distributed higher than most of the non-solutions in the measure function spectrum, we can add a few more ancilla qubits, making the entire Hilbert space larger and thus making the solutions dilute. This treatment will be analyzed in Sec.~\ref{sec:opnl}.

Another promising way is adjusting the variational parameters, along with classical optimization heuristics, which will be discussed in detail in Sec.~\ref{sec:parameter}.

\subsection{Cluster Representation}
\label{sec:cluster_representation}

Analysis in Sec.~\ref{sec:general_understanding} suggests that it is difficult, yet potentially possible, to quantify the behavior of quantum dueling mathematically by carefully studying the behavior of the system using the recursive formulas in Eq.~\ref{eq:dueling_speedup_to_O(N^3)}. While we have not yet solved such evolution, this section provides some insights to simplify the formulation of quantum dueling. 

We observe in Fig.~\ref{fig:initial_simulation_first_distribution} that for a given first-register component $|\psi_l\rangle$, in a $v$ range where no solutions lie, the non-solutions seem to have the same weight within the total state vector. Additionally, if several solutions share equal measure function values, they also have the same weight at any stage. To formalize these conclusions, we define the following relation ${\sim}$ on $S$ satisfying Definition~\ref{def:cluster_relation}.

\begin{definition}
    \label{def:cluster_relation}
    ${\sim}$ is an equivalence relation on $S$ such that for all $x,y \in S$, $x{\sim} y$ if and only if either of the following conditions is true:
    \begin{enumerate}
        \item $f(x) = f(y) = 1 \wedge  v(x) = v(y)$;
        \item $f(x) = f(y) = 0 \wedge \neg  \exists\;  z \in S \; f(z) = 1 \wedge \min\left\{v(x),v(y)\right\} \leqslant v(z) < \max\left\{v(x),v(y)\right\}$. 
    \end{enumerate}
\end{definition}

\begin{figure}
    \centering
    \includegraphics[width=\columnwidth]{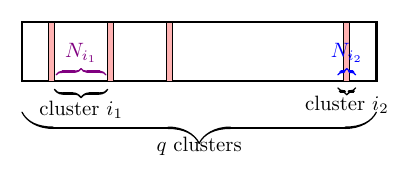}
    \caption{A graphic illustration of the solution space $S$ under cluster representation when all the elements are sorted. The cluster $i_{1}$ is an example of all the non-solutions with their measure function values lying between two adjacent solutions being classified together. The cluster $i_{2}$ illustrates a case in which all the solutions with the same measure function values are classified as one. The total $q$ clusters make up the quotient search space $S/\sim$.}
    \label{fig:cluster_illustration}
\end{figure}

We verify that ${\sim}$ is an equivalence relation in Appendix~\ref{app:cluster}. Intuitively, ${\sim}$ produces equivalence classes containing either all solutions with the same $v$ value or all non-solutions between two adjacent solutions when all elements are sorted by $v$ value, which we term ``clusters''. We can observe this intuitively by Fig.~\ref{fig:cluster_illustration}. Our observation from Fig.~\ref{fig:initial_simulation_first_distribution} thus becomes Theorem.~\ref{thm:cluster_property}, which states elements from the same equivalence class take the same weight in the state vector of quantum dueling. We leave its proof in Appendix.~\ref{app:cluster}.
\begin{theorem}
    \label{thm:cluster_property}
    Let $k_1,k_2,l_1,l_2 \in S$ satisfying $k_1 {\sim} k_2$ and $l_1 {\sim} l_2$. At any stage after initialization, $\langle k_1l_1| \psi \rangle = \langle k_2l_2 | \psi \rangle$.
\end{theorem}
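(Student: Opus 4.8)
The plan is to prove Theorem~\ref{thm:cluster_property} by induction on the number of elementary Grover iterations $\mathscr{G}_1,\mathscr{G}_2$ applied to the state, showing that each single application preserves the property that the component $\psi_{kl} = \langle kl|\psi\rangle$ is constant on each pair of equivalence classes. For the base case, initialization produces the uniform superposition $|\psi\rangle = H^{\otimes 2n}|0\rangle$, so $\psi_{kl} = 1/N$ for all $k,l$ and the property holds trivially. Since every stage reached by Algorithm~\ref{alg:Dueling} is a composition of single applications of $\mathscr{G}_1$ and $\mathscr{G}_2$ (the powers $\mathscr{G}_1^{\alpha_i}$, $\mathscr{G}_2^{\beta_i}$ being repeated compositions), it suffices to show that one application of each preserves the property; and since the two lines of Eq.~(\ref{eq:dueling_speedup_to_O(N^2)}) are identical after swapping the roles of $k$ and $l$, while ${\sim}$ treats both registers symmetrically, I only need to treat $\mathscr{G}_1$.

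The technical heart of the argument is a separation lemma: \emph{if $z \in S$ is a solution and $l_1 {\sim} l_2$, then $[v(z) < v(l_1)] = [v(z) < v(l_2)]$}. I would prove this by cases on Definition~\ref{def:cluster_relation}. If $l_1 {\sim} l_2$ through condition (a), then $v(l_1) = v(l_2)$ and the claim is immediate. If through condition (b), assume WLOG $v(l_1) \leqslant v(l_2)$; the two indicators can differ only when $v(l_1) \leqslant v(z) < v(l_2)$, but then $z$ would be a solution lying in the half-open interval $[\min\{v(l_1),v(l_2)\},\max\{v(l_1),v(l_2)\})$, contradicting condition (b). An immediate consequence, obtained by applying the lemma with $z = k'$, is that the oracle factor $o(k',l) = (-1)^{f(k')\&[v(k')<v(l)]}$ satisfies $o(k',l_1) = o(k',l_2)$ for every fixed $k'$ whenever $l_1 {\sim} l_2$ (the case $f(k')=0$ being trivial since then $o(k',l)=+1$).

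With the lemma available the inductive step reads off Eq.~(\ref{eq:dueling_speedup_to_O(N^2)}) directly. Fix $k_1 {\sim} k_2$ and $l_1 {\sim} l_2$ and assume the inductive hypothesis for $|\psi\rangle$. The first (sum) term $\tfrac{2}{N}\sum_{k'} o(k',l)\psi_{k'l}$ depends only on $l$; comparing it at $l_1$ and $l_2$, the inductive hypothesis gives $\psi_{k'l_1} = \psi_{k'l_2}$ for every $k'$ and the lemma gives $o(k',l_1) = o(k',l_2)$, so the two sums agree term by term. For the second (local) term $-o(k,l)\psi_{kl}$, the inductive hypothesis gives $\psi_{k_1 l_1} = \psi_{k_2 l_2}$, while a short check on whether $k_1,k_2$ are solutions---using $v(k_1) = v(k_2)$ together with the lemma---gives $o(k_1,l_1) = o(k_2,l_2)$. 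Hence $(\mathscr{G}_1|\psi\rangle)_{k_1 l_1} = (\mathscr{G}_1|\psi\rangle)_{k_2 l_2}$, and by the symmetry noted above the same holds for $\mathscr{G}_2$, closing the induction.

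I expect the main obstacle to be the separation lemma, and in particular getting the boundary conventions right: one must check that the half-open interval in Definition~\ref{def:cluster_relation}(b) precisely matches the strict inequality $v(k')<v(l)$ appearing in the oracle, so that no solution value can ever distinguish two ${\sim}$-equivalent elements (the borderline case $v(z)=v(l_1)$ is exactly where the half-open convention is needed). Everything else---the reduction to a single gate, the uniform base case, and the term-by-term matching---is routine once the lemma and the $O(N^2)$ component formula of Eq.~(\ref{eq:dueling_speedup_to_O(N^2)}) are in hand.
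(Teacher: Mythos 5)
Your proposal is correct and follows essentially the same route as the paper's own proof: induction over single gate applications with the uniform state as base case, using the $O(N^2)$ component formula of Eq.~(\ref{eq:dueling_speedup_to_O(N^2)}), and a key lemma showing the oracle sign $o(\cdot,\cdot)$ is constant across ${\sim}$-equivalence classes (your ``separation lemma'' is just a repackaging of the case analysis inside the paper's Lemma~\ref{lem:cluster}). No gaps.
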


 To formulate the evolution of the state vector on the language of clusters, we need to consolidate our understanding of their properties. It is clear the $f$ values of elements from the same cluster are equivalent, so we can allow $f$ to take in a cluster as the argument, outputting the $f$ value of the cluster's elements. While the $v$ values of elements within a cluster might be distinct, for a $\mathcal{S} \in S/\sim$, we can use $\max_v \mathcal{S}$ and $\min_v \mathcal{S}$ to denote the minimum and maximum $v$ value. The details regarding this construction can be found in Appendix~\ref{app:cluster}.

For simplicity, we would like to sort all the clusters and index them accordingly. We first define total order $\leqslant_{v}$ and $\leqslant_{f}$ to rank the $f$ and $v$ values of corresponding clusters. Thus, we are able to define $\leqslant_{\sim}$ as in Definition~\ref{def:v_sim_orders}. The detailed statements and proofs are provided in Appendix~\ref{app:cluster}.
\begin{definition}
\label{def:v_sim_orders}
$\leqslant_\sim$ is the lexicographic combination of $\leqslant_v$ and $\leqslant_f$. That is, for $\mathcal{S},\mathcal{T} \in S/{\sim}$, $\mathcal{S} \leqslant_\sim \mathcal{T}$ if and only if $\mathcal{S} <_v \mathcal{T}$ (that is, $\neg \mathcal{T} \leqslant_v \mathcal{S})$) or ($\mathcal{S} \sim_v \mathcal{T}$ and $\mathcal{S} \leqslant_f \mathcal{T}$).
\end{definition}

Therefore, we could use the relation $\leqslant_{\sim}$ to sort all the clusters by indexing them from 1 to $q=\left|S/\sim\right|$. Such relabelling gives rise to a new, contracted basis for the state vector formulated in Definition~\ref{def:new_basis}. For clarity, all indices under this basis will be denoted in Greek letters.

\begin{definition}
    \label{def:new_basis}
    Consider $\mathcal{S} \in S/{\sim}$ and $\xi = \idx \mathcal{S}$. Function $\idx$ is a bijection from $(S/{\sim})  $ to $[n]$ (set of integers from $1$ to $n$), such that for $\mathcal{S},\mathcal{T}$, $\idx \mathcal{S} \leqslant \idx \mathcal{T}$ if and only if $\mathcal{S} \leqslant_\sim \mathcal{T}$. 
    
    We then Define $N_\xi$ and $|\xi\rangle$ as:
    \begin{equation}
        \label{eq:new_basis}
        \left\{\begin{gathered}
            N_\xi = |\mathcal{S}| \hfill \\
            |\xi\rangle = \frac{1}{\sqrt{N_\xi}} \sum_{x\in\mathcal{S}}|x\rangle \hfill
        \end{gathered}\right.
    \end{equation}
\end{definition}

In this case, the Hilbert Space $\mathscr{H}_{S}$ formulated in Sec.~\ref{sec: algorithm} will give rise to a subspace $\mathscr{H}_{\sim}$ such that $\{\ket{\xi}\}$ is a basis. Theorem~\ref{thm:cluster_property} suggests that at any stage after initialization in Algorithm~\ref{alg:dueling}, $\ket{\psi} \in \mathscr{H}_{\sim} \otimes \mathscr{H}_{\sim}$.

Under the new basis, we can use $\kappa$ and $\lambda$ to index the basis vectors from the first and the second register. Let $q = \left|S/{\sim}\right|$, Eqs.~(\ref{eq:expand_full}), (\ref{eq:redefinition_of_state_Vector}) thus become:
\begin{equation}
    \label{eq:expand_cluster}
    \left\{\begin{gathered}
        |\psi\rangle = \sum_{\lambda=1}^q|\psi_\lambda\rangle  |\lambda\rangle \hfill \\
        |\psi\rangle = \sum_{\kappa=1}^q|\kappa\rangle |\psi_\kappa\rangle \hfill \\
        |\psi\rangle = \sum_{\kappa=1}^{q}\sum_{\lambda=1}^{q} \psi_{\kappa \lambda} |\kappa \lambda \rangle \hfill
    \end{gathered}\right.
\end{equation}

Meanwhile, after applying the Hadmard gates in Algorithm~\ref{alg:dueling}, the state vector is:
\begin{equation}
    \label{eq:initial_state_cluster}
    |\psi_0\rangle = \sum\limits_{\kappa=1}^{q} \sum\limits_{\lambda=1}^{q} \frac{\sqrt{N_{\kappa}N_{\lambda}}}{N} |\kappa \lambda \rangle
\end{equation}

To obtain the formula for state evolution similar to what we have done in Eq.~(\ref{eq:dueling_speedup_to_O(N^3)}), we first find an expression for the criterion of the oracle, $[v(x)<v(y)]\&f(x)$, where we allow $f$ to take the index of a cluster (denoted in Greek letter) as an input, outputting the $f$ value of that cluster. We will list the detailed proof in Appendix~\ref{app:cluster}.

To formulate the oracle in terms of clusters, we define $O_\xi$ such that for $x \in \mathcal{S}$, where $\mathcal{S}$ satisfies $\idx \mathcal{S} = \xi$, $O_\xi = O_x$. Thus:
\begin{equation}
    \label{eq:oracle_cluster}
    O_\xi |\eta\rangle = (-1)^{f(\eta) \& [\eta < \xi]} |\eta\rangle
\end{equation}
where the $(-1)^{f(\eta) \& [\eta < \xi]}$ term can be simplified to $o(\eta, \xi)$. We call this process contracting the oracle.

Similarly, we then contract the Grover gates $\left\{G_x\right\}$ into $\left\{G_\xi\right\}$ as:
\begin{equation}
    \label{eq:Grover_formula_cluster}
    G_\xi = \left(2|m\rangle\langle m| - I\right)O_\xi
\end{equation}
where $|m\rangle$, the mean state vector, can be decomposed in the cluster basis:
\begin{equation}
    \label{eq:mean_cluster}
    |m\rangle = \sum_{\eta = 1}^q \sqrt{\frac{N_{\eta}}{N}} |\eta\rangle 
\end{equation}

It is intuitive to rewrite Eq.(\ref{eq:dueling_operation_in_expansion}) as:
\begin{equation}
    \label{eq:Grover_on_1&2_cluster}
    \left\{\begin{gathered}
        \mathcal{G}_{1\leftarrow 2} |\psi\rangle = \sum_{\lambda = 1}^{q} G_{\lambda} \ket{\psi_{\lambda}} \otimes \ket{\lambda} \hfill \\
        \mathcal{G}_{2\leftarrow 1} |\psi\rangle = \sum_{\kappa = 1}^{q} \ket{\kappa} \otimes G_{\kappa} \ket{\psi_{\kappa}} \hfill 
    \end{gathered}\right.
\end{equation}

To show Eq.~(\ref{eq:Grover_on_1&2_cluster}) is correct, we notice that there is only a constant term difference between corresponding $|\psi_l\rangle$ and $|\psi_\lambda\rangle$ and between corresponding $|\psi_k\rangle$ and $|\psi_\kappa\rangle$. By linearity of quantum gates, there should be no difference in the formula for state evolution, i.e., Eq.~(\ref{eq:Grover_on_1&2_cluster}).

As we have done in Sec.~\ref{sec:general_understanding}, we combine Eqs.(\ref{eq:oracle_cluster})--(\ref{eq:Grover_on_1&2_cluster}) to get the most simplified equation for state evolution:
\begin{equation}
    \label{eq:cluster_state_evolution}
    \left\{\begin{gathered}
        \left(\mathcal{G}_{1\leftarrow 2}|\psi\rangle\right)_{\kappa\lambda} = 2\sqrt{\frac{N_{\kappa}}{N}} h_\lambda - o (\kappa,\lambda) \psi_{\kappa \lambda} \hfill \\
        \left(\mathcal{G}_{2\leftarrow 1}|\psi\rangle\right)_{\kappa\lambda} = 2\sqrt{\frac{N_{\lambda}}{N}} h_\kappa -o(\lambda,
    \kappa) \psi_{\kappa \lambda} \hfill \\
    \end{gathered}\right.
\end{equation}
where values $h_\kappa$ and $h_\lambda$ can be pre-processed as:
\begin{equation}
    \label{eq:cluster_pre_processed}
    \left\{\begin{gathered}
        h_\lambda = \sum_{\kappa'=1}^{q}\sqrt{\frac{N_{\kappa'}}{N}} o (\kappa',\lambda) \psi_{\kappa' \lambda} \hfill \\
        h_\kappa = \sum_{\lambda'=1}^{q} \sqrt{\frac{N_{\lambda'}}{N}} o(\lambda', \kappa) \psi_{\kappa \lambda'}
    \end{gathered}\right.
\end{equation}

In total, Eqs.~(\ref{eq:initial_state_cluster}), (\ref{eq:cluster_state_evolution}), and (\ref{eq:cluster_pre_processed}) allow us to simplify quantum dueling to the most contracted form. In Grover algorithm, such a strategy leads to only 2 clusters: one made up of solutions, and one made up of non-solutions. This would allow us to solve the state vector after an arbitrary number of iterations \cite{Brassard_2002}. In quantum dueling, however, the simplified form remains complex, hindering efforts to find a quantitative expression. Nevertheless, we are now guaranteed with a $O(q^2)$ per iteration simulation strategy, which will prompt us to understand quantum dueling from a new and comprehensive perspective.

\subsection{Hamiltonian Version of Dueling}
\label{sec:hamiltonian_version}


Aside from its circuit interpretation using unitary quantum gates, \cite{Roland_2003, Barkoutsos_2020} have shown that Grover search can be transplanted into a Hamiltonian setup, similar to Hamiltonian-based algorithms like QAOA. As an algorithm inspired by Grover search, quantum dueling can also be represented by the time evolution of a series of Hamiltonians. 

Extending the Hamiltonian forumlation in \cite{Roland_2003, Barkoutsos_2020} into the augmented Hilbert space $\mathscr{H} = \mathscr{H}_{S} \otimes \mathscr{H}_{S}$, we can define our mixer Hamiltonians $H_{M}^{1\leftarrow 2}$, $H_{M}^{2\leftarrow 1}$ as follows: 
\begin{equation}
    \label{eq:mixer_Hamiltonian_def}
    \begin{cases}
        H_{M}^{1\leftarrow 2} = \left(I-\ket{m}\bra{m}\right) \otimes I \\
        H_{M}^{2\leftarrow 1} = I \otimes \left(I-\ket{m}\bra{m}\right) \\
    \end{cases}
\end{equation}
where $\ket{m} = \frac{1}{\sqrt{N}} \sum\limits_{x} \ket{x}$ is the equal superposition of all basis states in $\mathscr{H}_{S}$.

In this way, the diffusion operators $\mathcal{D}_{1\leftarrow 2}$, $\mathcal{D}_{2\leftarrow 1}$ in the augmented Hilbert space $\mathscr{H}$ will be generated as long as we choose the evolution time of each mixer Hamiltonian to be $\pi$, as already mentioned in \cite{Roland_2003, Barkoutsos_2020}.
\begin{equation}
    \label{eq:mixer_Hamiltonian_def2}
    \begin{cases}
        \mathcal{D}_{1\leftarrow 2} = e^{-i\pi H_{M}^{1\leftarrow 2}} \\
        \mathcal{D}_{2\leftarrow 1} = e^{-i\pi H_{M}^{2\leftarrow 1}} \\
    \end{cases}
\end{equation}

When it comes to the problem Hamiltonians $H_{P}^{1\leftarrow 2}$ and $H_{P}^{2\leftarrow 1}$, due to our specific oracle constructions, they will be the sum of projectors onto the basis $\ket{kl}$ in the augmented Hilbert space $\mathscr{H}$ that satisfy the corresponding constraints: $o(k,l)=-1$ and $o(l,k)=-1$, respectively.

Therefore, we can write:
\begin{equation}
    \label{eq:problem_Hamiltonian_def}
    \begin{cases}
        H_{P}^{1\leftarrow 2} = -\sum\limits_{k, l}^{o(k,l) = -1} \ket{kl}\bra{kl} \\
        H_{P}^{2\leftarrow 1} = -\sum\limits_{k, l}^{o(l,k) = -1} \ket{kl}\bra{kl} \\
    \end{cases}
\end{equation}

In this way, the oracles $\mathcal{O}_{1\leftarrow 2}$, $\mathcal{O}_{2\leftarrow 1}$ can be written in the form:
\begin{equation}
    \label{eq:problem_Hamiltonian_def2}
    \begin{cases}
        \mathcal{O}_{1\leftarrow 2} = e^{-i\pi H_{P}^{1\leftarrow 2}} \\
        \mathcal{O}_{2\leftarrow 1} = e^{-i\pi H_{P}^{2\leftarrow 1}} \\
    \end{cases}
\end{equation}


Evolving the quantum state $\ket{\psi}$ with respect to time, we obtain:
\begin{align}
\left. \prod_{i=1}^{p} \right( & \left(e^{-i \pi H_{M}^{2\leftarrow 1}} e^{-i\pi H_{P}^{2\leftarrow 1}} \right)^{\beta_{i}} \nonumber \\
& \times \left( e^{-i\pi H_{M}^{1\leftarrow 2}} e^{-i\pi H_{P}^{1\leftarrow 2}} \right)^{\alpha_{i}} \left. \vphantom{\prod_{i=1}^{p}} \right) \ket{\psi} 
\end{align}

As mentioned in \cite{chiang_2023}, QAOA can be viewed as a generalization of Grover search algorithm. The time parameters of the problem and the mixer Hamiltonians $H_{P}$, $H_{M}$ are fixed to be $\pi$ in Grover search, while QAOA introduces $2p$ additional degrees of freedom to the state evolution by converting the time parameters to two sets of variational parameters $\{\gamma_{i}\}$ and $\{\eta_{i}\}$ corresponding to the evolution of problem and mixer Hamiltonians, respectively, adding complexity to the quantum state evolution. The notation $p$ stands for the depth of the QAOA circuit, as a common practice starting from \cite{Farhi_2014}. More generally, the problem Hamiltonian need not be restricted to the sum of projectors over the target states. Instead, it could be any Hamiltonian tailored to the specific problem, and the mixer Hamiltonian also has different definitions in different QAOA versions. 

Quantum dueling, on the other hand, is a game changer. The generalizations that dueling makes to the original Grover search are completely orthogonal to those made by QAOA. To realize comparison, we augment the entire Hilbert space, double the number of qubits, and thus construct two groups of Hamiltonians. Evolving each group of Hamiltonians with respect to time, the result would be a controlled unitary operation of one register over the other. In this manner, we assign two new sets of parameters $\{\alpha_{i}\}$, $\{\beta_{i}\}$ to the system, one set for each group, making the algorithm variational. In quantum dueling, the variational parameters are defined in the ``outer layer'' of the Hamiltonians, while the parameters for the ``inner layer'' Hamiltonians remain fixed as $\pi$. Unlike QAOA, the parameters $\{\alpha_{i}\}$, $\{\beta_{i}\}$ have nothing to do with the evolution time of problem and mixer Hamiltonians; instead, they represent the evolution time of the `` averaged'' Hamiltonians $\overline{H_{1\leftarrow 2}}$, $\overline{H_{2\leftarrow 1}}$, if we use some Trotter-like strategies to combine the two Hamiltonians in one group into one Hamiltonian. The total parameter degrees of freedom is also $2p$ for quantum dueling, where $p$ is defined in Algorithm~\ref{alg:dueling}.

Interestingly, the fixed-point algorithm \cite{Grover_Lov_2005, Yoder_2014, Yan_2022} modifies the original Grover search so that the evolution time of each operator is not necessarily $\pi$. Yoder points out in \cite{Yoder_2014} that the quadratic speedup of Grover can be preserved along with the fixed-point properties, meaning that the success probability will be bounded within an error $\delta^{2}$ away from 1 asymptotically. 

This construction suggests that we may add new degrees of freedom to our inner layer of quantum dueling, changing the evolution time $\pi$ of each operator to an arbitrary angle $\theta$ ranging from 0 to $2\pi$. This can be done by defining two new sets of parameters $\{\gamma_{i}\}$, $\{\eta_{i}\}$ in the inner layer, transforming our algorithm into ``fixed point dueling''. The details may be discussed in our future works.

In a similar fashion as QAOA, we can generalize quantum dueling by modifying the problem and the mixer Hamiltonians to different forms, tailored to the specifics of the target problem. This would fundamentally change the dynamics of the system, unveiling new opportunities that can be explored in future research. 

Therefore, we have made an elegant connection between Grover-based algorithms using a unified, Hamiltonian setup, by adding variational parameters and augmenting the Hilbert space, making the state evolution non-trival. The reason for the existence of such representation is simply rooted in quantum mechanics, as the unitary operators could be expressed as time evolution under some specific Hermitian operators, which can be understood as Hamiltonians.



\begin{figure*}
    \centering
    
    \begin{subfigure}{0.6\textwidth}
    \centering
    \includegraphics[width = \textwidth, trim = 0 1cm 0 0]{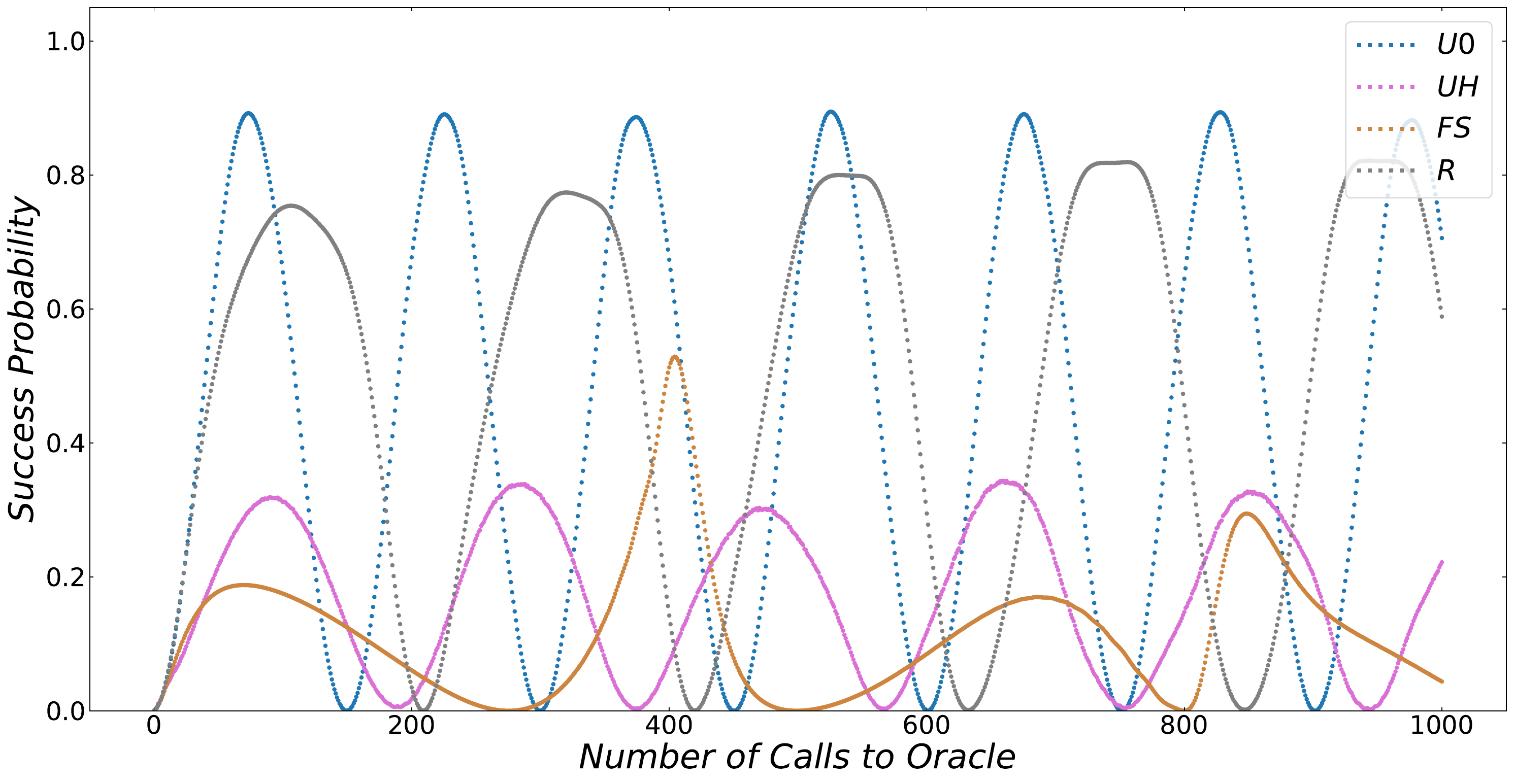}
    \caption{}
    \label{fig:parameter_scheme_1_state_evolution}
    \end{subfigure}
    \begin{subfigure}{0.3\textwidth}
    \centering
    \includegraphics[width = \textwidth]{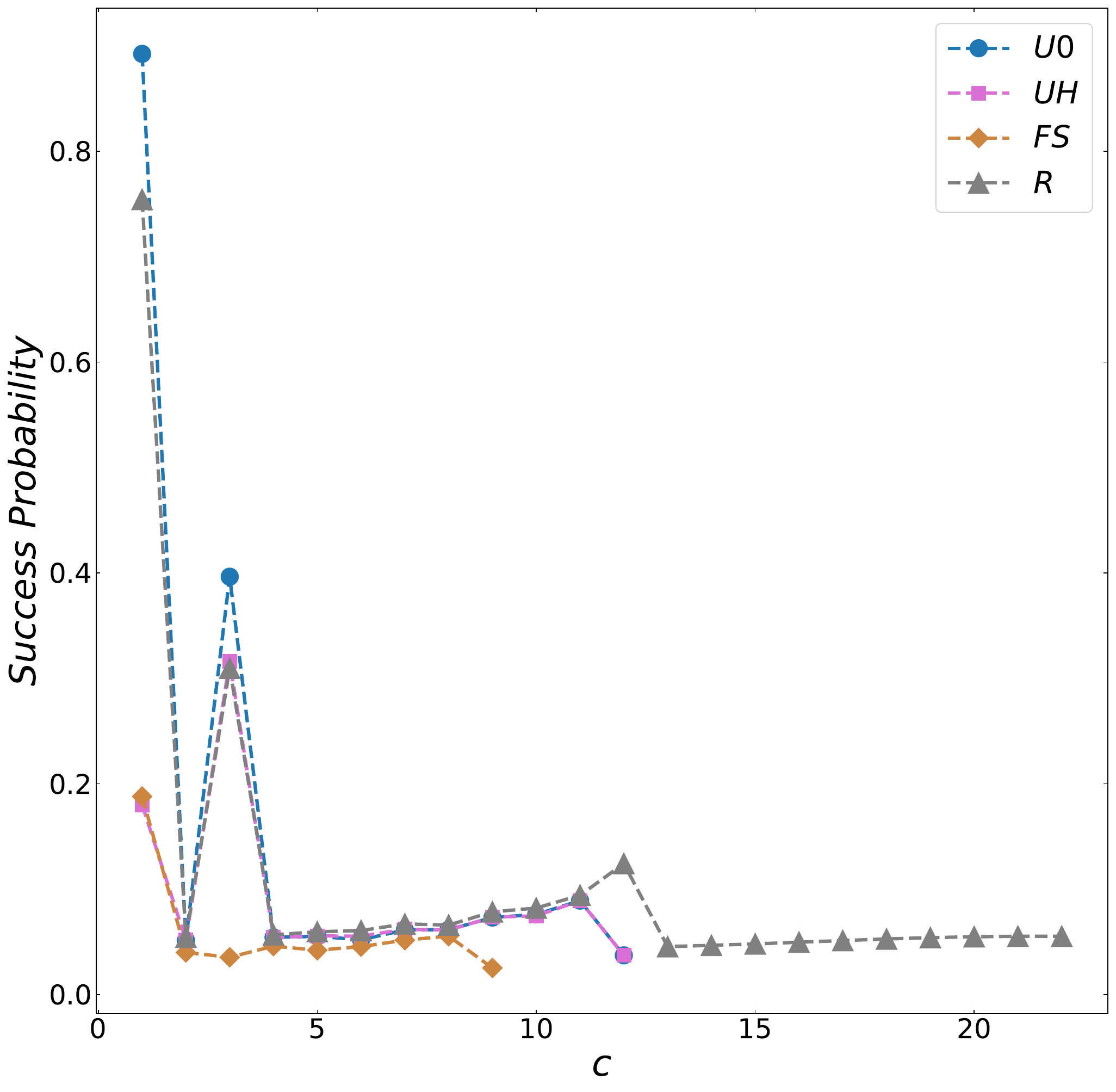}
    \caption{}
    \label{fig:parameter_scheme_1_vary_c}
    \end{subfigure}
    \\
    \begin{subfigure}{0.3\textwidth}
    \centering
    \includegraphics[width = \textwidth]{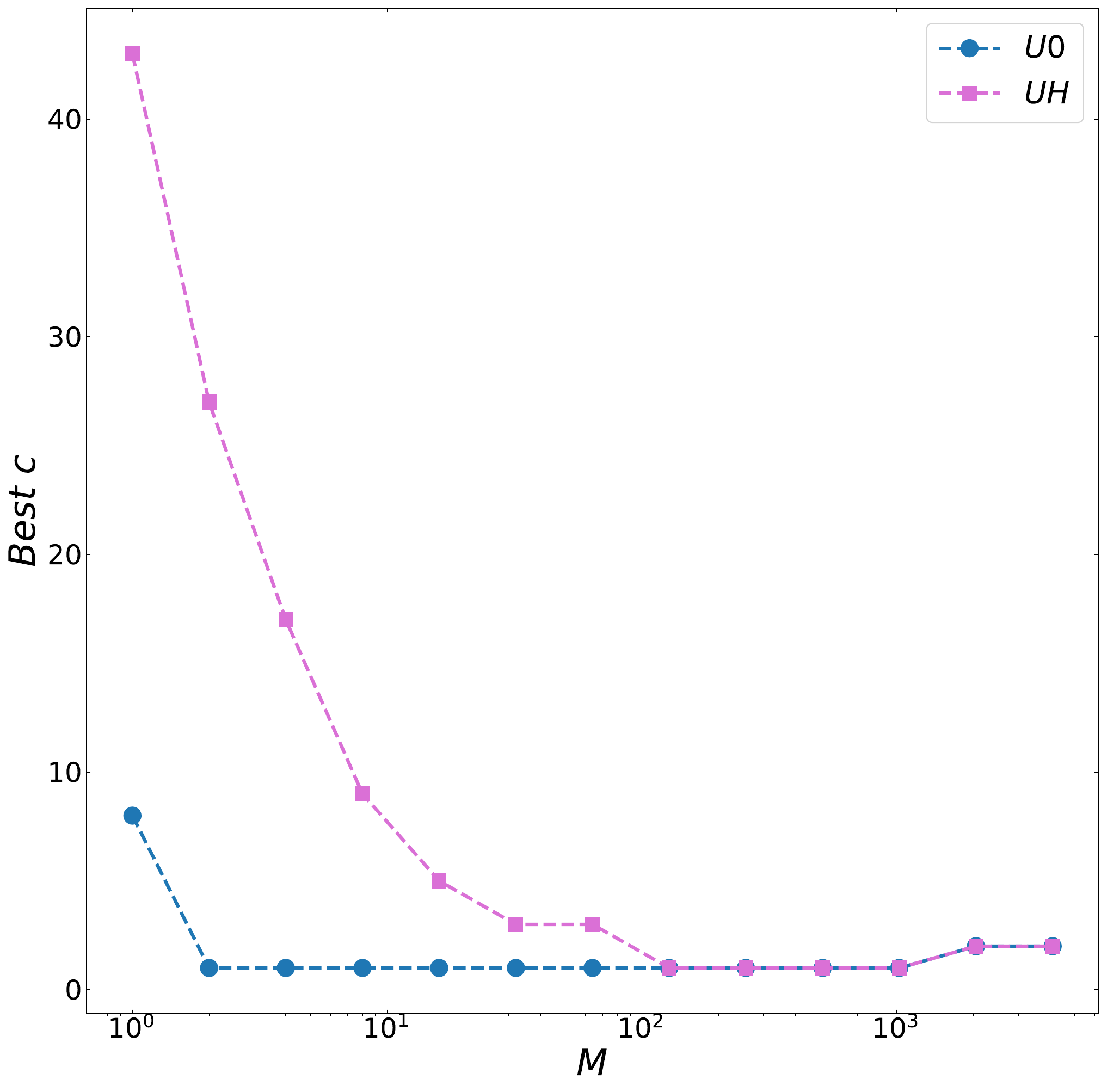}
    \caption{}
    \label{fig:parameter_scheme_1_vary_M_best_c}
    \end{subfigure}
    \begin{subfigure}{0.3\textwidth}
    \centering
    \includegraphics[width = \textwidth]{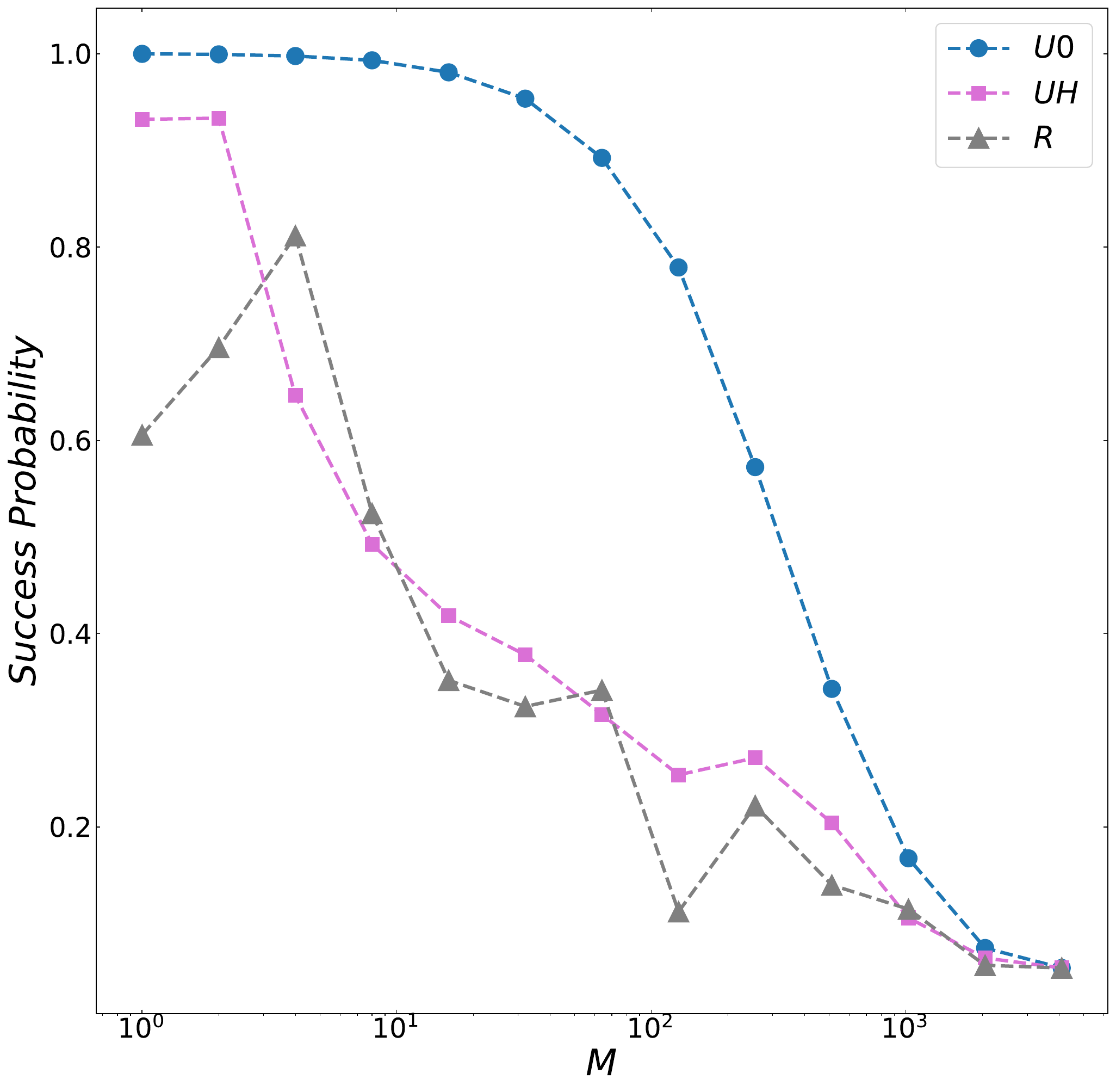}
    \caption{}
    \label{fig:parameter_scheme_1_vary_M_success_prob}
    \end{subfigure}
    \begin{subfigure}{0.3\textwidth}
    \centering
    \includegraphics[width = \textwidth]{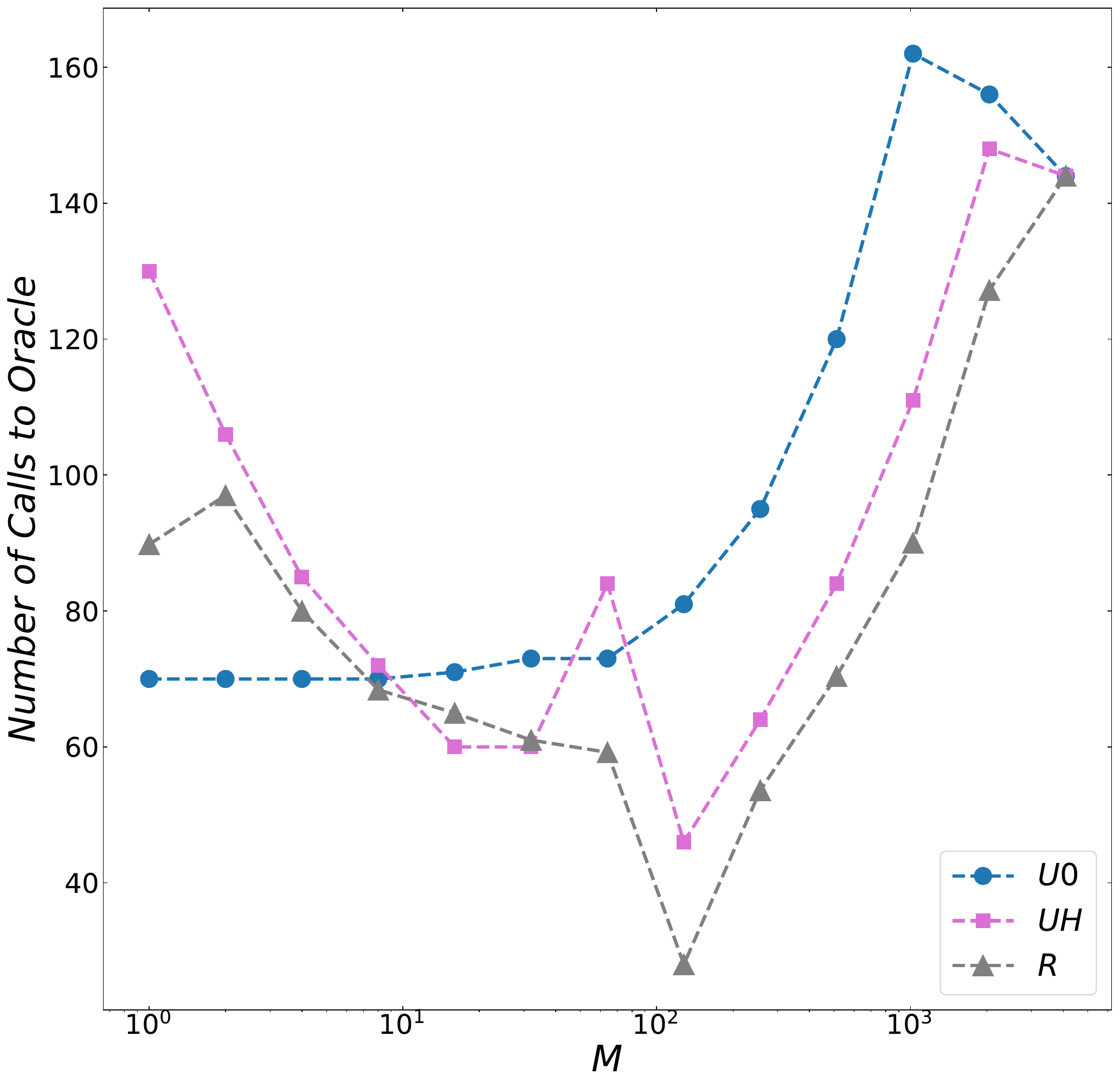}
    \caption{}
    \label{fig:parameter_scheme_1_vary_M_num_calls}
    \end{subfigure}
    \begin{subfigure}{0.3\textwidth}
    \centering
    \includegraphics[width = \textwidth]{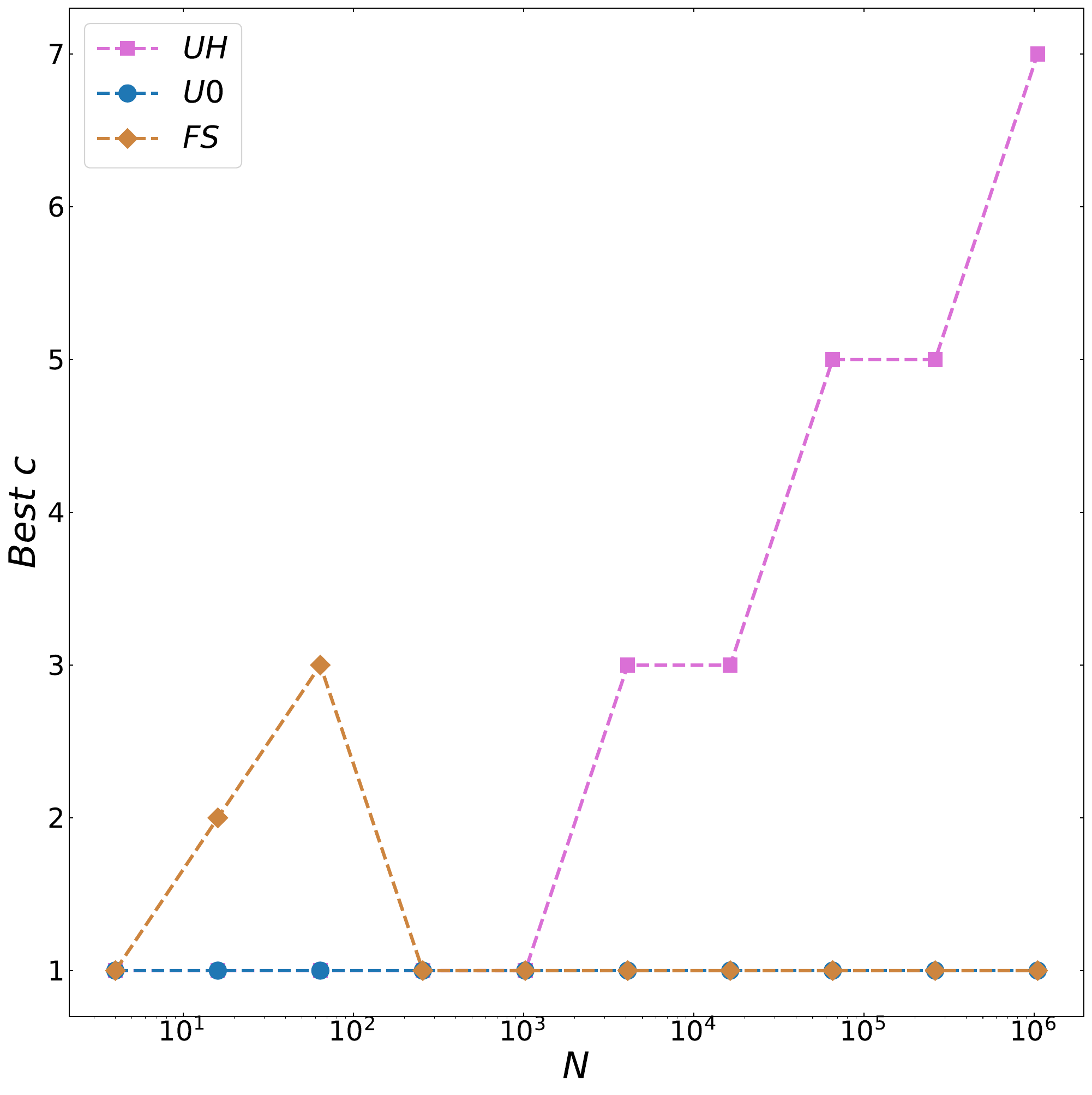}
    \caption{}
    \label{fig:parameter_scheme_1_vary_N_best_c}
    \end{subfigure}
    \begin{subfigure}{0.3\textwidth}
    \centering
    \includegraphics[width = \textwidth]{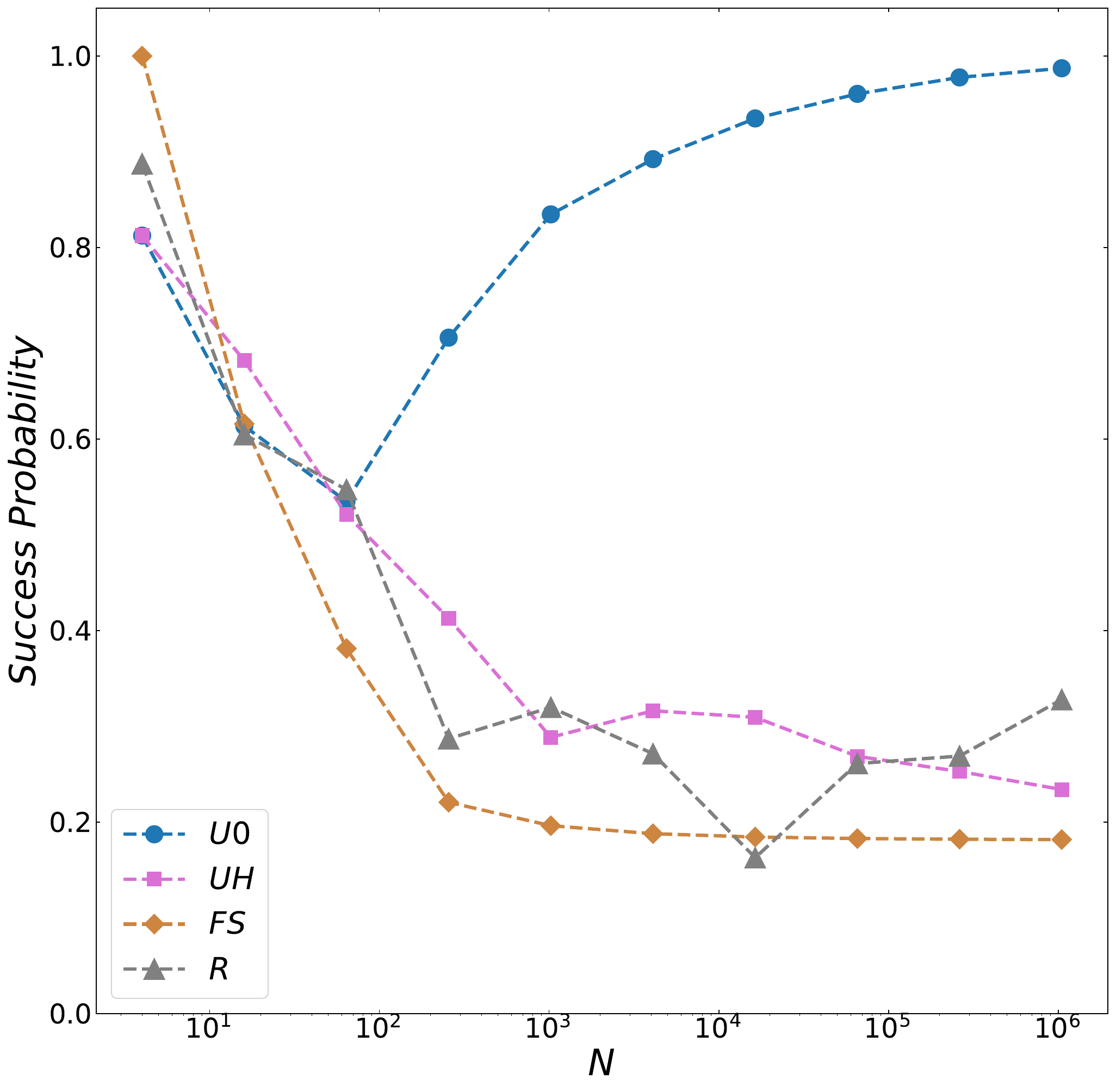}
    \caption{}
    \label{fig:parameter_scheme_1_vary_N_sucess_prob}
    \end{subfigure}
    \begin{subfigure}{0.3\textwidth}
    \centering
    \includegraphics[width = \textwidth]{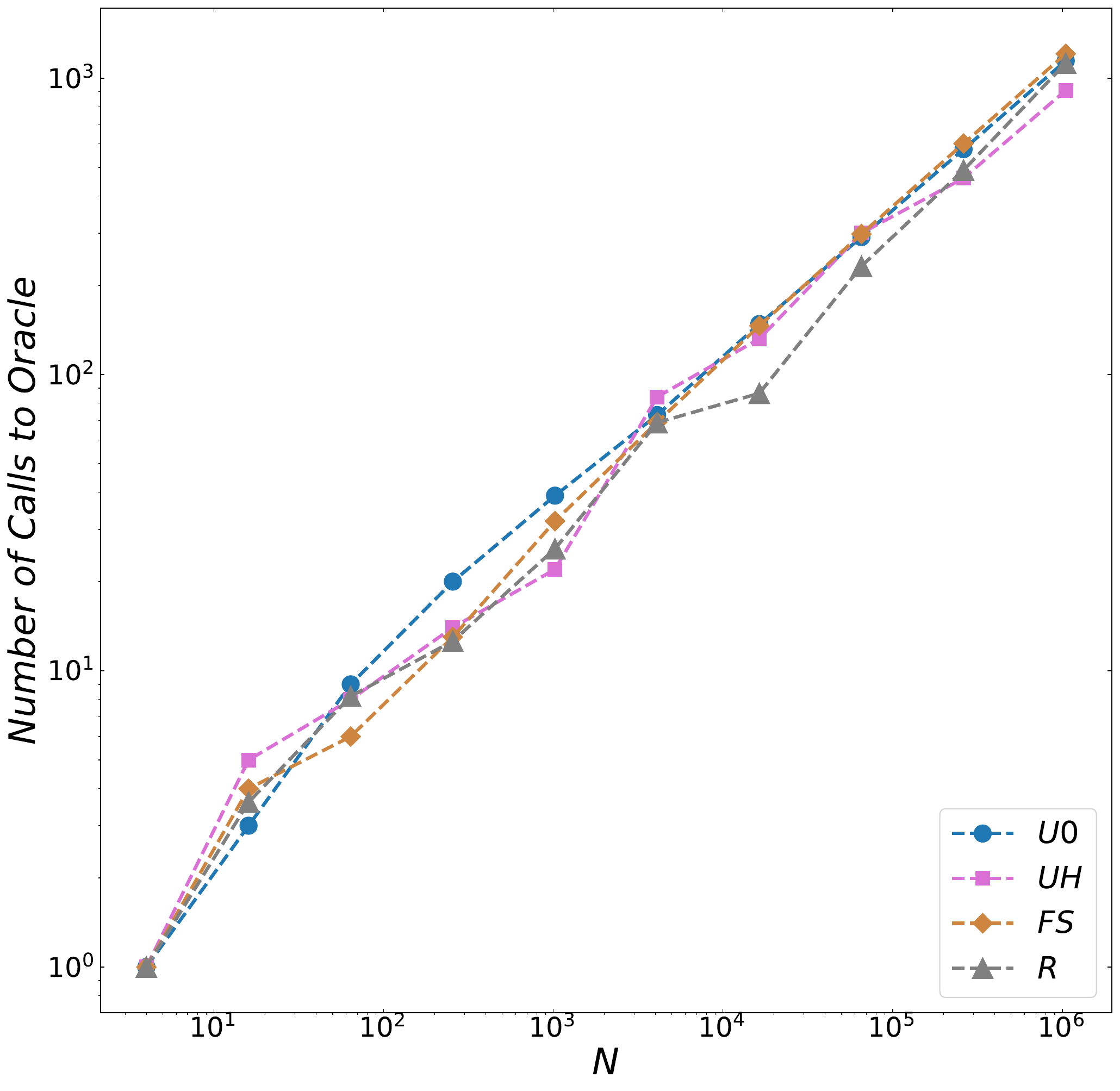}
    \caption{}
    \label{fig:parameter_scheme_1_vary_N_num_calls}
    \end{subfigure}

    \caption{Quantum dueling under parameter $\alpha_{i}=\beta_{i}=c$. The figures show the performance of the algorithm on different solution distributions. In the legends of figures, $\mathsf{U0}$ stands for $f(x) = [x - 1 \equiv 0 \pmod{\left\lceil\frac{N}{M}\right\rceil}]$, $\mathsf{UH}$ stands for $f(x) = [x \equiv \left\lfloor \frac{1}{2} \left\lceil \frac{N}{M} \right\rceil \right\rfloor \pmod{\left\lceil  \frac{N}{M} \right\rceil }  ]$, $\mathsf{FS}$ stands for $f(x) = [\exists y\in\mathbb{N}^* \; x = y^2]$ and $\mathsf{R}$ stands for random distribution for solutions. All figures show trends where $c$ takes the value that results the highest probability at the first peak, except figure (b). The random case has been treated by averaging over several runs. (a) Success probability trend under a large number of calls to Oracle. In this figure $N=4096$, $M=64$. (b) Success probabilities achieved with different $c$ and $N=4096$, $M=64$. (c)(d)(e) show trends of best parameters, success probabilities and algorithm complexity against varying M with $N=4096$ while (f)(g)(h) show them against varying N with $M=\sqrt{N}$.}
    \label{fig:parameter_scheme_1}
\end{figure*}

\section{Parameter Scheme}
\label{sec:parameter}

One of the most important features of Algorithm~\ref{alg:dueling} is the need for two additional sets of hyperparameters $\{\alpha_i\}$ and $\{\beta_i\}$ to dictate the evolution of the state vector. In an equivalent formulation, at any stage after initialization, we can either apply $\mathcal{G}_{1\leftarrow 2}$ or $\mathcal{G}_{2\leftarrow 1}$ to the current state vector. A binary string $s$ could be used to represent the application of $\mathcal{G}_{1\leftarrow 2}$, $\mathcal{G}_{2\leftarrow 1}$ operators, where $s_{i}=0$ means $\mathcal{G}_{1\leftarrow 2}$ is applied, $s_{i}=1$ means $\mathcal{G}_{2\leftarrow 1}$ is applied. In total, there are $2^{\mathsf{depth}}$ possible choices when the length of the string is $\mathsf{depth}$, and any one of the possibilities in principle might be the ones that maximize the success probability upon measurement. 


At first glance, it might seem that such a need for hyperparameters poses a major concern for the algorithm. In QAOA, for example, it has been shown that as the circuit depth $p$ increases, directly obtaining the optimal parameters becomes extremely difficult. Naturally, we may assume that similar problems are also present for quantum dueling.  As we know, different machine learning techniques have been proposed \cite{alam_2020, Xie_2023} as a subroutine of QAOA. Recently, a Hamiltonian backpropagation scheme also came to light \cite{Paster_2023}. Similarly, using machine learning to find optimal hyperparameters for quantum dueling might be one of the choices. However, despite these efforts, we still hope to determine the hyperparameters in advance, designing a relatively easy parameter scheme in most cases. In this way, we hope to resolve the concern of hyperparameters without the need of designing complicated neural networks.  

This section addresses such a concern in detail. In particular, we use ``parameter scheme''  to refer to a way to select parameters $\{\alpha_i\}$ and $\{\beta_i\}$. Quite often, these parameter schemes are based on some other (usually very few) parameters that can be selected to maximize the performance of the algorithm. In our analysis, we let $c$ be an extra parameter and considered two separate parameter schemes: $\alpha_i = \beta_i = c$ and $\alpha_i = 1, \beta_i = c$. This means that the algorithm only relies on two parameters, the extra parameter $c$ and the circuit depth $p$. It is clear that such a parameter scheme is very naive. However, even for such a simple scheme, we show that quantum dueling exhibits strong performance, reaching a quadratic speedup for all our problem setups. In other words, it seems that the parameters in quantum dueling do not pose a huge concern for the practical use of the algorithm. All we need to do is to find a reasonable parameter scheme and use variational or learning techniques to predict very few number of parameters. As will be discussed in Sec.~\ref{sec:dueling_mechanism}, a better, mathematical understanding of the mechanisms of quantum dueling will help us identify what really counts as a ``reasonable parameter scheme'', and such a question should be subject to future research. 


\subsection{Parameter scheme $\alpha_i = \beta_i = c$}
\label{sec:parameter_scheme_1}

From this point onward, we simulate a slightly modified version of quantum dueling, where the stop parameter is replaced with $T$, the total number of calls to dueling gates. Since each dueling gate calls once one of the two oracles, $T$ is also the total number of oracles called during the algorithm---an indicator of time complexity. The modified algorithm is described in Algorithm~\ref{alg:dueling_modified}. 

\begin{algorithm}
        \caption{Quantum Dueling (modified)}\label{alg:dueling_modified}
        \SetKw{False}{false} 
        \SetKw{True}{true}
        \SetKw{Break}{break}
        \KwIn{Measure function $v$ represented in quantum arithmetic; two sufficiently long integer sequences: \{$\alpha_i$\} and \{$\beta_i$\}, a stop timer $T$}
        \KwOut{An element $x$ in the search space satisfying $f(x)=1$ and $v(x)$ (approximately) minimized}
        $\ket{\psi}\leftarrow H^{\otimes2n}\ket{0}$\;
        $\mathsf{cnt} \leftarrow 0$\;
        \For{$i \leftarrow 1$ \KwTo $\infty$}{
            \For{$j \leftarrow 1$ \KwTo $\alpha_i$}{
                $\ket{\psi}\leftarrow\mathcal{G}_{1\leftarrow 2}\ket{\psi}$,  $\mathsf{cnt} \leftarrow \mathsf{cnt}+1$\;
                \lIf{$\mathsf{cnt} = T$}{break all loops}
            }
            \For{$j \leftarrow 1$ \KwTo $\beta_i$}{
                $\ket{\psi}\leftarrow\mathcal{G}_{2\leftarrow 1}\ket{\psi}$,  $\mathsf{cnt} \leftarrow \mathsf{cnt}+1$\;
                \lIf{$\mathsf{cnt} = T$}{break all loops}
            }        
        }
        Measure both registers. Let the result in the first register be $x_1$ and the result of the second register be $x_2$. Output the better result.
\end{algorithm}

With this algorithm, we set parameter $\alpha_i = \beta_i = c$, where $c$ is some positive integer. Thus, the algorithm essentially depends on two parameters, $c$ and $T$. Based on observations made in Sec.~\ref{sec:general_understanding}, we would like to select $T$ such that the state is evolved to the point where the first local maximum success probability is reached. Then, we would like to find a $c$ value that maximizes this first local maximum. 

After simulations across a broad spectrum of scenarios, we compile all outcomes in Fig.~\ref{fig:parameter_scheme_1}. Within the figure, the subplot Fig.~\ref{fig:parameter_scheme_1_state_evolution} illustrates the evolution of the success probability for some given solution distributions with the optimal $c$ selections. Consistently, the observed trend aligns with the discussions in Sec.~\ref{sec:general_understanding}, characterized by an oscillating success probability. An intriguing deviation occurs in the case of the full square number distribution; while the oscillatory pattern persists, there's a notable variation in the local maxima. Since the full square number distribution is the only distribution that differs significantly from the uniform case, this suggests that the evolution of the state vector is intricately tied to the problem setup. However, the mathematical details of such a pattern remain an open question. 

Interestingly, this finding challenges previous practices. Typically, we select $T$ to align with the first local maximum, avoiding waiting for the second one. The rationale is efficiency: the same time spent waiting for the second maximum could be used to rerun the algorithm classically and yield a better result. Yet, the full square number distribution presents a compelling counterpoint. Here, the second maximum significantly outstrips the first, suggesting that delaying for the second maximum could be a viable alternative. If this is the case, it is entirely possible that we can find solution distributions where awaiting the second local maximum is definitively more advantageous. Nonetheless, in the interest of maintaining a consistent analytical approach, our focus remains on employing the first local maximum probability in our evaluations.

In addition, Fig.~\ref{fig:parameter_scheme_1_vary_c} shows the performance of quantum dueling with varying hyperparameter $c$ for some fixed problem setups. In general, $c$ is best chosen as a small odd number unless for extremely sparse $M$. As delineated in Appendix~\ref{app:Grover_search}, for an arbitrary initial state, a single Grover search gate not only alters the mean components of solutions and non-solutions but introduces a $-1$ phase shift in the deviation of non-solution components from their mean. Within quantum dueling, such phase shift is cancelled when Grover gates are applied an even number of times consecutively. Our findings suggest that retaining the $-1$ phase through an odd $c$ introduces greater variability in the state evolution, thereby enhancing the probability of success. Furthermore, the preference for a small optimal $c$ underscores the importance of alternating between different types of gates—--a cornerstone concept in quantum dueling.

For a fixed-size search space, Figs.~\ref{fig:parameter_scheme_1_vary_M_best_c}-\ref{fig:parameter_scheme_1_vary_M_num_calls} depict quantum dueling's effectiveness for the given solution distributions. Trends in Fig.~\ref{fig:parameter_scheme_1_vary_M_best_c} implies that the optimal value of $c$ can be quantified with simple functions based on specific solution distribution without any other knowledge. Specifically, for the uniform distribution $\mathsf{U0}$, the best parameter is consistently $\alpha_i=\beta_i=1$. Results in Figs.~\ref{fig:parameter_scheme_1_vary_M_success_prob} and \ref{fig:parameter_scheme_1_vary_M_num_calls} largely echo the patterns observed in Fig.~\ref{fig:initial_simulation_all_M} where the efficacy of quantum dueling is slowly reduced with an increasing solution concentration. The most notable revelation is the observed non-linear behavior in the number of oracle calls required to attain the first maximum success probability. This unexpected and intriguing finding presents an open avenue for further research and exploration.

Lastly, Figs.~\ref{fig:parameter_scheme_1_vary_N_best_c}-\ref{fig:parameter_scheme_1_vary_N_num_calls} summarize the performance of quantum dueling for an increasing $N$. Similarly, Fig.~\ref{fig:parameter_scheme_1_vary_M_best_c} implies that the practical determination of hyper-parameter $c$ is trivial. By Fig.~\ref{fig:parameter_scheme_1_vary_N_sucess_prob}, even for large $N$, quantum dueling with parameter scheme $\alpha_i = \beta_i = c$ can boost the success probability to a significant quantity. For uniform distribution $\mathsf{U0}$, in addition to the best parameter being $\alpha_i=\beta_i=1$, the success probability is asymptotically close to 1. With the full square number distribution $\mathsf{FS}$, the success probability tends toward a constant value, another intriguing aspect requiring further investigation. For other distributions, the decrease in success probability is so slow that the asymptotic trend is unclear. Nonetheless, it is clear that the number of classical repetitions to boost success probability to a constant, even if needed, is minimal. 

\begin{table}[]
    \centering
    \begin{ruledtabular}
        \begin{tabular}{c c c}
            Distribution & Slope & Intercept\\
            \hline
            $\mathsf{U0}$  &  $0.516\pm0.014$ & $-0.037\pm0.123$\\
            $\mathsf{UH}$  &  $0.491\pm0.020$ & $0.064\pm0.182$\\
            $\mathsf{FS}$  &  $0.534\pm0.010$ & $-0.265\pm0.093$\\
            $\mathsf{R}$   &  $0.508\pm0.016$ & $-0.169\pm0.141$\\
        \end{tabular}
    \end{ruledtabular}
    \caption{Linear regression data of $\ln{(\text{Calls to Oracle})}$ against $\ln{N}$ with parameter $\alpha_{i}=\beta_{i}=c$.}
    \label{tab:parameter_scheme_1_fits}
\end{table}

In particular, the number of oracle calls $T$ required to achieve the first local maximum success probability is delineated in Fig.~\ref{fig:parameter_scheme_1_vary_N_num_calls}. For each type of solution distribution, a linear fit is conducted on the logarithms. The results are shown in Table~\ref{tab:parameter_scheme_1_fits}. Overall, it becomes evident that $T \in O(\sqrt{N})$, indicating that the \( \alpha_i = \beta_i = c \) parameter scheme can elevate success probability significantly within quadratic time for all sampled solution distributions.

\subsection{Parameter scheme $\alpha_i = 1$, $\beta_i = c$.}

\label{sec:parameter_scheme_2}

In Sec.~\ref{sec:parameter_scheme_1} we investigated the performance of quantum dueling with parameter scheme $\alpha_i = \beta_i = c$ in great detail. Our results demonstrated that even such a naive scheme enhances the success probability efficiently to a high value. However, one can argue that such strong performance is exclusive to this scheme. To address such skepticism, we examined an alternative scheme where $\alpha_i = 1$ but $\beta_i$ is set to some parameter $c \in \mathbb{N}^+$. Intuitively, this represents an asymmetric design where the first register is primarily utilized as the control to optimize the second, interspersed with occasional updates.  

\begin{table}[]
    \centering
    \begin{ruledtabular}
        \begin{tabular}{c c c}
            Distribution & Slope & Intercept\\
            \hline
            $\mathsf{U0}$  &  $0.487\pm0.003$ & $0.275\pm0.028$\\
            $\mathsf{UH}$  &  $0.455\pm0.021$ & $0.310\pm0.212$\\
            $\mathsf{FS}$  &  $0.539\pm0.010$ & $-0.312\pm0.102$\\
            $\mathsf{R}$   &  $0.501\pm0.015$ & $0.028\pm0.146$\\
        \end{tabular}
    \end{ruledtabular}
    \caption{Linear regression data of $\ln{(\text{Calls to Oracle})}$ against $\ln{N}$ with parameter $\alpha_{i}=1$, $\beta_{i}=c$.}
    \label{tab:parameter_scheme_2_fits}
\end{table}

The cumulative findings are compiled in Fig.~\ref{fig:parameter_scheme_2} and Table~\ref{tab:parameter_scheme_2_fits}, employing the same methodology as in Sec.~\ref{sec:parameter_scheme_1}. For brevity, we selected plots of state evolution, the impact of varying parameter $c$, and performance changes with different $N$. These result affirm that the trends discussed in Sec.~\ref{sec:parameter_scheme_1} persist for this parameter scheme, with quantum dueling achieving notable success probabilities within $O(\sqrt{N})$ time. Consequently, we can conclude that the hyperparameter required in quantum dueling does not pose a significant challenge for practical implication. Instead, it brings a beneficial versatility to the algorithm, further enhancing its efficacy.

\begin{figure*}
    \centering
    
    \begin{subfigure}{0.6\textwidth}
    \centering
    \includegraphics[width = \textwidth, , trim = 0 1cm 0 0]{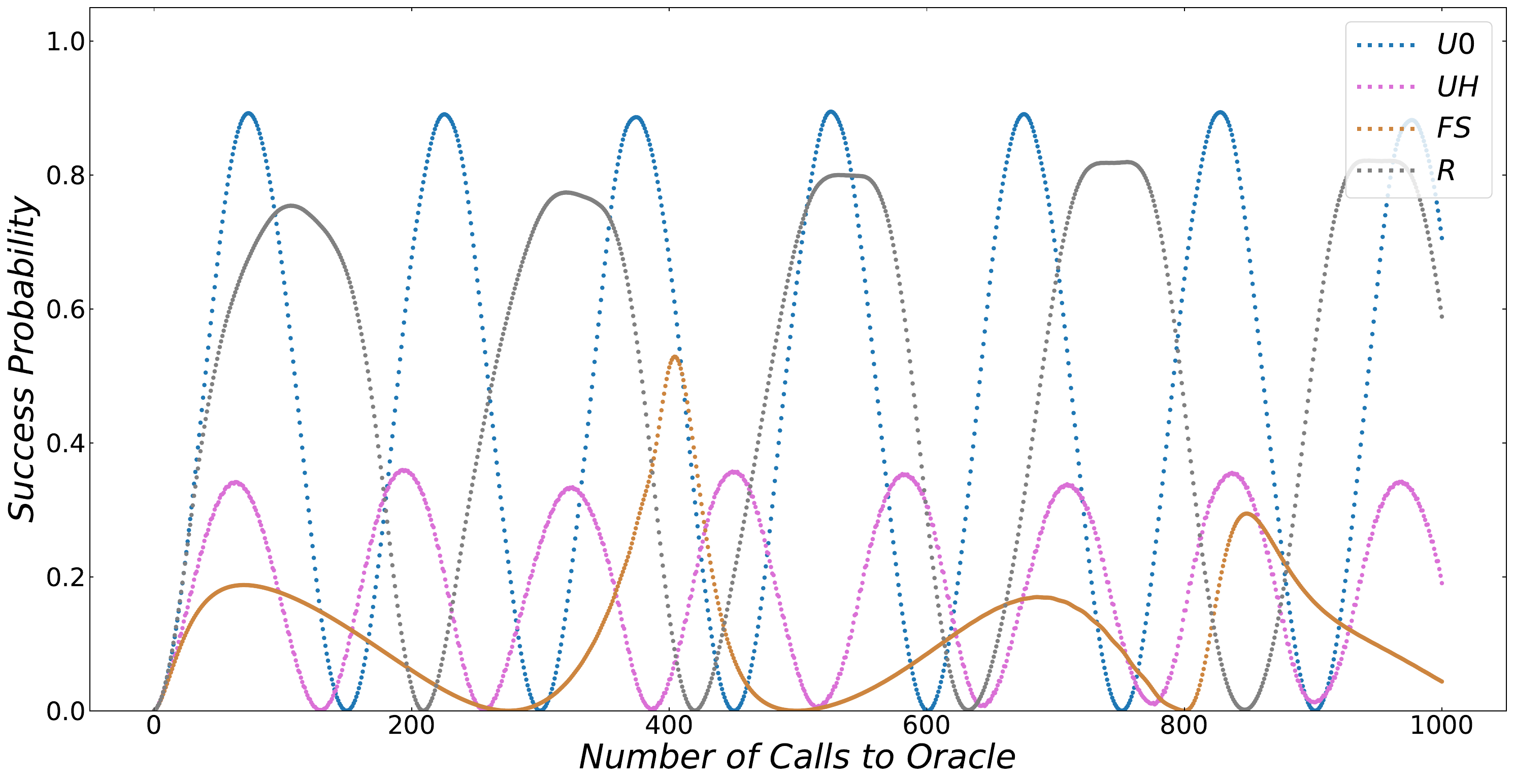}
    \caption{}
    \label{fig:parameter_scheme_2_state_evolution}
    \end{subfigure}
    \begin{subfigure}{0.3\textwidth}
    \centering
    \includegraphics[width = \textwidth]{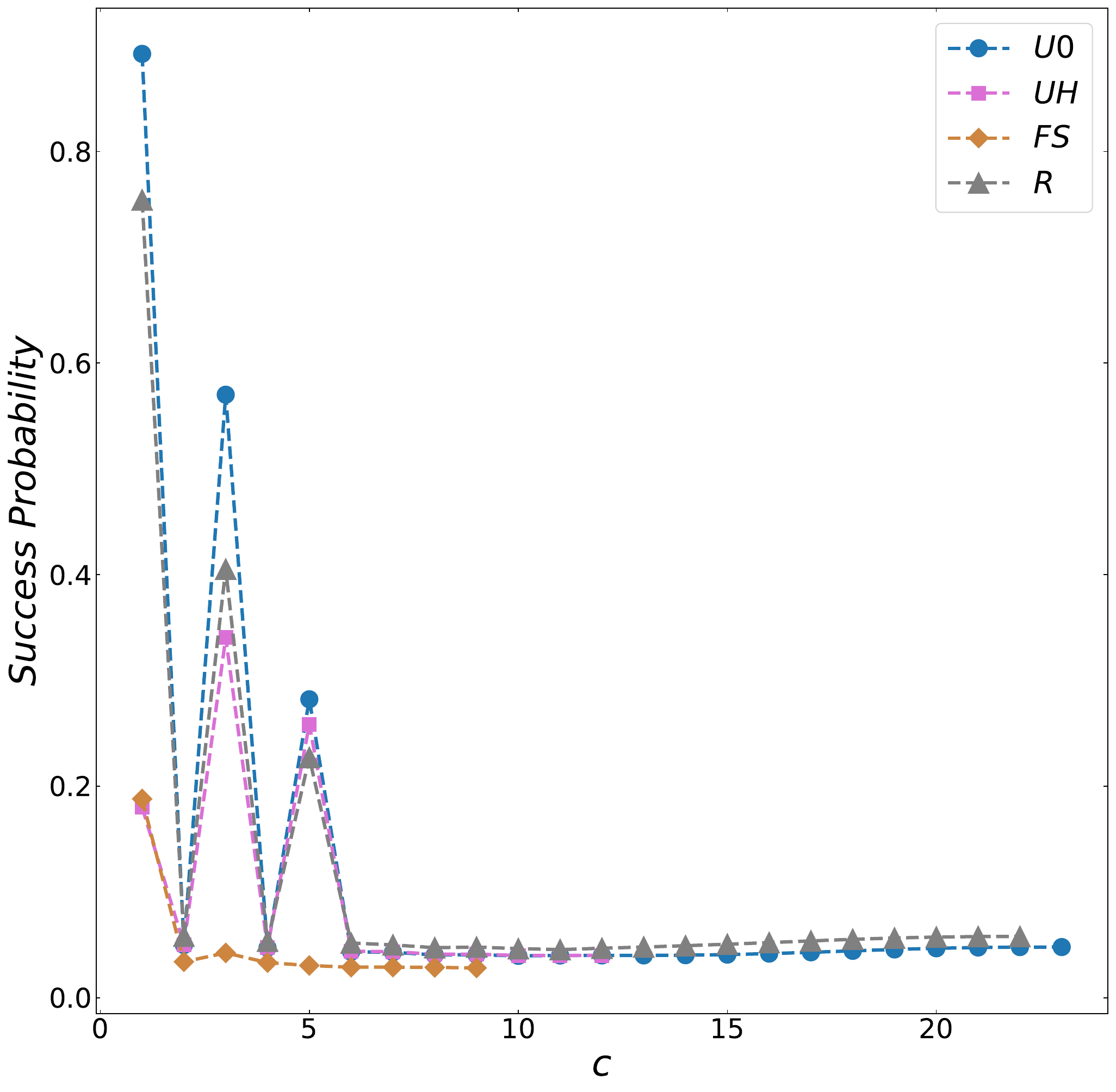}
    \caption{}
    \label{fig:parameter_scheme_2_vary_c}
    \end{subfigure}
    \\
    \begin{subfigure}{0.3\textwidth}
    \centering
    \includegraphics[width = \textwidth]{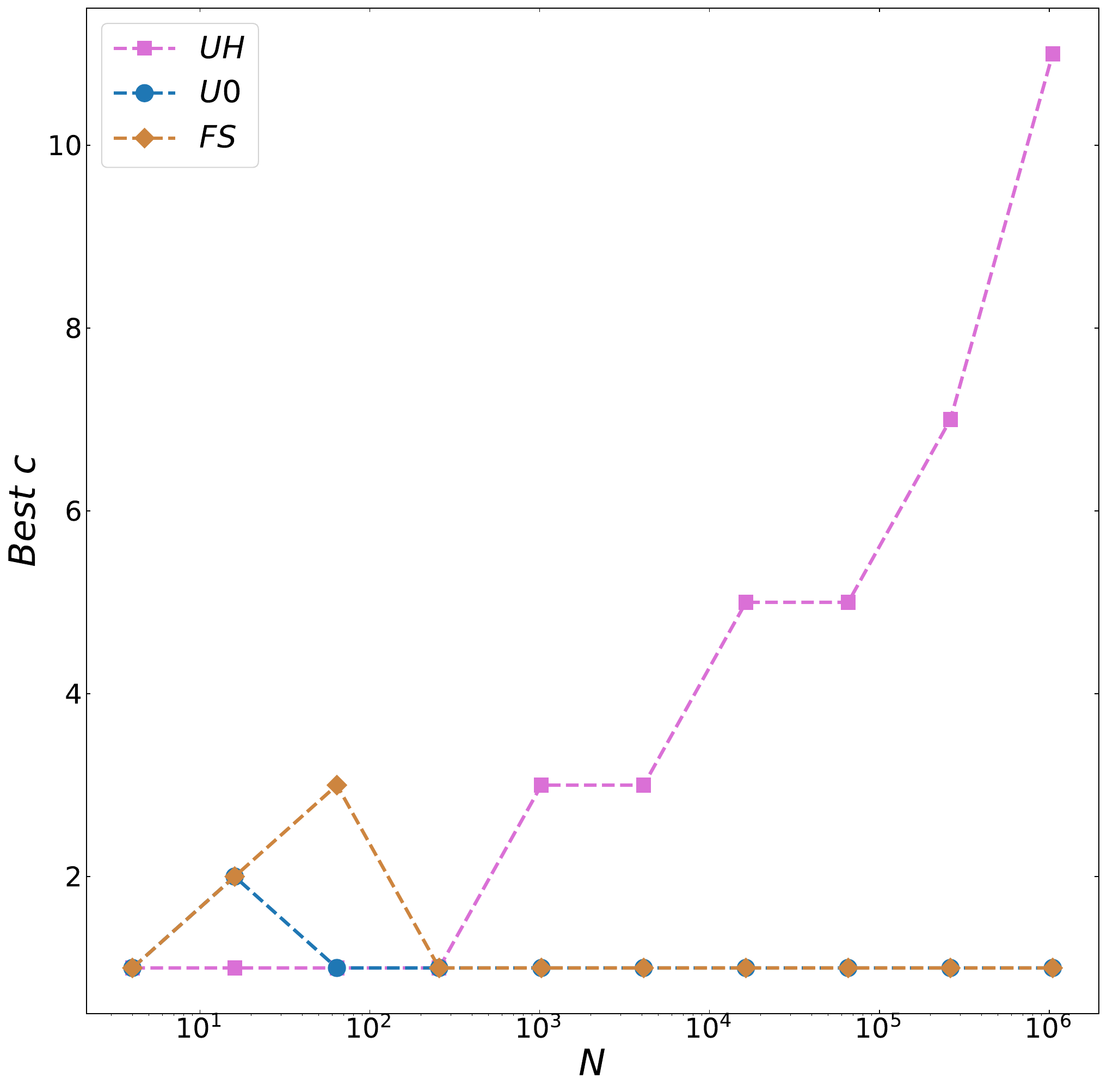}
    \caption{}
    \label{fig:parameter_scheme_2_vary_N_best_c}
    \end{subfigure}
    \begin{subfigure}{0.3\textwidth}
    \centering
    \includegraphics[width = \textwidth]{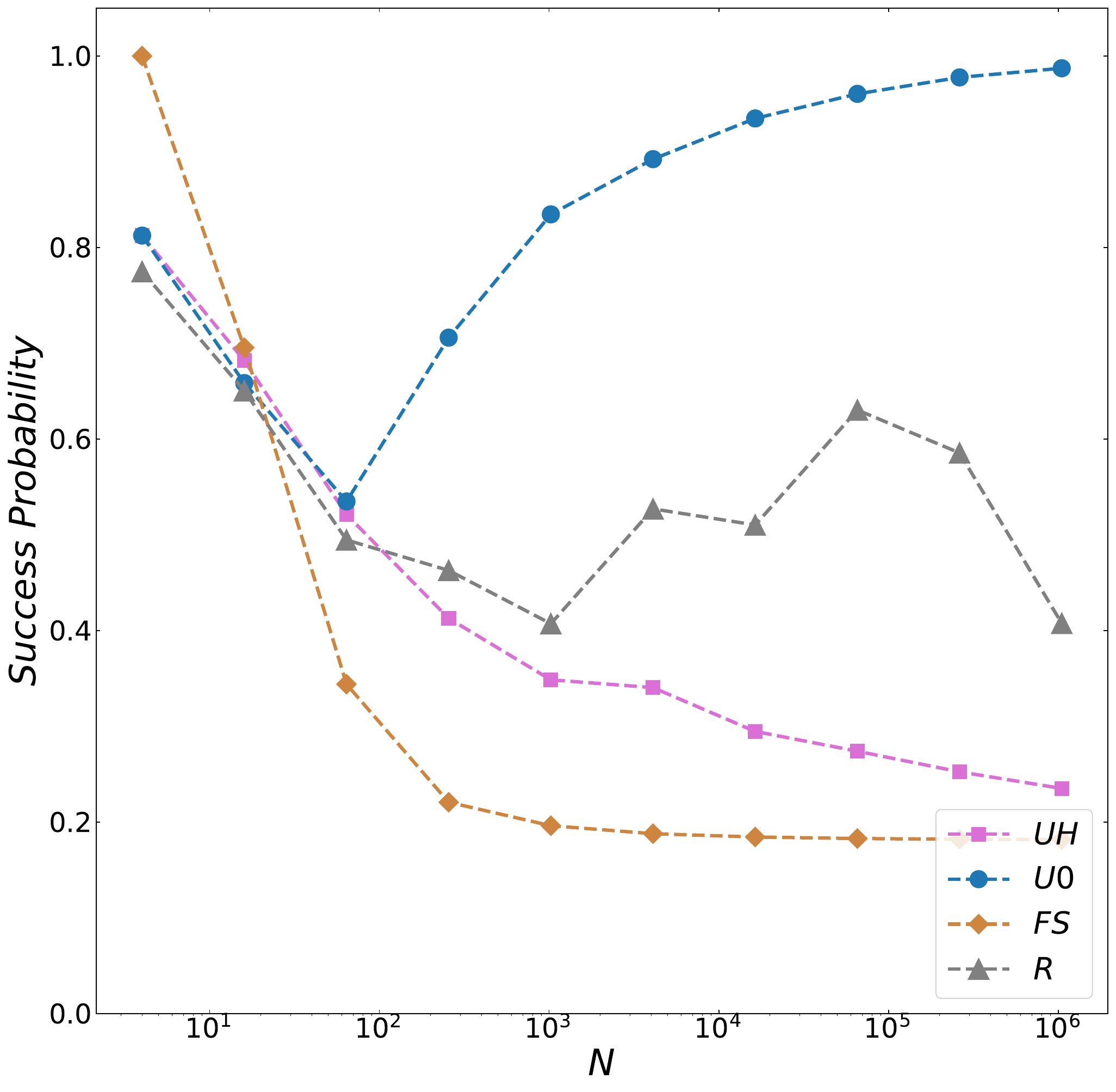}
    \caption{}
    \label{fig:parameter_scheme_2_vary_N_sucess_prob}
    \end{subfigure}
    \begin{subfigure}{0.3\textwidth}
    \centering
    \includegraphics[width = \textwidth]{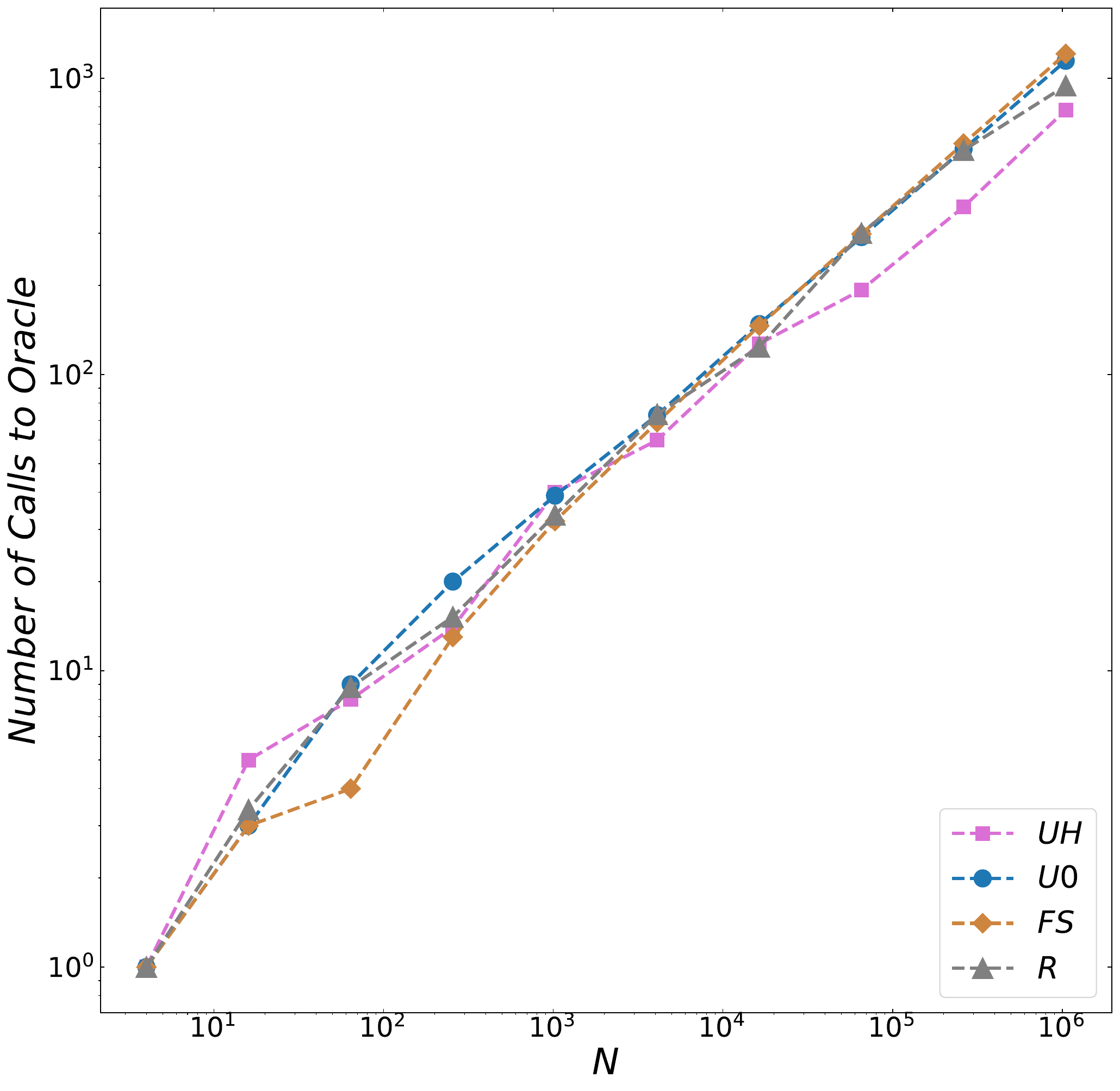}
    \caption{}
    \label{fig:parameter_scheme_2_vary_N_num_calls}
    \end{subfigure}

    \caption{Quantum dueling under parameter $\alpha_{i}=1, \beta_{i}=c$. All presented plots have the same configurations as in Fig.~\ref{fig:parameter_scheme_1}.}
    \label{fig:parameter_scheme_2}
\end{figure*}

\subsection{Quantum Dueling under Locally Optimal Parameters}
\label{sec:parameter_heuristics}

\begin{algorithm}
    \caption{Classical Heuristic Algorithm to Find Locally-optimal Parameters}
    \label{alg:heuristic}
    \SetKw{Continue}{continue}
    \SetKw{Clear}{clear}
    \SetKw{KwOr}{or}
    \KwIn{functions $f$ and $v$, a termination condition and a pruning condition, a maximum search depth $\mathsf{depth}$}
    Initialize $|\psi\rangle$ in classical simulation\;
    \Repeat{\text{termination condition is met}}{
        $|\psi'\rangle\leftarrow|\psi\rangle$ \;
        \Clear $|\psi_\text{best}\rangle$\;
       \For{binary string $s$ of length $\mathsf{depth}$ that passes the pruning condition}{
            \For{$i\leftarrow 1$ \KwTo $\mathsf{depth}$}{
                \lIf{$s_i = 0$}{$|\psi'\rangle\leftarrow\mathcal{G}_{1\leftarrow 2}|\psi'\rangle$}
                \lElse{Apply $|\psi'\rangle\leftarrow\mathcal{G}_{2\leftarrow 1}|\psi'\rangle$}
            }
            \uIf{$|\psi_\text{best}\rangle$ is uninitialized \KwOr success probability of $|\psi'\rangle$ is better than that of $|\psi_\text{best}\rangle$}{
            $|\psi_\text{best}\rangle\leftarrow|\psi'\rangle$\;
            }
       }
       $\ket{\psi}\leftarrow\left|\psi_\text{best}\right\rangle$\;
    }
\end{algorithm}

In Sec~\ref{sec:parameter_scheme_1} and Sec~\ref{sec:parameter_scheme_2}, we found that quantum dueling exhibits a strong performance even for naive parameter schemes. However, it is worth noting that our analysis focuses on non-extreme solution distributions where, when sorted by target function value, the solutions are dispersed into non-solutions in some manner. For cases where all solutions are cramped together without separation, our analysis in Sec.~\ref{sec:general_understanding} shows that quantum dueling, at least for $\alpha_i = \beta_i = 1$, effectively loses efficacy. 

A natural question to ask is for these extreme cases, is it possible that there still might be ways to achieve a significant boost of success probability with a proper selection of hyper-parameters? In other words, we would like to simulate how quantum dueling would perform for a near-optimal parameter selection without the restriction of specific parameter schemes. Then, we would test our simulation against these extreme solutions to explore the limit of quantum dueling. 

Since our research is primarily aided by classical simulation, our purpose is simply to demonstrate the limit of quantum dueling without any practical consideration. We are free to use the benefit of classical simulation, i.e., the ability to store and copy quantum states with knowledge of the entire state vector. Using the binary string $s$ formulation as discussed at the start of Sec~\ref{sec:parameter}, one of the approaches to finding a near-optimal parameter is a greedy strategy that iteratively finds the best sequence of gates to apply in the next $\mathsf{depth}$ gates. If needed, such a search can be pruned to situations where the switch from one gate to another is sparse in the $s$ string, reflecting the $\{\alpha_i\}, \{\beta_i\}$ formulation of hyperparameter. \footnote{In particular, with parameters defined in Algorithm~\ref{alg:heuristic}, let $\mathsf{change}$ be the number of locations $i$ in $s$ where $s_i \neq s_{i-1}$, we can set the pruning condition to be $\mathsf{change} < \mathsf{change\_limit}$, where $\mathsf{change\_limit}$ is some parameter. In principle, this would reduce the computational load and allow us to estimate quantum dueling's performance for an increasing $N$. An example diagram can be found in \cite{codes_on_github}.} The exact algorithm is sketched in Algorithm~\ref{alg:heuristic}.

\begin{figure*}
    \centering 
    \begin{subfigure}{0.45\textwidth} 
        \centering
        \includegraphics[width=\linewidth]{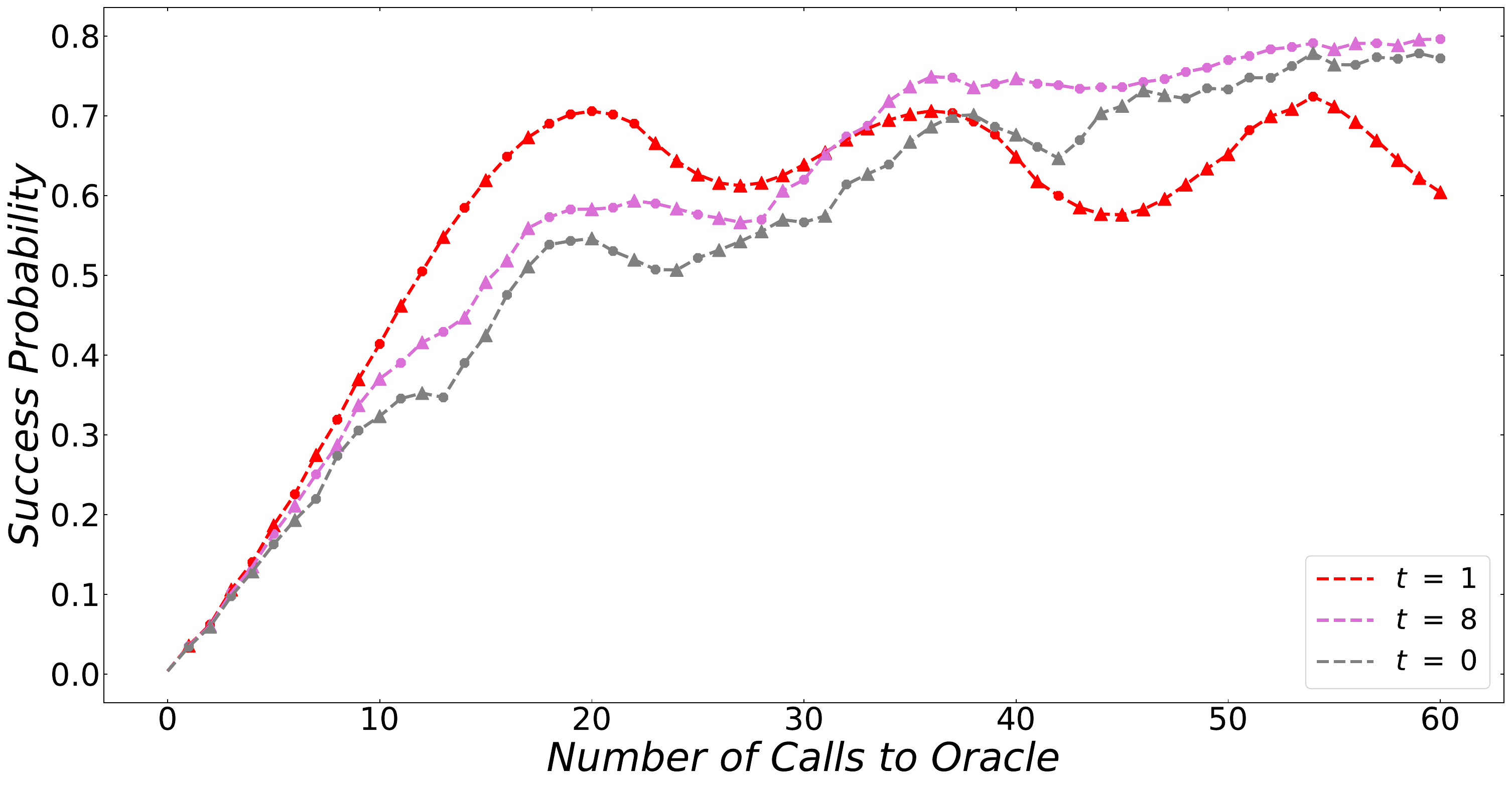}
        \caption{}
        \label{fig:heuristics_uniform_t}
    \end{subfigure}
    \begin{subfigure}{0.45\textwidth}
        \centering
        \includegraphics[width=\linewidth]{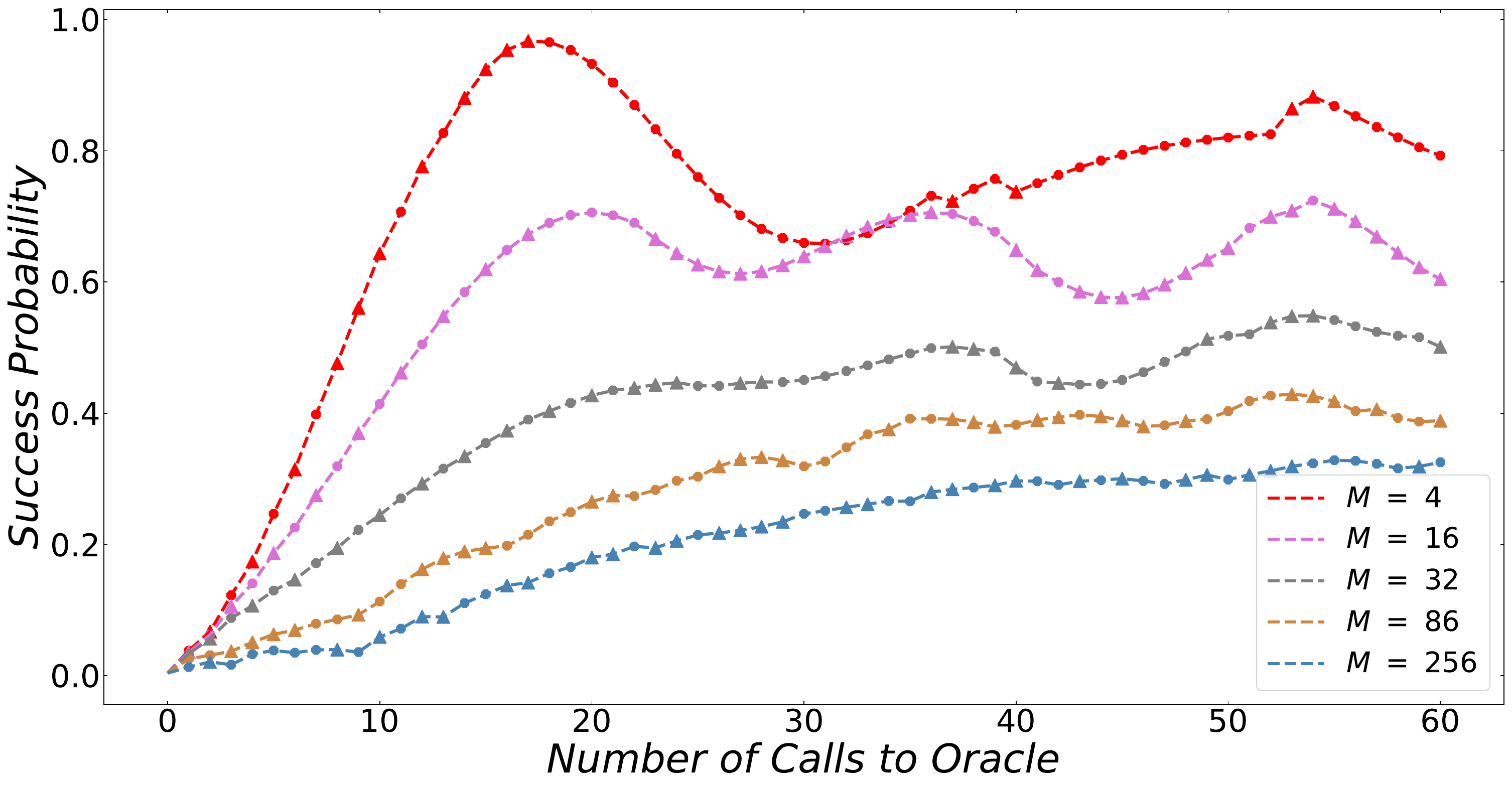}
        \caption{}
        \label{fig:heuristics_uniform_M}
    \end{subfigure}
    \\
    \begin{subfigure}{0.93\textwidth}
        \centering
        \includegraphics[width=0.98\linewidth]{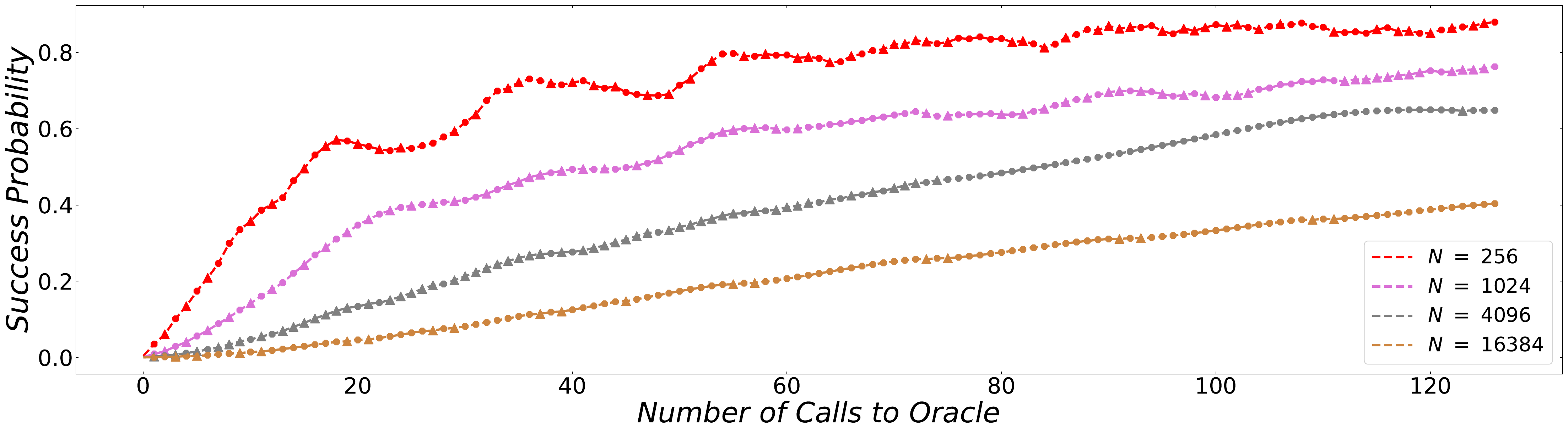}
        \caption{}
        \label{fig:heuristics_uniform_N}
    \end{subfigure}

    \caption{Quantum dueling under near-optimal parameters. For uniform solution distribution, measure function $v(x) = x$, and no pruning of the search tree, we ran the classical heuristic algorithm under $\mathsf{depth}=18$ for three separate sets of data and recorded the evolution of success probability with respect to each application of dueling gate. Data points that are achieved after applying $\mathcal{G}_{1\leftarrow 2}$ are denoted using \CIRCLE, and those achieved after applying $\mathcal{G}_{2\leftarrow 1}$ {\large\UParrow}. (a) $N=256$, $M = 16$, $f(x) = [x \equiv t\pmod{16}]$, where $t$ is 1,8, and 0, respectively. (b) $N=256$, $f(x) = [x - 1 \equiv 0 \pmod{\left\lceil\frac{N}{M}\right\rceil}]$, where $M$ is 4, 16, 32, 86, and 256 respectively. (c) $M = \left\lceil  \sqrt{N} \right\rceil$, $f(x) = [x \equiv \left\lfloor \frac{1}{2} \left\lceil \frac{N}{M} \right\rceil \right\rfloor \pmod{\left\lceil  \frac{N}{M} \right\rceil }  ]$, where $N$ takes the value 256, 1024, 4096, 16384, respectively.}
    \label{fig:heuristics_uniform}
\end{figure*}

Before we tackle these intractable cases, first, we run Algorithm~\ref{alg:heuristic} against several different forms of uniform solutions distribution and summarize the results in Fig.~\ref{fig:heuristics_uniform}. It is clear that under such an optimal scheme, quantum dueling can significantly boost success probability towards a high probability, usually to a better extent than the parameter schemes discussed in Secs.~\ref{sec:parameter_scheme_1} and \ref{sec:parameter_scheme_2}. Note that in Fig.~\ref{fig:heuristics_uniform_t}, for uniform solutions distribution where the element with smallest measure function value happens to be the solution, the heuristics algorithm is able to find the best parameter to be $\alpha_i = \beta_i = 1$ before the maximum probability is reached. \footnote{It is worth noting that for a sufficiently large $N$, Algorithm~\ref{alg:heuristic} is significantly bounded by the $\mathsf{depth}$ parameter. As a result, the algorithm can no longer find $\alpha_i = \beta_i = 1$. See \cite{codes_on_github}.}  Moreover, as shown in Fig.~\ref{fig:heuristics_uniform_N}, for an increasing $N$, we observed longer continuous sequences of $G_1$ and $G_2$ being applied. This justified the $\{\alpha_i\}, \{\beta_i\}$ formulation of hyper-parameter as opposed to the binary string $s$ formulation. Lastly, as seen in Fig.~\ref{fig:heuristics_uniform_M}, the use of better parameters found by heuritics algorithm alleviates the collapses of efficacy at $M = \frac{N}{4}$ as discovered in Sec.~\ref{sec:general_understanding}.

With this in mind, we then test the aforementioned perplexing solution distributions, along with some other non-uniform distributions, and present the results in Fig.~\ref{fig:heuristics_intractable}. It is clear that with proper parameters, quantum dueling manages to achieve a significant boost of success probability regardless of solution distribution, even for these onerous cases. Note that the distribution where all non-solutions have function values higher than solutions is of particular interest to us. Since our oracle is assumed to distinguish between solutions and non-solutions, we can arbitrarily shift the $v$ distributions of all solutions by alternating the construction of the oracle using quantum arithmetic. In principle, we can thus morph all forms of solution distributions to such a distribution. Our simulation suggests that there exists a parameter scheme for this problem that is sufficiently good. If future research can make such a scheme explicit, we will obtain a universal method for all combinatorial optimization and can potentially offer a lower bound for the performance of dueling.

For the worst-case scenario, where all solutions have measure function values higher than all non-solutions and therefore cannot be boosted, quantum dueling can still boost the success probability towards a high constant, albeit at a much slower rate. This result has defied our expectations, and fullying understanding this effect remains an open question.

Lastly, in some cases, a decrease in success probability is observed in order to achieve the maximum, increased success probability after $\mathsf{depth}$ moves. This seems to suggest that aggregations of locally optimal decisions do not necessarily result in the globally optimal decision, because in such cases the probability may experience a longer decrease first. Therefore, the parameter suggested by Algorithm~\ref{alg:heuristic} is not necessarily the optimal parameter and should be taken with a grain of salt.

Overall, this section discussed the performance of quantum dueling with different parameter setups in detail. While quantum dueling in its definition requires a set of hyperparameters, Secs.~\ref{sec:parameter_scheme_1} and \ref{sec:parameter_scheme_2} suggest even a naive scheme can help to find parameters that enable dueling to boost success probability to a sufficient value. In these examples, the degree of freedom in the hyperparameter is essentially reduced to 2, which, in practice, can be determined via manageable methods. Furthermore, in Sec.~\ref{sec:parameter_heuristics} we investigated a near-optimal parameter scheme that could possibly be approached with the use of more complex parameter-determining methods such as machine learning. In this case, quantum dueling will exhibit an even stronger performance for almost all, if not all solution distributions, becoming a powerful tool for generic combinatorial optimization.


\begin{figure*}
    \centering
    \begin{minipage}{\textwidth} 
        \includegraphics[width=\textwidth]{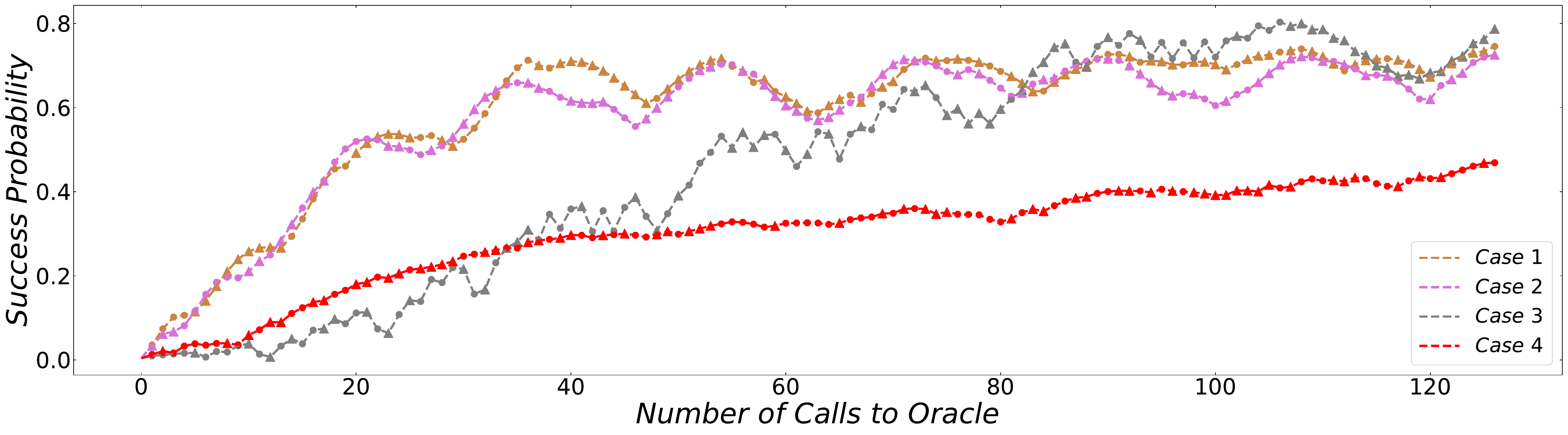}
        \caption{Quantum dueling for different solution distributions. We ran heuristics under $\mathsf{depth} = 18$ to collect relevant data. Data points that are achieved after applying $\mathcal{G}_{1\leftarrow 2}$ are denoted using \CIRCLE, and those achieved after applying $\mathcal{G}_{2\leftarrow 1}$ {\large\UParrow}. Solution distribution in case 1 is $f(x) = [\exists y\in\mathbb{N}^* \; x = y^2]$; solution distribution in case 2 is $f(x) = [1 \leqslant x \leqslant M]$; solution distribution in case 3 is $f(x) = [N-M+1\leqslant x \leqslant N]$; solution distribution in case 4 is $f(x) = 1$.}
        \label{fig:heuristics_intractable}
    \end{minipage}
\end{figure*}

\section{Application}
\label{sec:application}

\subsection{Quantum Dueling without Solution Labeling}
\label{sec:opnl}

The solution distributions discussed in the earlier sections of this paper are often idealized and not reflective of real-world scenarios. In this section, we shift our focus to more practical distributions. Quite often, we encounter cases when all elements in the search space are labeled as solutions or, equivalently, when $M = N$. In this case, the combinatorial optimization problem defined in Sec.~\ref{sec: algorithm} could be reduced to simply finding an element $x$ that minimizes the measure function, namely: 
\begin{equation}
    \label{eq:new_prob_def}
    v\left( x \right) = \min \Set{v\left( y \right)}{y \in S}
\end{equation}

As an example, we consider the specific case when all elements share distinct measure function values. In other words, there is no degeneracy of the measure function $v(x)$.

As previously outlined in Sec.~\ref{sec:general_understanding}, when confronted with the challenge of quantum dueling's diminished effectiveness in scenarios where $M = N$, we proposed two potential solutions. One is to manually add ancilla qubits, sparsing out the solution distribution in the new search space, while the other is to design better parameter schemes. The latter approach has already been discussed in Sec.~\ref{sec:parameter}. In this section, we focus on the former approach.

Formally speaking, we could add $\Delta n$ ancilla qubits to the system, extending the original search space $S$ into $S'$. In this manner, we can define an extra parameter $\eta = \frac{M}{N}$, indicating the ratio of the number of solutions to the total number of elements in the search space. As the number of ancilla qubits $\Delta n$ increases, the ratio $\eta = \frac{M}{N} = \frac{1}{2^{\Delta n}}$ decreases monotonically, so that we are free to choose $\eta$ as the control parameter. Originally, the size of the search space $S$ is $N$, with $N = M$. In this section, to avoid confusion, we use $M$ to represent the size of the original search space, also corresponding to the number of solutions in the augmented search space $S'$. This is due to the fact that we could manually set all the new elements $x'$ added to the original search space to be non-solutions, which means $f(x') = 0$, $\forall \, x' \in S' \setminus S$, where $\setminus$ is the notation for set difference. 


In principle, we can set the value of $v(x')$ arbitrarily. The goal is to achieve a solution distribution such that quantum dueling shows robust performance. For all $x$ and its corresponding $x'$, we can set $v(x')$ to be any value in $\left( v(x), v(x) + \Delta v(x) \right)$. In that way, by results in Sec.~\ref{sec:cluster_representation}, we will end up with a solution distribution that effectively equals $\mathsf{U0}$.

In our simulation, we execute Algorithm~\ref{alg:dueling_iterative} in the augmented search space $S'$ with ancilla qubits added with parameter scheme $\alpha_i = \beta_i = c$. \footnote{The algorithm can be further optimized utilizing the known current best answer, in the similar fashion as Grover Adaptive Search in Appendix~\ref{app:GAS}.}$^,$\footnote{since our solution distribution is effectively $\mathsf{U0}$ this means $\alpha_i = \beta_i = 1$}  We set the number of oracle calls per round $T$ to be the number of calls when the first maximum probability $p_{max}$ is achieved in Algorithm~\ref{alg:dueling_modified}.  Our goal is to achieve some constant success probability $P_{\text{bar}}$. Therefore, we set $a = \left\lfloor \log_{1 - p_{max}} \left( 1 - P_{\text{bar}} \right)  \right\rfloor$ and $T_{\text{last}}$ to correspond to the time when the combined success probably reaches the threshold $P_{\text{bar}}$. We then perform Algorithm~\ref{alg:dueling_iterative} with the $\eta$ value that minimizes the total number of calls to oracle.

As we add more ancilla qubits, making $\eta$ smaller, we expect an increase of the first maximum success probability $p_{\text{max}}$, a corresponding increase of the number of calls to the oracle per round $T$, as well as the total number of calls to the oracle $\mathsf{num}$. In particular, we are interested in the growing trend of the total number of calls to the oracle $\mathsf{num}$ with respect to the size of the original search space $M$ under the optimized $\eta$ for each $M$, because this trend might give us a hint into the computational complexity of quantum dueling under the naive parameter scheme $\alpha_{i} = \beta_{i} = c$ when all $M=N$ originally.   

\begin{algorithm}
        \caption{Iterative Quantum Dueling}
        \label{alg:dueling_iterative}
        \SetKw{False}{false} 
        \SetKw{True}{true}
        \SetKw{Break}{break}
        \KwIn{Functions $f$ and $v$; Success probability threshold $P_{\text{bar}}$; Total number of complete rounds $a$; The stop timer $T$ for each complete round; The stop timer $P_{\text{last}}$ for the last round; Two sufficiently long integer sequences: $\{\alpha_{i}\}$ and $\{\beta_{i}\}$. }
        \KwOut{An element $x$ in the search space $S$ satisfying $f(x) = 1$ and $v(x)$ (approximately) minimized; the total number of oracle calls $\mathsf{num}$}
        $\mathsf{num} \leftarrow 0$ \;
        \For{$i \leftarrow 1$ \KwTo $a$ }{
        Perform Algorithm~\ref{alg:dueling_modified} with parameters: $\{\alpha_{i}\}$, $\{\beta_{i}\}$ and $T$ \;
        }
        Perform Algorithm~\ref{alg:dueling_modified} with parameters $\{\alpha_{i}\}$, $\{\beta_{i}\}$ and $P_{\text{last}}$ \;
        $\mathsf{num} \leftarrow a*T + T_{last}$ \;
        Measure both registers. Let the result in the first register be $x_1$ and the result of the second register be $x_2$. Output the better result.
\end{algorithm}

\begin{figure}
    \centering
    \includegraphics[width = \columnwidth]{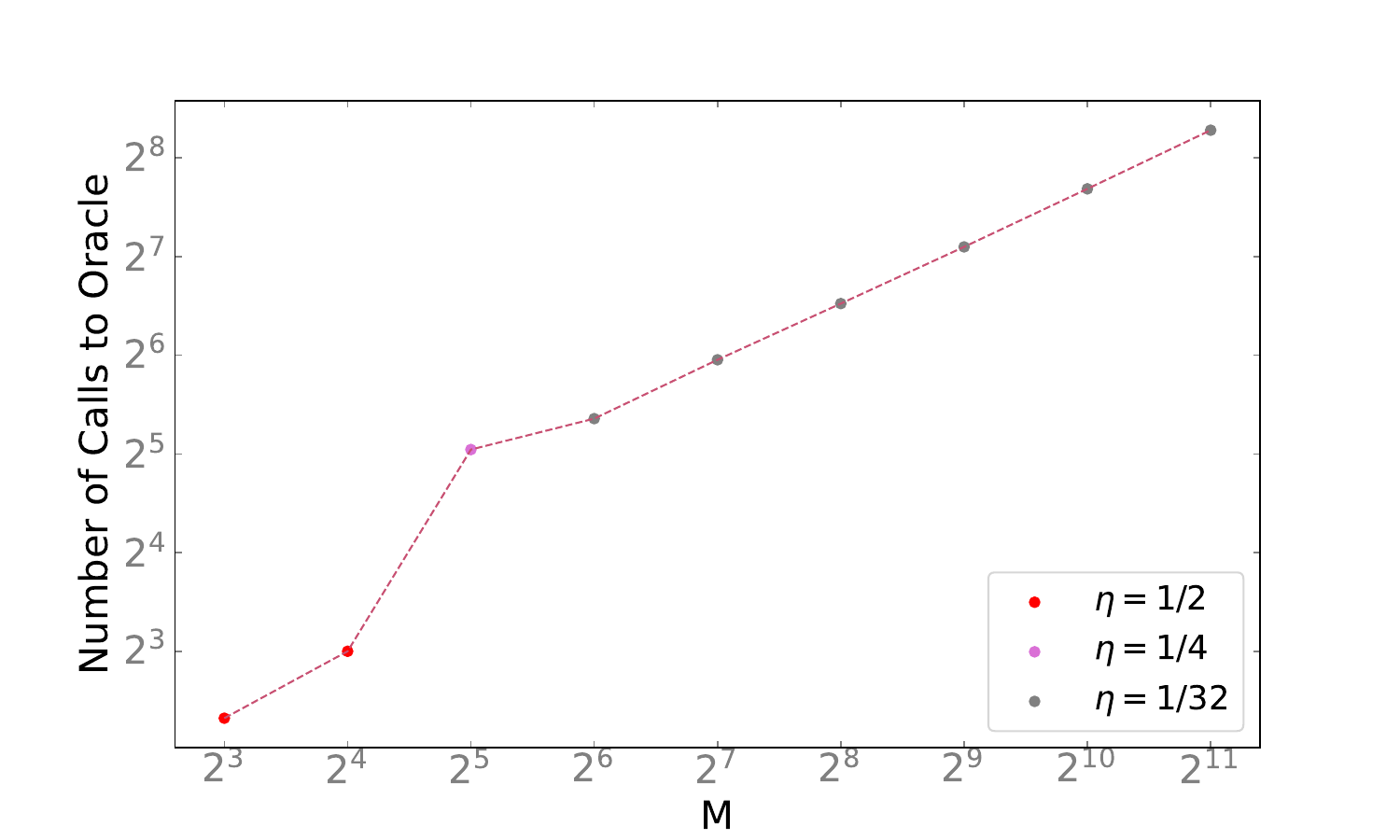}
    \caption{Log-log plot of the number of calls to the oracle with respect to the size of the original search space $M$. For any element $x$ in the original search space $S$, we have the measure function difference of adjacent elements $\Delta v(x) = 2$. As we add ancilla qubits, for every $x \in S$, we add its corresponding $x'$ elements with measure function values $v(x') = v(x) + 1$. We then run Algorithm~\ref{alg:dueling_iterative} under the parameter scheme $\alpha_{i} = \beta_{i} = c$. $P_{\text{bar}}$ is set to be $\frac{2}{3}$. We execute the algorithm under different control parameters $\eta$ and obtain the most optimal $\eta$ for each $M$.}
    \label{fig:opnl_m}
\end{figure}

Fig.~\ref{fig:opnl_m} illustrates the increasing trend of the total number of calls to the oracle with respect to the size of the original search space $M$, given the optimal selection of control parameters $\eta$. Notably, as $M$ enlarges, the increase in oracle calls aligns with an $O(\sqrt{M})$ trend. This suggests that the strategy of adding ancilla qubits is effective for large-scale initial search spaces, reflecting on the computational complexity considerations.

\subsection{Quantum Dueling for PMAX-3SAT}
\label{sec:PMAX_3SAT}

Besides the specific case when $M = N$ mentioned in Sec.~\ref{sec:opnl}, we also want to apply quantum dueling to a practical combinatorial optimization problem, namely PMAX-3SAT.

In PMAX-3SAT, we are given a set of boolean variable $x_1, x_2, \cdots, x_n$, and a literal is one single variable in its original form $x$ or negation form $\neg x$. The disjunction of literals form clause $C$, and since this is 3SAT so each clause contains exactly 3 literals, e.g. $C = x_1 \lor x_2 \lor \neg x_3$. The final formula, denoted as $e$, is a set of such clauses in conjunction, e.g. $F = C_1 \land C_2 \land \cdots \land C_m$ where part of the formula $F_1$ is used to determine whether an assignment satisfies the formula, and the other part $F_2$ is to determine how optimal an assignment is if it satisfies $F_1$. Under PMAX-3SAT, our goal is to find an assignment $x_{1} x_{2} \cdots x_{n}$ that satisfies $F_1$ and satisfies the most number of clauses in $F_2$.

In the encoding stage, our search space has size $N = 2^{n}$, where $n$ is the number of variables. This means that there are in total $N$ possible assignment configurations. We define $y_{i}$ as a binary indicator. When clause $C_{i}$ is satisfied, $y_{i}$ takes the value of 1, otherwise, $y_{i} = 0$. In practice, each assignment $X$ within this search space is a solution if it satisfies the first half of F, $F_1$, which means $\sum\limits_{0 \le i < m'} y_{i} = m'$ where $m'$ is the number of clauses in $F_1$. Furthermore, the measure function for each assignment $X$ is formed by $u(X) = \sum\limits_{m' \le j < m} y_{j}$, which is the number of clauses that are satisfied in $F_2$. In PMAX-3SAT, we want to maximize $u(X)$, and since our formulation of quantum dueling minimizes the measure function, in practice we set the measure function to $v(X) = -u(X)$.

For simplicity, we want to randomly generate PMAX-3SAT formulas and restrict the number of solutions M to be around $\sqrt{N}$. Therefore, we use the fraction $\frac{M}{N}$ to determine the number of clauses we have for the generated formula. Specifically, since for each assignment, the probability that it satisfies a single clause is $\frac{7}{8}$, so each clause is satisfied by around $\frac{7}{8}$ fraction of all possible assignments. Therefore, the number of clauses $m'$ in $F_1$ determines the expected M, that is $E(M) = N \left( \frac{7}{8} \right)^{m'}$. We get $m' = \left\lceil \log_{\frac{7}{8}} \left( \frac{1}{\sqrt{N}} \right) \right\rceil$. We then decide to make $m = 2m'$ so that we have same amount of clauses in $F_2$ to determine how optimize a solution is, so we generate $2\left\lceil \log_{\frac{7}{8}} \left( \frac{1}{\sqrt{N}} \right) \right\rceil$ number of clauses for $N = 1, 2, 4, \cdots, 4096$.

\begin{figure}
    \centering
    \includegraphics[width = \columnwidth]{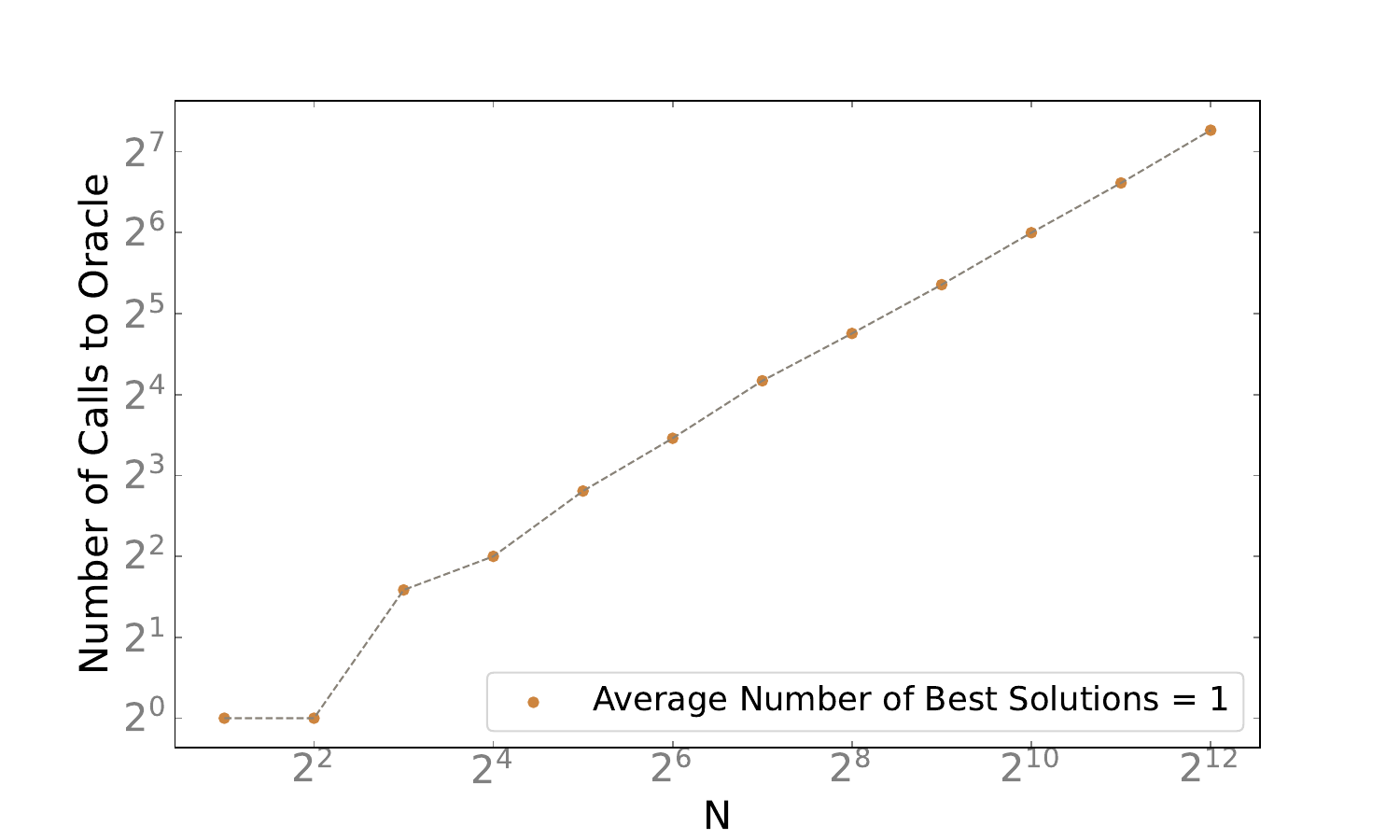}
    \caption{Log-log plot of the number of calls to the oracle $T$ with respect to the size of the search space $N$. Algorithm~\ref{alg:dueling_iterative} is executed iteratively under the parameter scheme $\alpha_{i} = \beta_{i} = c$ to obtain a constant threshold success probability $P_\text{bar} = \frac{2}{3}$. We randomly generate 10000 cases of Boolean formulas for each N. We then take the average value of the total number of calls to the oracle when $N$ is fixed.}
    \label{fig:P3SAT}
\end{figure}

The result of our program simulation is shown in Fig.~\ref{fig:P3SAT}, from which we observe a complexity trend around $O(\sqrt{N})$. Although a more exact determination of time complexity requires running with higher N and more possible parameter scheme, the trend shown in the graph confirms our conclusion that quantum dueling under some parameter scheme has the potential to solve PMAX-3SAT problem efficiently. Furthermore, a lot of known non-polynomial problem reduce to PMAX-3SAT. Being able to solve PMAX-3SAT efficiently proves quantum dueling's potential to solve more non-polynomial problems that are ``hard'' to solve by classical computers.

\section{Discussion}
\label{sec:discussion}

\subsection{Mechanisms of Dueling}
\label{sec:dueling_mechanism}

In previous sections, we have presented many methods to understanding the performance of quantum dueling in a wide range of circumstances. However, one key question remains: How does the algorithm actually work?

One main question regarding quantum dueling is the fact that when we repeatedly apply Grover gates, after $O(\sqrt{\frac{N}{M}})$ times the states will be ``overcooked.'' If we try to apply Grover gates again, the corresponding quantum states will no longer be optimized but worse. 

As an intuitive answer, quantum dueling circumvents this problem because the number of Grover iterations required for such a phenomenon to happen is not the same when we choose different elements from the search space as the ``opponent'' in dueling. Consider different $y \in S$ and corresponding $G_y$ gates as used in Eq.~(\ref{eq:redefinition_of_oracles}). For better $y$, the number of rotations required for overcooking is larger. As long as $\{\alpha_i\}$ and $\{\beta_i\}$ are sufficiently small compared with $\sqrt{N}$, by the time that overcooking takes place among the worse elements, the amplitudes are already boosted away towards the more desirable elements, so this effect generally does not occur before quantum dueling already boosted the amplitudes to the largest.

Another puzzling issue is that at times during the algorithm, we do not actually apply Grover gates on a uniform linear combination of all the solutions and also non-solutions. Rather, the components $\psi_{kl}$ follow an unknown trend, where its absolute value is usually decreasing with $k$ and $l$ representing worse and worse candidates for the search space. In such a case, the Grover amplifier should only have a boosting effect for the projection of the state onto the space spanned by the mean of solutions and the mean of non-solutions identified by the oracle. In that way, we should expect the Grover gates not to have boosted quantum states significantly such that the success probability can by any means be boosted to a significant value. 

Our results, especially those in Sec.~\ref{sec:parameter}, show the opposite. While in some extreme cases, a naive parameter scheme fails to boost success probability significantly, in almost all other situations, quantum dueling boosted success probability towards a significant value even for a large $N$. It seems that the iterative procedures of switching between $\mathcal{G}_{1\to 2}$ and $\mathcal{G}_{2\to 1}$ gates somehow hybridized the state in a way that the issue raised is completely circumvented. To truly understand how this can happen, we need to find an approximate solution for the evolution of the state vector. In fact, we believe solving such a task is not only extremely important for the quantum dueling algorithm itself, but would be interesting mathematical research that might offer new insights to the field of quantum computing in general.

\subsection{Generalizations of Quantum Dueling}
\label{sec:generalization}

In Sec.~\ref{sec:parameter}, we found that while quantum dueling in principle relies a sequence of external parameters that need to be fixed, the degree of freedom in these parameters can be drastically reduced with a good parameter scheme. In such an event, quantum dueling demonstrate a strong performance, exhibiting a quadratic speedup compared with classical search. Yet, are there any other ways to improve our result furthermore? For example, can we boost the success probability to arbitrarily high value without classical repetition of the algorithm as done in Sec~\ref{sec:application}?

As discussed in Sec.~\ref{sec:dueling_mechanism}, the main reason behind quantum dueling's dependency on solution distribution is not its design idea itself, but the underlying issues with Grover's algorithm. In Sec.~\ref{sec:hamiltonian_version}, we mentioned that quantum dueling can be further generalized with arbitrary controlled phase factors and also different mixer and problem Hamiltonian setups. Putting things further, quantum dueling should be even more generalized to the idea of augmenting the Hilbert space and transforming optimization into comparison, utilizing the underlying entanglement principles. If that is the case, in principle, any quantum search algorithm that can perform amplitude amplification can be used as an underlying mechanism for quantum dueling. For example, adiabatic search as algorithms as discussed in \cite{Roland_2003, Morley_2019} can be used in place of Grover gates in the construction of quantum dueling. If we can find a way to realize comparison via adiabatic means, it would be possible to realize quantum dueling even in NISQ (noisy intermediate-scale quantum) devices. A more thoroughly thought quantum search scheme can not only reduce quantum dueling's dependence on solution distributions and hyperparameters but also offer unexpected gains.


\subsection{Quantum Dueling as a Quantum Subroutine}

\label{sec:subroutine}

In Appendix~\ref{app:GAS}, we note that quantum dueling can, in some ways, be seen as a quantized version of Grover Adaptive Search (GAS), which also achieved a quadratic speedup---the theoretical limit of quantum computing for generic optimization problems.  This raises a pertinent question: if both algorithms attain similar speedups, what distinguishes quantum dueling from GAS? Why is such algorithm design important?

Quantum dueling's primary strength is its fully quantum nature, in contrast to the hybrid approach of GAS, which relies on periodic measurements to optimize the state based on the objective function. This intrinsic advantage of quantum dueling allows it to function as an integrated subroutine within broader quantum algorithms, a capability not feasible with hybrid models. As an example, consider a quantum circuit with two registers; so its basis states can be written as $\ket{x}\ket{y}$ in some Hilbert space $\mathscr{H}_x \otimes \mathscr{H}_y$. Given some entangled state, we would like to optimize only the second register based on some objective function that is dependent on both $x$ and $y$, say $v(x,y)$. In other words, starting with some $\ket{\psi_0} \in \mathscr{H}_x \otimes \mathscr{H}_y$ we would like to achieve some quantum state $\ket{\psi} \mathscr{H}_x \otimes \mathscr{H}_y$ such that for each $\ket{x}$, the projected state of $\ket{\psi_0}$, i.e., $\bra{x}\ket{\psi_0}$ is optimized to $\bra{x}\ket{\psi}$ \footnote{This is partial inner product.} such that $|\bra{y_\text{optimal}(x)}\left(\bra{x}\ket{\psi}\right)|^2$ is sufficiently large. Here $y_\text{optimal}(x)$ is a function of $x$ that represent the $y$ that optimize $v(x,y)$ for each $x$.

Note that such a task cannot be completed with a hybrid algorithm such as GAS. We cannot use measurement to obtain a current best guess and further optimize. Rather, a measurement on the register corresponding to $y$ will leave us with only one $y$ value and the corresponding projection to $\mathscr{H}_x$, losing the broader context provided by the quantum state. Quantum dueling, however, is a fully quantized algorithm without the need of measurement. Therefore, it can, in principle, complete such a task. There are two issues that need to be resolved. First, since quantum dueling would essentially be run parallel for all possible $x$, the best hyper-parameter for different $x$ values may be different. Choosing a single fixed hyper-parameter might not produce desirable results. This can be resolved by using a quantized, controlled operation to automatically determine which gate to apply based on the specifics of $x$, embedded within the quantum circuit itself. Of course, this would require enormous quantum computational power that is not currently practical. However, this is possible theoretically and realizable in the future. Second, there are no assumptions made about the initial state. In all our experiments we initialized the state to be a uniform linear combination of all possible elements from the search space, however, the initial state in this case will depend on the algorithm itself. It is unclear how quantum dueling will perform for a more generic initial state. For this issue, note that the dependency of quantum dueling does not originate from the idea of dueling itself, but the dependence on initial state in Grover search. Therefore, we can consider other quantum search algorithms as discussed in Sec.~\ref{sec:generalization}.

In the realm of optimization problems, it's rare to encounter scenarios devoid of any structure, where the only solution is a brute-force approach across all possible candidates. More often, these problems contain inherent properties that allow an efficient solution. While an algorithm like Grover Adaptive Search (GAS) offers a quantum-accelerated brute-force method, its application is confined to these generic cases. Quantum dueling, on the other hand, transcends this limitation. Its utility is not restricted to just accelerating brute-force searches but extends to enhancing any process aimed at identifying the best solution from a set of candidates,  even if such a process is quantum-based. Such versatility makes quantum dueling an invaluable tool not only for direct optimization but also as a subroutine in more complex quantum algorithms where an optimization process is required.

\section{Conclusion}
\label{sec:conclusion}

In this work, we proposed a quantum algorithm for combinatorial optimization. The algorithm, quantum dueling, operates on an augmented Hilbert space with twice the size of qubits. At each time, we can choose one register as a control to apply a controlled Grover gate to the other register, selecting states that are ``better'' than the control. In such a way, the state slowly evolves towards the optimal direction. We analyzed the performance of quantum dueling via computer simulation and offered some mathematical contraction that could potentially be useful to find an approximate solution for the evolution of the state vector.  Using such a contraction, we investigated the performance of quantum dueling by setting a variety of parameter selection schemes. We demonstrated that while quantum dueling requires a set of hyper-parameters, even a naive set of parameter schemes can result in an algorithm reaching significant success probability in a quadratic speedup compared with a classical brute-force search. We further tested our hypothesis to more practical scenarios such as the case where there is no solution labeling and PMAX-3SAT.

Overall, it is clear that quantum dueling opens a new route by which optimization problems can be realized, characterized by its efficiency, versatility, and general applicability. From a conceptual perspective, we demonstrated how, by multiplying the number of qubits used and utilizing the underlying entanglement principle, we can create previously unthought-of and interesting quantum algorithms that can offer new insights into theoretical quantum computing. Previously, such a procedure was only seen in quantum error correction, where several physical qubits are used to simulate one logical qubit \cite{Roffe_2019}. Our paper essentially applied such ideas to quantum algorithm design in general. Mathematically, there is so much to be learned. Can we truly understand how Algorithm~\ref{alg:dueling} somehow bypasses the apparent impossibilities to effectively boost the state vector and finally provide a quantitative solution? Can we find a way to understand how hyper-parameters impact the performance of quantum dueling and apply such methodologies to other variational algorithms such as QAOA (quantum approximate optimization algorithm)? Most importantly, can we apply the idea of dueling to more specific optimization problems with underlying structures, utilizing a fully quantized oracle of comparison and more in highly entangled memory space? Trying to answer these questions will open up new ideas that can help advance the development of quantum algorithm design.

Codes to reproduce the data and figures in this paper are available online \cite{codes_on_github}.


\appendix

\section{Grover Search}
\label{app:Grover_search}

\begin{algorithm}
\caption{Grover Search}
\label{alg:GS}
\SetKw{False}{false}
\SetKw{True}{true}
\KwIn{A parameter $r_{max}$ for maximum rotation counts}
\KwOut{An element $x$ in the search space satisfying $f(x)= 1$ (Approximately)}
$|\psi\rangle\leftarrow H^{\otimes n}|0\rangle$ \;
$|\psi\rangle\leftarrow \left(D O\right)^{r_{max}} |\psi\rangle$ \;
Measure the state $|\psi\rangle$, and output the result $x$.
\end{algorithm}

As summarized in Algorithm ~\ref{alg:GS}, the Grover search algorithm aims at finding a solution $x$ satisfying $f(x) = 1$ in the search space $S$ comprising of $N = |S| = 2^{n}$ elements, $M$ of them are the solutions we want. The algorithm operates on the $2^{n}$ - dimensional Hilbert Space $\mathscr{H}_{S}$ comprising of a set of orthonormal basis $B=\{\ket{x}|x\in S\}$.  

To illustrate the idea better, we represent the state vector $|\psi\rangle$ at any stage after initialization as a rotating vector in the 2-dimensional plane, the axes being the mean superposition of all solutions $|\varphi\rangle$ and non-solutions $|\gamma\rangle$. 
\begin{equation}
    \label{eq: mean sol non-sol}
    \left\{
    \begin{gathered}
    |\varphi\rangle = \frac{1}{\sqrt{M}}\sum_{x}^{f(x)=1} \; |x\rangle\hfill  \\
    |\gamma\rangle = \frac{1}{\sqrt{N-M}}\sum_{x}^{f(x)=0} \;  |x\rangle\hfill\\
    \end{gathered}
    \right.
\end{equation}

The mean superposition of all elements of the initial state $|\psi_{0}\rangle$ will be: 
\begin{equation}
    \label{eq: mean superposition}
    \begin{aligned}
    |\psi_{0}\rangle = |m\rangle &= \frac{1}{\sqrt{N}} \sum_{x} |x\rangle \\
    &= \sqrt{\frac{M}{N}} |\varphi\rangle + \sqrt{\frac{N-M}{N}} |\gamma\rangle \\
    \end{aligned}
\end{equation}

Originally, the initial state vector in Grover search is defined to be $\ket{m}$, the equal superposition of all basis states. However, it is suggested in \cite{Brassard_2002} that the initial state vector does not necessarily need to be the equal superposition of all basis, rather it could be in principle an arbitrary state vector $\ket{m_{0}}$ in the framework of generalized Grover search.
 
Therefore, we could interpret the process of Grover search by rotating the state vector in the two-dimensional vector space spanned by the basis vector $|\varphi\rangle$ and $|\gamma\rangle$. The state can be specified by an angle $\sigma$, which the rotating state vector makes with the axis $|\gamma\rangle$, with its initial value $\theta = \arcsin{\sqrt{\frac{M}{N}}}$.

In this regard, at any stage after initialization, the state vector $|\psi\rangle  = \sin{\sigma}|\varphi\rangle + \cos{\sigma} |\gamma\rangle $, and when applied to the Grover Operator $G=DO$, the state becomes: 
\begin{equation}
    \label{eq: apply G}
    \begin{aligned}
    G|\psi\rangle 
    &= \left(2 |\psi_{0}\rangle\langle\psi_{0}| -I\right) O \left(\sin{\sigma}|\varphi\rangle + \cos{\sigma |\gamma\rangle} \right)\\
    &= \left(2 |\psi_{0}\rangle\langle\psi_{0}| -I\right) \left(-\sin{\sigma}|\varphi\rangle + \cos{\sigma |\gamma\rangle} \right) \\
    &= \left(2 \left(\sin{\theta} |\varphi\rangle + \cos{\theta} |\gamma\rangle \right)\left(\sin{\theta} \langle\varphi| + \cos{\theta} \langle\gamma| \right) - I\right)\\
    &\phantom{=} \left(-\sin{\sigma}|\varphi\rangle + \cos{\sigma |\gamma\rangle} \right) \\
    &= \sin(2\theta + \sigma) |\varphi\rangle + \cos(2\theta + \sigma) |\gamma\rangle \\
    \end{aligned}
\end{equation}

From Eq.~(\ref{eq: apply G}), we could conclude that each time the state vector $|\psi\rangle$ is rotated by $2\theta$ towards the $|\varphi\rangle$ axis. After applying Grover Operator $r_{max}$ times, the angle which the state vector makes with the $|\gamma\rangle$ axis in the 2-dimensional plane becomes: 
\begin{equation}
    \label{eq: r times angle}
    \sigma_{max} = \left(  2r_{max} + 1 \right) \theta
\end{equation}

After conducting a measurement on the final state, the probability of finding one of the solutions is $\sin^{2}{\sigma_{max}}$, which is close to unity when $\sigma_{max}\approx \frac{\pi}{2}$. The least number of iterations needed for a near-optimal result is $r\approx \frac{\pi}{4}\sqrt{N}$, which means the time complexity of Grover search is $O(\sqrt{N})$. We require $N \gg 1$ for the above analysis to hold.

It has been shown in \cite{Boyer_1998}, \cite{Zalka_1999} that Grover search algorithm is optimal for any oracle-based quantum search scheme, with its time complexity reaching $O(\sqrt{N})$.

\section{Grover Adaptive Search}
\label{app:GAS}

Similar to what was mentioned in Sec.~\ref{sec: algorithm}, Grover adaptive search copes with the problem of finding an element $x$ in the search space, satisfying: 
\begin{equation}
    \label{eq:gas_goal}
    v(x)=\min\{v(y)| y\in S \wedge f(y)=1\}
\end{equation}

For simplicity, we will abbreviate Grover Adaptive Search for GAS.

The condition labeled in Eq.~\ref{eq:gas_goal} slightly differs from that proposed by Baritompa in the original version of GAS \cite{Baritompa_2005}, as we have not only taken into account the value of measure function $v(x)$ but also the value of the solution-indicator function $f(x)$, distinguishing between solutions and non-solutions.  

\begin{algorithm}
    \caption{Grover Adaptive Search}\label{alg:GAS}   
    \KwIn{Measure function $v$ represented in quantum arithmetic; parameters $k=1$, $\lambda > 1$}
    \KwOut{An element $\mathsf{best}$ in the search space satisfying $f(\mathsf{best})=1$ and $v(\mathsf{best})$ (approximately) minimized}
    
    Using Grover search with an unknown number of solutions as proposed in \cite{Boyer_1998} for a $\mathsf{best} \in S$ such that $f(\mathsf{best}) = 1$\;
    \Repeat{\text{a termination condition is met}}{ Randomly select rotation count $r_{i}$ from the set $\{0,1,\cdots,\lceil k-1 \rceil \}$ \;
    Initialize the state vector $|\psi_{0}\rangle \gets H^{\otimes n} |
    0\rangle$ \;
    Apply Grover search with $r_{i}$ iterations on $|\psi_{0}\rangle$ for a solution $\mathsf{new}$ such that $f(\mathsf{new}) = 1$ and $v(\mathsf{new}) < v(\mathsf{best})$ \;
    \eIf{$v(\mathsf{new}) < v(\mathsf{best})$}
    {$\mathsf{best} \gets \mathsf{new}$;  $v(\mathsf{best}) \gets v(\mathsf{new})$; $k \gets 1$ \;
    }
    {$k\leftarrow\lambda k$ \;
    }}
\end{algorithm}

Algorithm~\ref{alg:GAS} shows the process of GAS for solving combinatorial optimization problems formalized by \cite{Bulger_2003,Baritompa_2005}. In the algorithm, a best current guess $\mathsf{best}$ is repeatedly updated via a Grover search. Ignoring ancilla qubits, there are two components of memory storage in GAS:
\begin{enumerate}
    \item A quantum storage of $n$ qubits. In each iteration, it is reset for a Grover search.
    \item A classical storage documenting the current best guess.
\end{enumerate}

The classical storage stores the current best guess $\mathsf{best}$ after $i$ iterations as well as the measure function $v(\mathsf{best})$, and the quantum storage stores the state vector $\ket{\psi}$, which is initialized after each iteration. In the $i_{th}$ iteration, we apply Grover operator $G$ on the quantum storage $r_{i}$ times, followed by a measurement. And we update the current best guess $\mathsf{best}$ if the measured result is "better" than the current one stored in the classical storage. Thus, GAS is a combination of Grover and the classical adaptive search. 

According to \cite{Baritompa_2005}, the rotation counts should vary to avoid the breakdown of GAS for certain specific solutions in the search space $S$. The choice of the Grover rotation counts $r_{i}$ of the $i_{t h}$ iteration of GAS is determined by the parameter $k$, where $r_{i}$ is randomly chosen from the set $\{ 0, 1, \cdots, \lceil k-1 \rceil \}$. 

The performance of GAS is shown to be dependent upon the choice of parameters, especially the parameter $\lambda$, which serves as the scaling factor of $k$ when the result produced by one measurement is not satisfactory. Baritompa suggests in \cite{Baritompa_2005} via numerical simulations that when $\lambda = 1.34$, the performance of GAS reaches an optimal bound, with a relatively high success probability of finding the most optimized solution. He also shows that the time complexity of GAS is approximately $O(\sqrt{N})$.

To develop the idea of GAS to quantum dueling, the current best guess, originally stored in a classical memory in GAS, will be stored in a quantum register. This register thus contains an entangled state of potential solutions, with more optimized solutions having higher amplitudes. In each iteration, the states in both registers are updated, leading to increases in amplitude of more optimized solutions. Compared to GAS, quantum dueling performs its algorithm in a fully quantized system without classical memory involved. Consequently, there will no longer be a requirement for measurements at the end of each iteration. Therefore, the first register serves the same purpose as the second one, storing the current best guess inside, which leads to a symmetry in the algorithm design.

In some ways, the idea of quantum dueling is generated from the observations of the original Grover adaptive search by quantizing the entire search process using an augmented Hilbert Space.

\section{On Uniform Distribution}
\label{app:distribution}
In Figs.~\ref{fig:initial_simulation}--\ref{fig:heuristics_uniform_N} and Table~\ref{tab:initial_simulation}, the solution distribution often takes the following form:
\begin{equation}
    f(x) = \left[x \equiv t \pmod{s}\right]
\end{equation}
To make the formula more aesthetically appealing in edge cases such as when $s = 1$, we would also write it in forms such as:
\begin{equation}
    f(x) = \left[x - t \equiv 0 \pmod{s}\right]
\end{equation}
Another reason for this form is that when $t = 1$, $x-1$ will be the element's index starting from $0$, as commonly done in computer science. 

In Figs.~\ref{fig:initial_simulation_all_M}--\ref{fig:heuristics_uniform_N}, we often used the case $t=1$, mostly because quantum dueling behaves very regularly under this scenario. However, it must be noted that the $t=1$ case is overly optimistic and should always be taken with a grain of salt. 

Lastly, in our article, we have consistently used $M$ to denote the number of solutions. Without loss of generality let $1\leqslant t \leqslant s$, for $M$ and $s$ to be consistent, we must have
\begin{align}
    &(M-1)s + t \leqslant N < Ms+t \label{eq::app_uniform_original} \\
    &\Rightarrow \frac{N-t}{M} < s \leqslant \frac{N-t}{M-1}
\end{align}
For $M = 1$ case we let $\frac{N-t}{M-1} = +\infty$. For $t=1$, Eq.~(\ref{eq::app_uniform_original}) gives us
\begin{equation}
    (M-1)s < N \leqslant Ms
\end{equation}
which is reduced to:
\begin{equation}
    \frac{N}{M} \leqslant s < \frac{N}{M-1} 
\end{equation}

Since $s$ is always an integer, for a valid $s$ to exist, $M$ must satisfy $\left\lceil\frac{N}{M}\right\rceil < \frac{N}{M-1}$. In this case, it is intuitive to choose $s = \left\lceil\frac{N}{M}\right\rceil$, as have done in Figs.~\ref{fig:initial_simulation_all_M} and \ref{fig:heuristics_uniform}.

In our analysis, we often choose $n = 2^{2k}$ where $k$ is some positive integer; this allows us to set $M = \sqrt{N} = 2^k$ without further justification. In the following theorem, we expand on this fact and prove that we can find a uniform distribution as long as $M = \left\lceil\sqrt{N}\right\rceil$ without any restriction on $N$.
\begin{theorem}
    For positive integer $N$, when $M = \left\lceil\sqrt{N}\right\rceil$, $\left\lceil\frac{N}{M}\right\rceil < \frac{N}{M-1}$.
\end{theorem}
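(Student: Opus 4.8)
The plan is to substitute $M = \sqrt{N}$ directly and reduce both sides to elementary expressions in the single variable $a := \sqrt{N}$. Writing $a = \sqrt{N}$, the left-hand side becomes $\left\lceil \frac{N}{M}\right\rceil = \lceil a\rceil$, since $N/M = \sqrt{N} = a$, while the right-hand side is $\frac{N}{M-1} = \frac{a^2}{a-1}$. The claim is therefore equivalent to $\lceil a\rceil < \frac{a^2}{a-1}$, which I would establish through a single intermediate bound rather than any case analysis on the fractional part of $a$.

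First I would dispose of the boundary case. For $N = 1$ we have $M = 1$, so $M-1 = 0$ and the right-hand side is taken to be $+\infty$ by the convention fixed earlier in this appendix; the inequality then holds vacuously. For every remaining $N$ we have $a = \sqrt{N} > 1$, hence $a-1 > 0$, and all quantities below are well defined and positive.

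The crux is the chain $\lceil a\rceil < a+1 < \frac{a^2}{a-1}$. The first inequality is the standard ceiling bound $\lceil a\rceil < a+1$, valid for every real $a$. For the second, I would clear the positive denominator and compute the difference
\begin{equation}
    \frac{a^2}{a-1} - (a+1) = \frac{a^2 - (a+1)(a-1)}{a-1} = \frac{1}{a-1} > 0 ,
\end{equation}
where the numerator collapses to $a^2 - (a^2 - 1) = 1$. Transitivity then gives $\lceil a\rceil < \frac{a^2}{a-1}$, that is, $\left\lceil\frac{N}{M}\right\rceil < \frac{N}{M-1}$, as required.

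I do not anticipate a genuine obstacle: the estimate is exact and the algebra collapses cleanly, so the only points demanding care are bookkeeping ones. I must invoke the $M=1$ convention so the $N=1$ division-by-zero case is not silently dropped, and I should note that when $N$ is a perfect square the ceiling bound is in fact the tight equality $\lceil a\rceil = a = M$, so the \emph{strictness} of the conclusion is supplied entirely by the $\frac{1}{a-1} > 0$ gap and not by the ceiling. If one instead reads the statement only for perfect-square $N$ — so that $M$ is a genuine integer count of solutions — the same computation specializes to $M(M-1) < M^2$, which is the heart of the argument under either interpretation.
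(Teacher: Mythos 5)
Your chain $\left\lceil a\right\rceil < a+1 < \frac{a^2}{a-1}$ with $a=\sqrt{N}$ correctly proves the statement \emph{as literally written}, but the literal reading is a typo: the quantity this theorem exists to justify is the integer $M=\left\lceil\sqrt{N}\right\rceil$ used in Fig.~\ref{fig:complexity} and Table~\ref{tab:opt_para_performance} (the surrounding text says so explicitly), and the paper's own proof accordingly writes $N=a^2-b$ with $0\leqslant b\leqslant 2a-2$, so that $a=\left\lceil\sqrt{N}\right\rceil=M$. For that intended claim your argument has a genuine gap: the step you would need is $\left\lceil N/M\right\rceil < N/M+1 \leqslant N/(M-1)$, and the second inequality is equivalent to $N\geqslant M(M-1)$, which fails for every $N$ with $(M-1)^2<N<M(M-1)$. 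For instance $N=5$, $M=\left\lceil\sqrt{5}\right\rceil=3$ gives $N/M+1=8/3>5/2=N/(M-1)$, even though the conclusion $\left\lceil 5/3\right\rceil=2<5/2$ is still true. In other words, the intended inequality is not a consequence of the crude ceiling bound; it genuinely uses integrality of $\left\lceil N/M\right\rceil$, which is why the paper splits into the cases $0\leqslant b\leqslant a-1$ (where $\left\lceil N/M\right\rceil=a$) and $a\leqslant b\leqslant 2a-2$ (where $\left\lceil N/M\right\rceil=a-1$) and compares each value against $N/(M-1)=a+\frac{a-b}{a-1}$.

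Your closing remark does flag the integer reading, but resolving it via ``$M(M-1)<M^2$'' only covers perfect squares ($b=0$); the values actually used are $N=2^n$ for odd $n$ as well, where $\sqrt{N}$ is irrational and $M=\left\lceil\sqrt{N}\right\rceil$ strictly exceeds $\sqrt{N}$. To repair the proof you would either reproduce the paper's two-case analysis on $b=a^2-N$, or show directly that $\left\lceil N/M\right\rceil\leqslant a$ together with $N/(M-1)>a$ when $b\leqslant a-1$, and $\left\lceil N/M\right\rceil\leqslant a-1$ together with $N/(M-1)>a-1$ when $b\geqslant a$.
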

\begin{proof}
    Let $N = a^2 - b$, where $a,b\in\mathbb{N}$ and $0\leqslant b\leqslant 2a-2$. This is possible because $(a-1)^2 = a^2 - 2a + 1 = a^2 - (2a-2) - 1$. Thus, $M = \left\lceil\sqrt{N}\right\rceil = a$. When $M=a=1$, $\frac{N}{M-1} = +\infty$, so the case is trivial. Otherwise, we have:
    \begin{equation}
        \left\{\begin{gathered}
            \left\lceil\frac{N}{M}\right\rceil = \left\lceil\frac{a^2-b}{a}\right\rceil = a + \left\lceil -\frac{b}{a}\right\rceil \hfill \\ 
            \frac{N}{M-1} = \frac{a^2-b}{a-1} = a+\frac{a-b}{a-1} \hfill \\
        \end{gathered}\right.
    \end{equation}
    When $0 \leqslant b \leqslant a-1$, $\left\lceil\frac{N}{M}\right\rceil = a$ and $\frac{N}{M-1} \geqslant a+ \frac{1}{a-1} > a$. So theorem holds. When $a\leqslant b \leqslant 2a-2$, $\left\lceil\frac{N}{M}\right\rceil = a-1$ and $\frac{N}{M-1} \geqslant a - 1 + \frac{1}{a-1} > a - 1$, so theorem also holds. 
\end{proof}

\section{Mathematical Formalization of Sec.~\ref{sec:cluster_representation}}
\label{app:cluster}
In this section, we provide mathematical formalization for statements in Sec.~\ref{sec:cluster_representation}. As shown below, the relation $\sim$ defined by Definition \ref{def:cluster_relation} is restated to Definition \ref{def:cluster_relation_restate} for clarity. We then prove that this relation is an equivalence relation.

\begin{definition}
    \label{def:cluster_relation_restate}
    ${\sim}$ is a relation on $S$ such that for all $x,y \in S$, $x{\sim} y$ if and only if either of the following conditions is true:
    \begin{enumerate}
        \item $f(x) = f(y) = 1 \wedge  v(x) = v(y)$;
        \item $f(x) = f(y) = 0 \wedge \neg  \exists\;  z \in S \; f(z) = 1 \wedge \min\left\{v(x),v(y)\right\} \leqslant v(z) < \max\left\{v(x),v(y)\right\}$. 
    \end{enumerate}
\end{definition}

\begin{theorem}
    \label{thm:equivalence_relation}
    ${\sim}$ is an equivalence relation on $S$.
\end{theorem}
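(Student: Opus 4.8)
The plan is to verify the three defining properties of an equivalence relation---reflexivity, symmetry, and transitivity---directly from Definition~\ref{def:cluster_relation_restate}, in each case splitting the argument according to the value of $f$, since that value selects which of the two defining clauses can apply.

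For \textbf{reflexivity}, fix $x \in S$ and split on $f(x)$. If $f(x) = 1$, clause (1) applies because $v(x) = v(x)$; if $f(x) = 0$, clause (2) applies because $\min\{v(x),v(x)\} = \max\{v(x),v(x)\}$ renders the half-open interval $[v(x), v(x))$ empty, so no solution can lie inside it. For \textbf{symmetry}, observe that both clauses are manifestly invariant under interchanging $x$ and $y$: the equality $v(x) = v(y)$ in clause (1) is symmetric, and in clause (2) the functions $\min$ and $\max$ are symmetric, so the interval $[\min\{v(x),v(y)\}, \max\{v(x),v(y)\})$ is unchanged. Both of these are routine.

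The main obstacle is \textbf{transitivity}. Suppose $x \sim y$ and $y \sim z$. The first key observation is that $f(y)$ alone dictates which clause is in force for both relations: clause (1) forces $f(y) = 1$ while clause (2) forces $f(y) = 0$, so the two hypotheses must invoke the same clause. If that clause is (1), then $f(x) = f(y) = f(z) = 1$ and $v(x) = v(y) = v(z)$, whence $x \sim z$ immediately. The substantive case is clause (2), where $f(x) = f(y) = f(z) = 0$ and we must rule out any solution $w$ (i.e.\ $f(w) = 1$) with $v(w) \in [\min\{v(x),v(z)\}, \max\{v(x),v(z)\})$. Writing $a = v(x)$, $b = v(y)$, $c = v(z)$, I would reduce this to the set inclusion
\begin{equation}
\label{eq:interval_containment}
[\min\{a,c\},\max\{a,c\}) \subseteq [\min\{a,b\},\max\{a,b\}) \cup [\min\{b,c\},\max\{b,c\}),
\end{equation}
after which the conclusion is immediate: the right-hand side contains no solution $v$-value by the two hypotheses, so neither does the left, and thus $x \sim z$ via clause (2).

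The inclusion Eq.~(\ref{eq:interval_containment}) is where the care lies, precisely because the intervals are half-open. I would prove it by exhausting the possible orderings of $a$, $b$, $c$; in each ordering the two right-hand intervals either abut exactly (for instance, when $a \leqslant b \leqslant c$ we have $[a,b) \cup [b,c) = [a,c)$) or one of them already contains the target interval (for instance, when $a \leqslant c \leqslant b$ we have $[a,c) \subseteq [a,b)$). The half-open convention is exactly what makes the abutting cases glue together without gaps or doubly covered endpoints, so this gluing is the step that must be checked explicitly rather than merely asserted. Once Eq.~(\ref{eq:interval_containment}) is established, transitivity follows, and together with reflexivity and symmetry this shows that $\sim$ is an equivalence relation.
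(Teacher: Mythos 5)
Your proposal is correct and follows essentially the same route as the paper's proof: reflexivity and symmetry directly from the definition, and transitivity by first noting that $f(x)=f(y)=f(z)$ forces both hypotheses to invoke the same clause, then (in the $f=0$ case) splitting into three cases according to where $v(y)$ sits relative to $v(x)$ and $v(z)$. Your half-open-interval containment is simply a cleaner packaging of the same case analysis that the paper sketches with ``we can show separately that $x\sim z$,'' and it correctly pins down the endpoint bookkeeping the paper leaves implicit.
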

\begin{proof}
    It suffices to show reflexivity, connectivity, and transitivity  hold for ${\sim}$. We can show reflectivity and connectivity directly by definition. For transitivity consider $x,y,z\in S$ satisfying $x \sim y$ and $y\sim z$. First we have $f(x) = f(y) = f(z)$. If $f(x) = 1$ it is trivial that $x \sim z$. If $f(x) = 0$, without loss of generality we assume $v(x) \leqslant v(z)$. For cases $v(y)\leqslant v(x)$, $v(x) < v(y) \leqslant v(z)$, and $v(y) > v(z)$, we can show separately that $x \sim z$. 
\end{proof}

In Sec.~\ref{sec:cluster_representation}, we claimed that components of the state vector in directions that belong to the same cluster stay uniform during quantum dueling. To clarify, Theorem~\ref{thm:cluster_property} is copied to Theorem~\ref{thm:cluster_property_app}, which is proved after introducing Lemma~\ref{lem:cluster}.

\begin{lemma}
    \label{lem:cluster}
    Let $k_1,k_2,l_1,l_2 \in S$ satisfying $k_1 {\sim} k_2$ and $l_1 {\sim} l_2$. $o(k_1,l_1) = o(k_2,l_2)$.  
\end{lemma}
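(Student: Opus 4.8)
The plan is to unfold the definition $o(x,y)=(-1)^{f(x)\,\&\,[v(x)<v(y)]}$ and reduce the claim to showing that the two exponents agree, i.e.\ that $f(k_1)\,\&\,[v(k_1)<v(l_1)]$ equals $f(k_2)\,\&\,[v(k_2)<v(l_2)]$ as elements of $\{0,1\}$. Since $\sim$ is built from the conditions of Definition~\ref{def:cluster_relation_restate}, the first thing I would record is that $\sim$ preserves the $f$-value: from $k_1\sim k_2$ we get $f(k_1)=f(k_2)$, and from $l_1\sim l_2$ we get $f(l_1)=f(l_2)$. This lets me organize the argument as a short case analysis on these common $f$-values.

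If $f(k_1)=f(k_2)=0$, both exponents vanish because the logical conjunction with $f$ is $0$, so $o(k_1,l_1)=o(k_2,l_2)=1$ and there is nothing more to do. The remaining case is $f(k_1)=f(k_2)=1$; here condition (a) of Definition~\ref{def:cluster_relation_restate} forces $v(k_1)=v(k_2)$, so I may set $v_k := v(k_1)=v(k_2)$ and the problem collapses to proving $[v_k<v(l_1)]=[v_k<v(l_2)]$. I then split on $f(l_1)=f(l_2)$: when this common value is $1$, condition (a) applied to $l_1\sim l_2$ gives $v(l_1)=v(l_2)$ and the two indicators are trivially equal.

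The main obstacle, and the only place the geometry of the clusters genuinely enters, is the subcase where $l_1,l_2$ are non-solutions lying in the same cluster. Here I would assume without loss of generality that $v(l_1)\leqslant v(l_2)$ and argue by contradiction: the two indicators $[v_k<v(l_1)]$ and $[v_k<v(l_2)]$ can differ only if $v(l_1)\leqslant v_k<v(l_2)$. But $k_1$ is a solution ($f(k_1)=1$) with $v(k_1)=v_k$, so taking $z=k_1$ produces a solution satisfying $\min\{v(l_1),v(l_2)\}=v(l_1)\leqslant v(z)<v(l_2)=\max\{v(l_1),v(l_2)\}$, directly violating condition (b) of the relation $l_1\sim l_2$. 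This contradiction forces $[v_k<v(l_1)]=[v_k<v(l_2)]$, completing the case and hence the lemma. I expect the bookkeeping to be entirely routine once this separating-solution argument is in place; the conceptual content is exactly that a non-solution cluster is, by construction, never split apart by the $v$-value of any solution.
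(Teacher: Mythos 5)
Your proof is correct and takes essentially the same route as the paper's: the key step in both is that condition (b) of the cluster relation forbids any solution's $v$-value from lying between $v(l_1)$ and $v(l_2)$ when $l_1\sim l_2$ are non-solutions, which is exactly your separating-solution contradiction with $z=k_1$. The only difference is organizational --- the paper chains $o(k_1,l_1)=o(k_2,l_1)=o(k_2,l_2)$ in two steps while you compare the endpoints directly via a case split on the common $f$-values --- and this changes nothing of substance.
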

\begin{proof}
    We prove the following two propositions. 
    \begin{enumerate}
        \item $o(k_1,l_1) = o(k_2,l_1)$.  
        
        If $f(k_1) = 1$, this is trivial; if $f(k_1) = 0$, $o(k_1,l_1) = o(k_2,l_1) = 1$.
        
        \item $o(k_2,l_1) = o(k_2,l_2)$.  
        
        When $f(l_1) = 1$, this is trivial;when $f(l_1) = 0$, $f(k_2) = 0$, $o(k_2,l_1) = o(k_2,l_2) = 1$. 
        
        If $f(l_1) = 0$, $f(k_2) = 1$, by definition of ${\sim}$ either $v(k_2) < \min\{v(l_1),v(l_2)\}$ or $v(k_2) \geqslant \max\{v(l_1),v(l_2)\}$. 
        
        In either case $o(k_2,l_1) = o(k_2,l_2)$.
    \end{enumerate}
    Thus, $o(k_1,l_1) = o(k_2,l_2)$. 
\end{proof}

\begin{theorem}
    \label{thm:cluster_property_app}
    Let $k_1,k_2,l_1,l_2 \in S$ satisfying $k_1 {\sim} k_2$ and $l_1 {\sim} l_2$. At any stage after initialization, $\langle k_1l_1| \psi \rangle = \langle k_2l_2 | \psi \rangle$.
\end{theorem}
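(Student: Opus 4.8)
The plan is to proceed by induction on the number of Grover-dueling gates ($\mathscr{G}_1$ or $\mathscr{G}_2$) applied to the state, since the claim $\langle k_1 l_1|\psi\rangle = \langle k_2 l_2|\psi\rangle$ is precisely the assertion that the components $\psi_{kl}$ are constant across each product cluster $[k]\times[l]$. The two ingredients I would rely on are Lemma~\ref{lem:cluster}, which tells us the oracle coefficient $o(x,y)$ is itself cluster-invariant, and the explicit $O(N^2)$ evolution rule in Eq.~(\ref{eq:dueling_speedup_to_O(N^2)}), which expresses each updated component as a mean-type sum minus a single local term.

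For the base case, after the Hadamard layer the state is the uniform superposition $|\psi_0\rangle = H^{\otimes 2n}|0\rangle$, so every component satisfies $\psi_{kl} = \tfrac{1}{N}$ independently of $k$ and $l$. In particular $\langle k_1 l_1|\psi_0\rangle = \langle k_2 l_2|\psi_0\rangle$ holds for all indices, so the cluster-uniformity invariant is true immediately after initialization.

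For the inductive step, I assume the invariant holds for the current $|\psi\rangle$ and show it is preserved by $\mathscr{G}_1$; the argument for $\mathscr{G}_2$ is identical by the $k\leftrightarrow l$ symmetry of Eq.~(\ref{eq:dueling_speedup_to_O(N^2)}), applying Lemma~\ref{lem:cluster} with the roles of the two cluster pairs interchanged. Fix $k_1\sim k_2$ and $l_1\sim l_2$, and compare $(\mathscr{G}_1|\psi\rangle)_{k_1 l_1}$ with $(\mathscr{G}_1|\psi\rangle)_{k_2 l_2}$. The local term $o(k,l)\psi_{kl}$ matches because $o(k_1,l_1)=o(k_2,l_2)$ by Lemma~\ref{lem:cluster} and $\psi_{k_1 l_1}=\psi_{k_2 l_2}$ by the inductive hypothesis. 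For the summation term $\tfrac{2}{N}\sum_{k'}o(k',l)\psi_{k'l}$, I would note that it does not depend on $k$ at all, so it suffices to show it is unchanged when $l_1$ is replaced by $l_2$; this follows termwise, since for each fixed $k'$ we have $o(k',l_1)=o(k',l_2)$ (Lemma~\ref{lem:cluster} with $k'\sim k'$) and $\psi_{k' l_1}=\psi_{k' l_2}$ (inductive hypothesis with $k'\sim k'$).

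I expect the only subtle point to be the summation term: one must verify that replacing the second-register representative $l_1$ by $l_2$ leaves the entire sum invariant even though the sum runs over the whole first register. The clean resolution is that the invariance holds \emph{termwise} — each summand is a product of a cluster-invariant oracle factor and a cluster-invariant amplitude — so no cancellation or reindexing across clusters is required, and the conclusion $\langle k_1 l_1|\psi\rangle = \langle k_2 l_2|\psi\rangle$ propagates through every iteration.
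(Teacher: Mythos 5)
Your proposal is correct and follows essentially the same route as the paper: induction on the number of dueling gates, with the uniform state $\psi_{kl}=\tfrac{1}{N}$ as base case and Lemma~\ref{lem:cluster} applied to Eq.~(\ref{eq:dueling_speedup_to_O(N^2)}) for the inductive step. The paper states the step in one line ("apply Lemma~\ref{lem:cluster} to Eq.~(\ref{eq:dueling_speedup_to_O(N^2)})"), and your termwise treatment of the summation term is precisely the detail that sentence elides.
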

\begin{proof}
    It is clear that $\langle k_1l_1| \psi \rangle = \langle k_2l_2 | \psi \rangle$ holds after initialization, as for all $k,l\in S$, $\langle kl|\psi\rangle = \frac{1}{N}$. 
    
    Assume at some time after initialization in Algorithm~\ref{alg:dueling}, $\forall k_1'\sim k_2' \; \forall l_1'\sim l_2' \; \langle k_1'l_1'| \psi \rangle = \langle k_2'l_2' | \psi \rangle$. Apply Lemma~\ref{lem:cluster} gives:
    \begin{equation}
        \label{eq:cluster_property_induction_step}
        \left\{\begin{gathered}
            (\mathcal{G}_{1\leftarrow 2}|\psi\rangle)_{k_1l_1} = (\mathcal{G}_{1\leftarrow 2}|\psi\rangle)_{k_2l_2} \hfill \\
            (\mathcal{G}_{2\leftarrow 1}|\psi\rangle)_{k_1l_1} = (\mathcal{G}_{2\leftarrow 1}|\psi\rangle)_{k_2l_2} \hfill \\
        \end{gathered}\right.
    \end{equation}
    
    Thus, by mathematical induction Theorem~\ref{thm:cluster_property} holds.
\end{proof}

As mentioned in Sec.~\ref{sec:cluster_representation}, elements from the same cluster share equivalent $f$ value. Meanwhile, only on the boundary will the interval spanned by $v$ values of the elements in each cluster coincide. To prove this, we first need to define the $v$-interval of each cluster in Definition~\ref{def: cluster interval}.

\begin{definition}
\label{def: cluster interval}
    For each cluster $\mathcal{S} \in S/{\sim}$, define its $v$-minimum $\mathinner{\min}_v\mathcal{S}$ and $v$-maximum $\mathinner{\max}_v\mathcal{S}$ as:
    \begin{equation}
        \label{eq:v_extremum}
        \left\{\begin{gathered}
            \mathinner{\min}_v\mathcal{S} = \min\setbuilder{v(x)}{x\in \mathcal{S}} \hfill \\
            \mathinner{\max}_v\mathcal{S} = \max\setbuilder{v(x)}{x\in \mathcal{S}} \hfill 
        \end{gathered}\right.
    \end{equation}
    We then define the cluster's $v$-interval $I_v(\mathcal{S})$ to be $\left(\mathinner{\min}_v\mathcal{S},\mathinner{\max}_v\mathcal{S}\right)$. 
\end{definition}

It is worth noting that $\forall \mathcal{S} \in S/\sim \;$, if $f(\mathcal{S}) = 1$, $I_v(\mathcal{S}) = \varnothing$; otherwise, $I_v(\mathcal{S}) \neq \varnothing$. In general, we have outlined the range a given cluster spans in the space of measure function $v(x)$. Therefore, the non-overlapping properties of different clusters can be expressed as $I_{v} (\mathcal{S}) \cap I_{v} (\mathcal{T}) = \varnothing$ for arbitrarily different clusters $\mathcal{S}$, $\mathcal{T}$, as presented in Lemma~\ref{lem:cluster_boundary_collide}.

\begin{lemma}
    \label{lem:cluster_boundary_collide}
    Consider $\mathcal{S},\mathcal{T} \in S/{\sim}$. If $\mathcal{S} \neq \mathcal{T}$, $I_v(\mathcal{S}) \cap I_v(\mathcal{T}) = \varnothing$.
\end{lemma}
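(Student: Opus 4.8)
The plan is to argue by contradiction after first disposing of the degenerate cases. I would begin by recording that whenever a cluster is a solution cluster (case~1 of Definition~\ref{def:cluster_relation_restate}), all of its elements share one common $v$-value, so $\min_v\mathcal{S} = \max_v\mathcal{S}$ and its $v$-interval $(\min_v\mathcal{S},\max_v\mathcal{S})$ is empty; the same holds for any singleton cluster. Hence if either $I_v(\mathcal{S})$ or $I_v(\mathcal{T})$ is empty, the intersection is vacuously empty and there is nothing to prove. Consequently I may assume both intervals are nonempty open intervals, which in particular forces both $\mathcal{S}$ and $\mathcal{T}$ to be non-solution clusters (case~2), so that $f(\mathcal{S}) = f(\mathcal{T}) = 0$ and every representative chosen below is a non-solution.

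Now suppose, for contradiction, that $I_v(\mathcal{S}) \cap I_v(\mathcal{T}) \neq \varnothing$. Since both are nonempty open intervals, their overlap is equivalent to $\max\{\min_v\mathcal{S},\min_v\mathcal{T}\} < \min\{\max_v\mathcal{S},\max_v\mathcal{T}\}$. Relabelling if necessary, I take $\min_v\mathcal{S} \leqslant \min_v\mathcal{T}$, whence the overlap condition yields $\min_v\mathcal{T} < \max_v\mathcal{S}$. I then choose realized representatives: $x_{\min},x_{\max}\in\mathcal{S}$ attaining $v(x_{\min}) = \min_v\mathcal{S}$ and $v(x_{\max}) = \max_v\mathcal{S}$, and $y_{\min}\in\mathcal{T}$ attaining $v(y_{\min}) = \min_v\mathcal{T}$. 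These inequalities assemble into $v(x_{\min}) \leqslant v(y_{\min}) < v(x_{\max})$.

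The key step is to show that no solution lies in the relevant $v$-range. Because $x_{\min}\sim x_{\max}$ and both are non-solutions, case~2 of Definition~\ref{def:cluster_relation_restate} guarantees that no $z\in S$ with $f(z)=1$ satisfies $v(x_{\min}) \leqslant v(z) < v(x_{\max})$. Since $v(y_{\min}) < v(x_{\max})$ gives the inclusion of half-open ranges $[\,v(x_{\min}),v(y_{\min})\,) \subseteq [\,v(x_{\min}),v(x_{\max})\,)$, there is likewise no solution $z$ with $v(x_{\min}) \leqslant v(z) < v(y_{\min})$. As $x_{\min}$ and $y_{\min}$ are both non-solutions, case~2 then forces $x_{\min}\sim y_{\min}$, so the cluster of $x_{\min}$, namely $\mathcal{S}$, coincides with the cluster of $y_{\min}$, namely $\mathcal{T}$. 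This contradicts $\mathcal{S}\neq\mathcal{T}$ and completes the argument.

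I expect the main obstacle to be bookkeeping rather than conceptual depth. In particular, the $v$-intervals are \emph{open}, so a common point of $I_v(\mathcal{S})\cap I_v(\mathcal{T})$ need not itself be a realized $v$-value of any element; the argument above sidesteps this by working directly with the endpoints $\min_v$ and $\max_v$, which are attained by genuine elements, rather than with an abstract point of intersection. The remaining care is purely in tracking strict versus non-strict inequalities through the WLOG relabelling and in confirming that the half-open inclusion is preserved under it.
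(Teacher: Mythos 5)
Your proof is correct and follows essentially the same route as the paper's: argue by contradiction, reduce to the case $f(\mathcal{S})=f(\mathcal{T})=0$, and invoke case~2 of Definition~\ref{def:cluster_relation_restate} to produce $x\in\mathcal{S}$ and $y\in\mathcal{T}$ with $x\sim y$, contradicting $\mathcal{S}\neq\mathcal{T}$. If anything, your handling of the overlap via $\max\{\min_v\mathcal{S},\min_v\mathcal{T}\}<\min\{\max_v\mathcal{S},\max_v\mathcal{T}\}$ and attained endpoints is slightly more careful than the paper's opening claim that an endpoint of one interval must lie inside the other, which silently skips the degenerate case of coinciding intervals.
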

\begin{proof}
    Prove by contradiction. If $I_v(\mathcal{S}) \cap I_v(\mathcal{T}) \neq \varnothing$, then one of endpoints of either $I_v(\mathcal{S})$ or $I_v(\mathcal{T})$ must be within the other interval. Without loss of generality, assume a endpoint of $I_v(\mathcal{S})$ belongs to $I_v(\mathcal{T})$. Thus, either $\mathinner{\min}_v\mathcal{S} \in I_v(\mathcal{T})$ or $\mathinner{\max}_v\mathcal{S} \in I_v(\mathcal{T})$, implying that there is a $x \in \mathcal{S}$ such that $v(x) \in I_v(\mathcal{T})$, i.e., $\mathinner{\min}_v\mathcal{T} < v(x) < \mathinner{\max}_v\mathcal{T}$. Meanwhile, if either $f(\mathcal{S}) = 1$ or $f(\mathcal{T}) = 1$, $I_v(\mathcal{S}) \cap I_v(\mathcal{T}) = \varnothing$. Thus, $f(\mathcal{S}) = f(\mathcal{T}) = 0$. Therefore, by Definition~\ref{def:cluster_relation} applied to the elements of $\mathcal{T}$ with smallest and greatest $v$ values, no solution has $v$ value within $[\mathinner{\min}_v\mathcal{T},\mathinner{\max}_v\mathcal{T})$. Consider a $y \in \mathcal{T}$, it is clear that $\mathinner{\min}_v\mathcal{T} \leqslant v(y) \leqslant \mathinner{\max}_v\mathcal{T}$. Since $\mathinner{\min}_v\mathcal{T} < v(x) < \mathinner{\max}_v\mathcal{T}$, no solution has $v$ value within $\left[\min(v(x),v(y)),\max(v(x),v(y))\right)$, meaning that $x\sim y$. Since we know that $x \in \mathcal{S}$ and $y \in \mathcal{T}$, we must have $\mathcal{S} = \mathcal{T}$ by properties of $S/\sim$. We assume $\mathcal{S} \neq \mathcal{T}$, so we have a contradiction.
\end{proof}

In order to prove the validity of sorting clusters with respect to their indices, we need to construct a total order on the space $S/\sim$. First of all, we define the relations $\leqslant_{v}$, $\leqslant_{f}$ on $S/\sim$ in Definition~\ref{def:v_f_orders}.

\begin{definition}
    \label{def:v_f_orders} 
    Define relation $\leqslant_v$ on $S/\sim$ such that for all $\mathcal{S},\mathcal{T} \in S/\sim$, $\mathcal{S} \leqslant_v \mathcal{T}$ if and only if $\forall x\in\mathcal{S} \; \forall y\in\mathcal{T}\; v(x)\leqslant v(y)$. Define relation $\leqslant_f$ on $S/\sim$ such that for $\mathcal{S},\mathcal{T} \in S/\sim$, $\mathcal{S} \leqslant_f \mathcal{T}$ if and only if $f(\mathcal{S}) \leqslant f(\mathcal{T})$. 
\end{definition}

As in Theorem~\ref{thm:v_f_orders}, $\leq_{v}$ and $\leq_{f}$ are total orders on $S/\sim$.

\begin{theorem}
    \label{thm:v_f_orders}
    $\leqslant_v$ and $\leqslant_f$ are total orders if we define corresponding equivalence relations $\sim_v$ and $\sim_f$ to satisfy anti-symmetry.
\end{theorem}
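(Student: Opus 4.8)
The plan is to check the defining properties of a total order for each relation separately, disposing of $\leqslant_f$ quickly and treating $\leqslant_v$ as the substantive case. In both cases antisymmetry is the property deliberately delegated to the equivalence relation: I would set $\mathcal{S} \sim_f \mathcal{T}$ iff $f(\mathcal{S}) = f(\mathcal{T})$, and take $\sim_v$ to identify any two clusters that are mutually $\leqslant_v$-comparable (together with the diagonal), so that each relation descends to a genuinely antisymmetric order on the quotient. For $\leqslant_f$ this finishes the job almost immediately: since $f$ takes values in $\{0,1\}$ and $\leqslant_f$ is the pullback along $f$ of the standard order on $\{0,1\}$, reflexivity, transitivity, and totality are inherited at once, and the only obstruction to antisymmetry — distinct clusters with equal $f$-value — is exactly what $\sim_f$ absorbs.

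The real work is $\leqslant_v$. The first step is to rewrite the pointwise definition in terms of the $v$-extrema of Definition~\ref{def: cluster interval}: because $v$ attains its bounds on each finite cluster, $\mathcal{S} \leqslant_v \mathcal{T}$ holds exactly when $\max_v \mathcal{S} \leqslant \min_v \mathcal{T}$. Transitivity is then immediate from transitivity of $\leqslant$ on $\mathbb{R}$, since $\mathcal{S} \leqslant_v \mathcal{T} \leqslant_v \mathcal{U}$ yields the chain $\max_v \mathcal{S} \leqslant \min_v \mathcal{T} \leqslant \max_v \mathcal{T} \leqslant \min_v \mathcal{U}$. The heart of the proof, and the step I expect to be the main obstacle, is totality: for distinct clusters I must show $\max_v \mathcal{S} \leqslant \min_v \mathcal{T}$ or $\max_v \mathcal{T} \leqslant \min_v \mathcal{S}$. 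I would argue by contradiction, assuming both $\max_v \mathcal{S} > \min_v \mathcal{T}$ and $\max_v \mathcal{T} > \min_v \mathcal{S}$. When both open $v$-intervals $I_v(\mathcal{S})$ and $I_v(\mathcal{T})$ are nonempty, these two strict inequalities force the intervals to overlap, contradicting the disjointness supplied by Lemma~\ref{lem:cluster_boundary_collide}.

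The delicate subcase is a degenerate cluster whose open interval is empty, where Lemma~\ref{lem:cluster_boundary_collide} is vacuous and I must fall back on Definition~\ref{def:cluster_relation} directly. If, say, $I_v(\mathcal{S}) = \varnothing$, then $\mathcal{S}$ carries a single value $a$, and the contradiction hypothesis places $a$ strictly inside $(\min_v \mathcal{T}, \max_v \mathcal{T})$, which forces $\mathcal{T}$ to be non-degenerate and hence a non-solution cluster. Condition~2 of Definition~\ref{def:cluster_relation}, applied to the extreme elements of $\mathcal{T}$, says no solution has $v$-value in $[\min_v \mathcal{T}, \max_v \mathcal{T})$; so if $\mathcal{S}$ is a solution cluster the value $a$ violates this, while if $\mathcal{S}$ is itself a non-solution cluster then $a$ and the elements of $\mathcal{T}$ jointly satisfy condition~2 and must lie in one class, contradicting $\mathcal{S} \neq \mathcal{T}$. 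I would close by noting the reflexivity subtlety — for a non-solution cluster spanning several $v$-values, $\mathcal{S} \leqslant_v \mathcal{S}$ fails literally and is recovered only through the identification $\sim_v$ — and by observing that distinct clusters can still remain $\leqslant_v$-tied (a solution abutting a non-solution at a shared boundary value), which is exactly the ambiguity that the later lexicographic refinement $\leqslant_\sim$ in Definition~\ref{def:v_sim_orders} is built to break.
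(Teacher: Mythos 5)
Your proof is correct, but it reaches the conclusion by direct verification of the order axioms, whereas the paper compresses the whole argument into a single change of viewpoint: it invokes Lemma~\ref{lem:cluster_boundary_collide} to assign each cluster a numerical $v$-indicator, e.g.\ $v(\mathcal{S})=\frac{1}{2}\left(\mathinner{\min}_v\mathcal{S}+\mathinner{\max}_v\mathcal{S}\right)$, and then declares $\leqslant_v$ to be $\leqslant$ on the mapped set of indicators (and $\leqslant_f$ to be $\leqslant$ on the image of $f$, exactly as you do). Your route buys rigor at the one point where the paper's one-liner is thin: for the midpoint indicator to represent $\leqslant_v$ faithfully, one must exclude a degenerate cluster --- a solution cluster, whose open $v$-interval is empty --- whose single $v$-value sits strictly inside the interval of a non-solution cluster, and Lemma~\ref{lem:cluster_boundary_collide} is vacuous in precisely that situation; your explicit fallback to condition~2 of Definition~\ref{def:cluster_relation} closes exactly the gap the paper silently steps over. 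Your two closing caveats are also well taken and left implicit in the paper: reflexivity of $\leqslant_v$ as literally defined fails for a non-solution cluster spanning several $v$-values and is only restored by the identification $\sim_v$ (the same looseness the theorem statement itself acknowledges), and the possibility of distinct clusters being $\leqslant_v$-tied at a shared boundary value is precisely what forces the lexicographic refinement $\leqslant_\sim$ of Definition~\ref{def:v_sim_orders}.
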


\begin{proof}
    Lemma~\ref{lem:cluster_boundary_collide} implies that for each cluster $\mathcal{S}$, we can find a $v$-indicator $v(\mathcal{S})$, which is a generalization of the argument of $v$ to allow it to intake a register. (For example, $v(\mathcal{S}) = \frac{1}{2}\left(\mathinner{\min}_v \mathcal{S} + \mathinner{\max}_v \mathcal{S}\right)$) In this case, $\leqslant_v$ becomes $\leqslant$ in mapped structure $\setbuilder{v(\mathcal{S})}{\forall \mathcal{S} \in S/{\sim}}$. With the equivalence notation $\sim_v$, $\leqslant_v$ is a total order. Similarly, $\leqslant_f$ is a total order with properly defined relation $\sim_f$, since it is $\leqslant$ on the mapped structure $\setbuilder{f(\mathcal{S})}{\forall \mathcal{S} \in S/{\sim}}$. 
\end{proof}

After the definition of total orders $\leqslant_{v}$, $\leqslant_{f}$ on $S/\sim$ in Definition~\ref{def:v_f_orders}, we could then define the relation $\leqslant_{\sim}$ on $S/\sim$, and prove that it is a total order on $S/\sim$ in Theorem~\ref{thm:cluster_order}. Note that Definition~\ref{def:v_sim_orders_app} is a restatement of Definition~\ref{def:v_sim_orders} for the sake of clarity.

\begin{definition}
\label{def:v_sim_orders_app}
$\leqslant_\sim$ is the lexicographic combination of $\leqslant_v$ and $\leqslant_f$. That is, for $\mathcal{S},\mathcal{T} \in S/{\sim}$, $\mathcal{S} \leqslant_\sim \mathcal{T}$ if and only if $\mathcal{S} <_v \mathcal{T}$ (that is, $\neg \mathcal{T} \leqslant_v \mathcal{S})$) or ($\mathcal{S} \sim_v \mathcal{T}$ and $\mathcal{S} \leqslant_f \mathcal{T}$).
\end{definition}

\begin{theorem}
    \label{thm:cluster_order}
    $\leqslant_\sim$ is a total order on $\mathcal{S}/\sim$. The corresponding notion of equality is $=$.
\end{theorem}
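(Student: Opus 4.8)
The plan is to treat $\leqslant_\sim$ as a standard lexicographic combination of the two total preorders $\leqslant_v$ and $\leqslant_f$ established in Theorem~\ref{thm:v_f_orders}, for which reflexivity, transitivity, and totality are essentially automatic; the only substantive claim is antisymmetry, which is precisely the assertion that the induced notion of equality is honest set equality $=$.

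First I would dispatch the routine order axioms. Reflexivity of $\leqslant_\sim$ follows immediately from reflexivity of $\leqslant_v$ (which gives $\mathcal{S} \sim_v \mathcal{S}$) together with reflexivity of $\leqslant_f$. For totality, given any $\mathcal{S},\mathcal{T} \in S/{\sim}$, totality of $\leqslant_v$ yields one of $\mathcal{S} <_v \mathcal{T}$, $\mathcal{T} <_v \mathcal{S}$, or $\mathcal{S} \sim_v \mathcal{T}$; in the first two cases the lexicographic rule decides the comparison outright, and in the third it defers to $\leqslant_f$, which is itself total, so a comparison always exists. Transitivity I would verify by the usual case split on whether the leading coordinate is strict or tied under $\leqslant_v$, invoking transitivity of $<_v$, of $\sim_v$, and of $\leqslant_f$; no new idea is needed here.

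The real content --- and the step I expect to be the main obstacle --- is antisymmetry: if $\mathcal{S} \leqslant_\sim \mathcal{T}$ and $\mathcal{T} \leqslant_\sim \mathcal{S}$, I must conclude $\mathcal{S} = \mathcal{T}$. Unwinding the lexicographic definition, these two inequalities force both $\mathcal{S} \sim_v \mathcal{T}$ and $\mathcal{S} \sim_f \mathcal{T}$, so the crux is to show that the pair $(\sim_v,\sim_f)$ separates distinct clusters. Here I would read $\sim_v$ back through Definition~\ref{def:v_f_orders}: $\mathcal{S} \sim_v \mathcal{T}$ means $\mathcal{S} \leqslant_v \mathcal{T}$ and $\mathcal{T} \leqslant_v \mathcal{S}$ simultaneously, which by the definition of $\leqslant_v$ forces $v(x) = v(y)$ for every $x \in \mathcal{S}$ and every $y \in \mathcal{T}$. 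Hence all elements of $\mathcal{S} \cup \mathcal{T}$ carry a single common value $c$.

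I would then close by a short split on the shared $f$-value supplied by $\sim_f$. If $f(\mathcal{S}) = f(\mathcal{T}) = 1$, then any $x \in \mathcal{S}$ and $y \in \mathcal{T}$ satisfy $f(x) = f(y) = 1$ and $v(x) = v(y) = c$, so condition~1 of Definition~\ref{def:cluster_relation} gives $x \sim y$ and therefore $\mathcal{S} = \mathcal{T}$. If instead $f(\mathcal{S}) = f(\mathcal{T}) = 0$, then for such $x,y$ the half-open interval $[\min\{v(x),v(y)\},\max\{v(x),v(y)\}) = [c,c) = \varnothing$ contains no solution vacuously, so condition~2 of Definition~\ref{def:cluster_relation} again yields $x \sim y$ and $\mathcal{S} = \mathcal{T}$. (Alternatively, Lemma~\ref{lem:cluster_boundary_collide} supplies the disjointness of $v$-intervals that forbids two distinct non-solution clusters from sharing all their $v$-values, giving the same conclusion.) This establishes antisymmetry with $=$ as the associated equality, and together with the routine axioms above it completes the proof that $\leqslant_\sim$ is a total order on $S/{\sim}$.
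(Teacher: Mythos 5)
Your proposal is correct and follows essentially the same route as the paper: cite the general lexicographic-order properties for reflexivity, totality, and transitivity, then reduce antisymmetry to showing that $\mathcal{S} \sim_v \mathcal{T}$ and $\mathcal{S} \sim_f \mathcal{T}$ force every $x\in\mathcal{S}$ and $y\in\mathcal{T}$ to satisfy $x\sim y$. Your explicit case split on the common $f$-value (using the vacuous interval $[c,c)$ when $f=0$) just fills in a step the paper leaves implicit.
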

\begin{proof}
    By the property of lexicographic order, $\leqslant_\sim$ is a total order. Thus, it suffices to show that anti-symmetry holds for $=$. For $\mathcal{S},\mathcal{T} \in S/{\sim}$, let $\mathcal{S}\leqslant_\sim \mathcal{T}$ and $\mathcal{S}\leqslant \mathcal{T}$. Thus, $\mathcal{S} \sim_v \mathcal{T}$ and $\mathcal{S}\sim_f \mathcal{T}$. By definition, we can find a $c_v$ and $c_f$ such that $\forall x\in\mathcal{S}\;\forall y\in\mathcal{T}\;v(x)=v(y)=c_v$ and $f(\mathcal{S}) = f(\mathcal{T}) = c_f$. Therefore, $\forall x\in\mathcal{S}\;\forall y\in\mathcal{T} x\sim y$, so $\mathcal{S} = \mathcal{T}$.
\end{proof}

As shown in Sec.~\ref{sec:cluster_representation}, we need to construct an oracle acting on clusters that meet the criterion $[v(x)<v(y)] \& f(x)$. We denote the oracle $O_{\xi}$, which satisfies Eq.~\ref{eq: oracle_clusters}, where $\eta$, $\epsilon$ are indices of the clusters. 
\begin{equation}
    \label{eq: oracle_clusters}
    O_{\xi}\ket{\eta} = (-1)^{f(\eta) \& [\eta < \epsilon]} \ket{\eta}
\end{equation}

To show that our definition of $O_{\xi}$ is correct, we need to prove Theorem~\ref{thm:o_function}.

\begin{theorem}
    \label{thm:o_function}
    Fix $\mathcal{S},\mathcal{T} \in S/{\sim}$, let $\xi = \idx \mathcal{S}, \eta = \idx \mathcal{T}$. For all $x\in\mathcal{S}$ and $y\in\mathcal{T}$, 
    \begin{equation}
        \label{eq:o_function}
        f(x)\&[v(x)<v(y)] = f(\xi)\&[\xi < \eta]
    \end{equation}
\end{theorem}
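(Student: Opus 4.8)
The plan is to reduce the identity to a comparison between the single $v$-value of the solution cluster and the $v$-range of the other cluster, using crucially that $f$ is constant on clusters. First I would note that since $x\in\mathcal{S}$, we have $f(x)=f(\mathcal{S})=f(\xi)$, so both sides of Eq.~(\ref{eq:o_function}) carry the common factor $f(\xi)$. If $f(\xi)=0$ both sides vanish and the equality is immediate, so the entire content sits in the case $f(\xi)=1$, where it remains to show $[v(x)<v(y)]=[\xi<\eta]$ for every $x\in\mathcal{S}$ and $y\in\mathcal{T}$.

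In that case $\mathcal{S}$ is a solution cluster, so by the first defining condition of Definition~\ref{def:cluster_relation} all of its elements share one $v$-value; write $v(x)=c$. Next I would translate the index comparison into an order comparison. Because $\idx$ is an order-preserving bijection (Definition~\ref{def:new_basis}), $\xi<\eta$ is equivalent to $\mathcal{S}<_\sim\mathcal{T}$; and since $f(\mathcal{S})=1$ the lexicographic tie-breaker $\mathcal{S}<_f\mathcal{T}$ can never fire, as it would demand $1<f(\mathcal{T})$. Hence $\xi<\eta$ collapses to the pure $v$-comparison $\mathcal{S}<_v\mathcal{T}$, which by Definition~\ref{def:v_f_orders} is equivalent to $c\leqslant\min_v\mathcal{T}$ together with $\max_v\mathcal{T}>c$.

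The geometric heart is then to locate $c$ relative to the half-open interval $[\min_v\mathcal{T},\max_v\mathcal{T})$. I would argue $c\notin[\min_v\mathcal{T},\max_v\mathcal{T})$: if $\mathcal{T}$ is a non-solution cluster, the reasoning behind Lemma~\ref{lem:cluster_boundary_collide} (the second defining condition applied to the extreme-valued elements of $\mathcal{T}$) shows that no solution, and in particular not $x$, has $v$-value in that interval; if $\mathcal{T}$ is a solution cluster the interval is empty and the claim is vacuous. Consequently either $c<\min_v\mathcal{T}$ or $c\geqslant\max_v\mathcal{T}$. In the first branch every $y\in\mathcal{T}$ satisfies $v(y)\geqslant\min_v\mathcal{T}>c$, so $[v(x)<v(y)]=1$, while simultaneously $\mathcal{S}<_v\mathcal{T}$ holds, giving $[\xi<\eta]=1$; in the second branch $v(y)\leqslant\max_v\mathcal{T}\leqslant c$ forces $[v(x)<v(y)]=0$, while $\mathcal{S}<_v\mathcal{T}$ fails, giving $[\xi<\eta]=0$. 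Either way the two indicators agree, completing the case $f(\xi)=1$.

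I expect the main obstacle to be the boundary configuration $c=\max_v\mathcal{T}$, which can arise when $\mathcal{T}$ is a non-solution cluster whose largest $v$-value coincides with the solution value $c$. Here one must check that the half-open convention in Definition~\ref{def:cluster_relation} and in Lemma~\ref{lem:cluster_boundary_collide} (excluding the right endpoint $\max_v\mathcal{T}$) aligns exactly with the strictness of $<_v$ and of $[v(x)<v(y)]$, so that both sides still read $0$. Ensuring that every strict versus non-strict inequality and every open versus closed endpoint matches across the lexicographic order, the cluster $v$-interval, and the raw comparison $v(x)<v(y)$ is the delicate bookkeeping on which the argument turns.
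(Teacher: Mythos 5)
Your proof is correct and follows essentially the same route as the paper's: reduce to the case $f(\xi)=1$ via $f(x)=f(\xi)$, use that a solution cluster carries a single $v$-value $c$, and exploit the disjointness of cluster $v$-ranges to identify the element-wise comparison $[v(x)<v(y)]$ with the cluster-order comparison $[\xi<\eta]$. The only organizational difference is the case split --- you branch on whether $c<\min_v\mathcal{T}$ or $c\geqslant\max_v\mathcal{T}$, while the paper branches on whether $v(x)=v(y)$ --- and your remark that the $\leqslant_f$ tie-breaker cannot fire when $f(\mathcal{S})=1$ is precisely the content of the paper's treatment of the $v(x)=v(y)$ case.
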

\begin{proof}
    Since $f(x) = f(\xi)$ and it suffices to show that when $f(x) = f(\xi) = 1$, $[v(x)<v(y)] = [\xi < \eta]$. If $\mathcal{S} = \mathcal{T}$, $f(x)\&[v(x)<v(y)] = f(\xi)\&[\xi < \eta] = 0$. So let $\mathcal{S} \neq \mathcal{T}$. When $v(x) \neq v(y)$, we have:
     \begin{equation}
        \begin{aligned}
            [v(x) < v(y)] & = [v(x) \leqslant v(y)]  \\
                          & = [\mathcal{S} \leqslant_v \mathcal{T}] \\
                          & = [\mathcal{S} \leqslant_v \mathcal{T}] \& [\mathcal{S} \neq \mathcal{T}] \\
                          & = [\xi \leqslant \eta] \& [\xi \neq \eta ] \\
                          & = [\xi < \eta]
        \end{aligned}
    \end{equation}
    
    So it suffices to show that $[v(x)<v(y)] = [\xi < \eta]$ when $v(x) = v(y)$. If $f(\eta) = f(\mathcal{T}) = 1$, $\mathcal{S} = \mathcal{T}$, violating previous assumption. So we must have $f(\eta) = f(\mathcal{T}) = 0$. Since $f(\xi) = f(\mathcal{S}) = 1$, $\mathcal{T} \leqslant_f \mathcal{S}$. Moreover, by Definition~\ref{def:cluster_relation}, for $z\in \mathcal{T}$, i.e., $z \sim y$, $v(z) \leqslant v(y)$ since there is a solution, in this case $x$, with measure function value $v(y)$. Thus, $v(z) \leqslant v(x)$. Note that because $f(\mathcal{S}) = 1$, all elements from $\mathcal{S}$ can only have the same $v$ value by definition, in this case, $v(x)$. Thus, $\mathcal{T} \leqslant_v \mathcal{S}$. Since $\mathcal{T} \leqslant_f \mathcal{S}$, we must have $\mathcal{T} \leqslant_\sim \mathcal{S}$. Hence, $\eta \leqslant \xi$, i.e., $\xi \geqslant \eta$; so we have: $[v(x) < v(y)] = 0 = [\xi < \eta]$.
\end{proof}



\vspace{10mm}

\bibliography{references.bib}

\end{document}